\documentclass[aps,prx,twocolumn,amsmath,amssymb,superscriptaddress, notitlepage,nofootinbib,10pt]{revtex4-2}
\usepackage{graphicx,amssymb,amsmath,amsfonts,empheq,braket,bm}
\usepackage[dvipsnames]{xcolor}
\usepackage{dsfont}
\usepackage{dcolumn}
\usepackage{textcomp}
\usepackage[normalem]{ulem}
\usepackage{physics}
\usepackage{mathtools}
\usepackage{soul}
\usepackage{enumitem}
\usepackage{xr}
\usepackage{diagbox}
\usepackage{multirow}
\usepackage{tikz}
\usepackage{quantikz}
\usepackage{hyperref}
\usepackage{amsthm}
\usetikzlibrary{external}

\hypersetup{
colorlinks = true,
linkcolor = [rgb]{0,0,0}, 
citecolor = [rgb]{0.13,0.55,0.13},
urlcolor = [rgb]{0.25, 0.41, 0.88}}


\newtheorem{proposition}{Proposition}[section]
\newtheorem{lemma}{Lemma}

\newcommand{\eqnref}[1]{Eq.\,\eqref{#1}}
\newcommand{\figref}[1]{Fig.\,\ref{#1}}

\newcommand{\secref}[1]{Sec.\,\ref{#1}}
\newcommand{\appref}[1]{Appendix.\,\ref{#1}}

\renewcommand{\ket}[1]{| #1 \rangle}
\renewcommand{\bra}[1]{\langle #1 |}

\renewcommand{\leq}{\leqslant}


\newcommand{\bbF}{\mathbb{F}}


\newcommand{\calD}{\mathcal{D}}

\newcommand{\calH}{\mathcal{H}}

\newcommand{\calM}{\mathcal{M}}

\newcommand{\RNum}[1]{\uppercase\expandafter{\romannumeral #1\relax}}

\makeatletter
\renewcommand*\env@matrix[1][*\c@MaxMatrixCols c]{%
  \hskip -\arraycolsep
  \let\@ifnextchar\new@ifnextchar
  \array{#1}}

\newtheorem*{rep@theorem}{\rep@title}
\newcommand{\newreptheorem}[2]{%
\newenvironment{rep#1}[1]{%
 \def\rep@title{#2 \ref{##1}}%
 \begin{rep@theorem}}%
 {\end{rep@theorem}}}
\makeatother

\newreptheorem{theorem}{Theorem}
\newreptheorem{conclusion}{Conclusion}
\newreptheorem{lemma}{Lemma}
\newreptheorem{proposition}{Proposition}

\begin{document}

\title{
\textcolor{black}{
A Universal Circuit Set Using the \texorpdfstring{$S_3$}{S3} Quantum Double}
}

\author{Liyuan Chen}
\email{liyuanchen@fas.harvard.edu}
\affiliation{Department of Physics, Harvard University, Cambridge, Massachusetts 02138, USA}
\affiliation{John A. Paulson School of Engineering and Applied Sciences, Harvard University, Cambridge, Massachusetts 02138, USA}

\author{Yuanjie Ren}%
\email{yuanjie@mit.edu}
\affiliation{Department of Physics, Massachusetts Institute of Technology, Cambridge, Massachusetts 02139, USA
}

\author{Ruihua Fan}
\email{ruihua\_fan@berkeley.edu}
\affiliation{Department of Physics, University of California, Berkeley, CA 94720, USA}

\author{Arthur Jaffe}
\email{arthur\_jaffe@harvard.edu}
\affiliation{Department of Physics, Harvard University, Cambridge, Massachusetts 02138, USA}

\date{\today}

\begin{abstract}
One potential route toward fault-tolerant universal quantum computation is to use non-Abelian topological codes. 
In this work, we investigate how to achieve this goal with the quantum double model $\mathcal{D}(S_3)$---a specific non-Abelian topological code.
By embedding each on-site Hilbert space into a qubit-qutrit pair, we give an explicit construction of the circuits for creating, moving, and locally measuring all non-trivial anyons. 
We also design a specialized anyon interferometer to remotely measure the total charge of well-separated anyons; this avoids fusion, which would compromise fault tolerance. 
These protocols enable the implementation of a universal gate set proposed by Cui et al. and active quantum error correction of the circuit-level noise during the computation process.
To further reduce the error rate and facilitate error correction, we encode each physical degree of freedom of $\mathcal{D}(S_3)$ into a novel, quantum, error-correcting code, enabling fault-tolerant realization, at the logical level, of all gates in the anyon manipulation circuits. 
Our proposal offers a promising path to realize robust universal topological quantum computation in the NISQ era.

\end{abstract}

\maketitle

\tableofcontents

\section{\label{sec:introduction}Introduction}

One essential task of quantum information science is to realize fault-tolerant universal quantum computation~\cite{Shor:1996qc,Shor:1996qc,Gottesman:2009zvw,Terhal:2013vbm,Campbell2017Nature_FTUQC,Nielsen_Chuang_2010}. Topological quantum computation, leveraging the stability of topological orders under local perturbations, is one promising route toward this goal~\cite{Kitaev2003Annals_FT_anyons,freedman2003topological,Nayak2008RMP_TQC}. 
Currently, most of the theoretical and experimental efforts focus on topological stabilizer codes, such as the surface code and color code~\cite{Dennis2002JMP_TC,Fowler2012_SurfaceCode,bombin2015gaugecolor,Kubica2015PRA_3DColorCode,Lukin2021SpinLiquid,Satzinger_2021_toric_code,Bluvstein2022,fossfeig2023experimentaldemonstrationadvantageadaptive,Bluvstein_2023,acharya2024quantumerrorcorrectionsurface,reichardt2024demonstrationquantumcomputationerror,berthusen2024experiments4dsurfacecode}. 
However, to realize a transversal universal gate set within a single topological stabilizer code, additional resources, such as magic-state distillation or extra spatial dimensions, are indispensable~\cite{Eastin_2009,Bravyi2005PRA_MSD,Bravyi2012PRA_MSD_LowOverhead,Campbell2012PRX_MDS_PrimeDim,Haah2018codesprotocols,Hastings2018PRL_MSD_sublogrithmic}. 
More recently, several experimental groups have also reported on realizing non-Abelian topological codes with a high fidelity~\cite{Andersen2023Nature_Google_NonAbelian,Iqbal2024Nature_Quantinuum_non_Abelian,Iqbal2024_Feedforward,Xu2024_Fibonacci,minev2024realizingstringnetcondensationfibonacci,iqbal2024qutrittoriccodeparafermions}.
It is therefore natural to examine how to use non-Abelian topological code for robust universal quantum computation in the noisy intermediate-scale quantum (NISQ) era~\cite{Preskill2018quantumcomputingin}.

There are multiple options, with two well-known examples being the quantum double models $\calD(G)$ for a non-solvable group $G$, and Fibonacci topological orders, both of which enable universal computation through braiding~\cite{Kitaev2003Annals_FT_anyons,Freedman2002_Fibonacci_UQC}. 
However, they struggle to balance computational power with experimental feasibility.
The quantum double model for the minimal non-solvable group $A_5$ requires an impractically large Hilbert space dimension per site (equal to the group order of $A_5$). 
The (double) Fibonacci topological order, realizable by a string-net model based on qubits~\cite{Levin2005PRB_LevinWen,Koenig2010Anns_TVCode,Alexis2022PRX_Threshold_TVCode,Liu_2022_PRXQ_Simulating_String_Net,Xu2024_Fibonacci} involves high-weight syndrome measurement (up to 16) and two non-Clifford gates in the circuits, which makes realization and error control difficult~\cite{Alexis2022PRX_Threshold_TVCode,Xu2024_Fibonacci}.
It is both experimentally and theoretically interesting to identify a minimal setup capable of performing universal quantum computation. 

The quantum double model for the solvable group $S_3$ has been shown to have an exceptional computational power, which has a relatively small local Hilbert space simultaneously. 
It was first shown by Mochon that braiding and fusion of anyons enable a universal gate set, on the local degrees of freedom of fluxons~\cite{Mochon_2004}.
Cui, Hong, and Wang~\cite{Cui2015QIP_UQC_Weakly} extended the encoding to the fusion space of multiple anyons and showed that braiding, together with specific anyon charge measurements forms a universal gate set. This is called the $U$-model. 
Here, we focus on $\calD(S_3)$ and follow the computation scheme introduced by Cui et al. to explore an experimentally feasible realization that is also resilient to noise.

Specifically, in their construction, the logical information is stored non-locally in the degenerate fusion subspace of well-separated non-Abelian anyons, that is robust under local perturbation. 
The computation involves two basic components, moving a single anyon and measuring the total charge of well-separated anyons, both of which are tricky to realize in the presence of errors.
Unlike the Abelian anyons, a pair of well-separated non-Abelian anyons cannot be generated using a constant-depth unitary circuit~\cite{Shi:2018bfb}.
One has to use either a linear-depth unitary circuit or constant-depth adaptive circuit to create and manipulate them~\cite{bravyi2022adaptive,Liu_2022_PRXQ_Simulating_String_Net,lyons2024protocolscreatinganyonsdefects}. 
In the presence of noise, the non-transversal nature of these circuits can propagate a single error along the anyon string, which makes error control difficult. 
Furthermore, the naive anyon charge measurement is performed through fusion, which requires moving anyons close to each other and immediately lowers the resilience against noise.

In this work, we explicitly construct protocols to manipulate anyons that allow us to move anyons and measure anyon charges in a more robust fashion. 
First, we develop an adaptive protocol to coherently move non-Abelian anyons incrementally to avoid error propagation. 
This serves as the basic toolkit for both computation and quantum error correction. 
Second, we borrow ideas from anyon interferometry~\cite{Bonderson_2008_Interferometry} and invent an anyon charge measurement, which, in particular, allows us to realize all measurement protocols in Cui et al.'s computation model.
Importantly, anyons encoding the logical information are kept at a large distance during the entire process, making our method more resilient against noise than the naive fusion.
Furthermore, we provide explicit prescriptions to realize all these operations using quantum circuits for qubits and qutrits. 
Notably, our circuits only involve a single non-Clifford gate and weight-8 syndrome measurements, and no post-selection is required, which makes our scheme accessible to the current experimental platforms.

It is natural to inquire how to perform quantum error correction (QEC), especially at the circuit level, using the toolkits that we have provided. 
We show that we can convert the circuit-level noise into the phenomenological error model where errors are in the form of incoherent anyon pairs. 
Therefore, instead of handling the circuit-level noise directly, it suffices to apply the QEC algorithms at the phenomenological level~\cite{Wotton2014PRX_QEC_S3,Brell_2014_Ising_QEC,Wotton2016PRA_ActiveEC,Hutter_2016_Continuous_Ising, Burton_2017_Fibonacci_QEC,Dauphinais_2017_FT_QEC_non_cyclic,Alexis2022PRX_Threshold_TVCode,schotte2022faulttolerant}.
However, the non-Abelian nature of the error forbids us from correcting them instantaneously, which can lead to an extremely small error threshold~\cite{Wotton2016PRA_ActiveEC,Hutter_2016_Continuous_Ising,Dauphinais_2017_FT_QEC_non_cyclic,schotte2022faulttolerant}.
To circumvent this problem, we propose a concatenation architecture, in which we replace each local degree of freedom in the quantum double model with a small quantum error correcting code.
We show that all the gates used in the circuit have a fault-tolerant realization at the logical level of these local codes, which can help suppress the effective error rate efficiently.

To summarize, in this paper, we present a comprehensive blueprint for implementing fault-tolerant universal quantum computation using the quantum double model $\mathcal{D}(S_3)$. While its advantages over existing protocols may not yet be immediately apparent, we argue that $\mathcal{D}(S_3)$ represents a promising candidate for realizing large-scale quantum computation, for the following reasons:
\begin{itemize}[leftmargin=1.5em, itemsep=0em]
    \item First, the intrinsic universal computational power of $\mathcal{D}(S_3)$ eliminates the need for resource-intensive techniques such as magic-state distillation or the additional spatial dimensions required by protocols based on stabilizer codes. 
    \item Second, its straightforward circuit implementation ensures experimental feasibility and supports effective error correction. In particular, the concatenation scheme enabled by our novel error-correcting codes enhances the ability to suppress physical error rates below the threshold of $\mathcal{D}(S_3)$, offering an advantage over other approaches, such as those based on the Fibonacci code. 
    \item Finally, the potential for fault-tolerant constant-depth realizations of $\mathcal{D}(S_3)$ could significantly reduce the overhead typically associated with non-Abelian topological orders.
\end{itemize}
This work serves as a foundation for future exploration of these directions, which are crucial for demonstrating the advantages of topological quantum computation.

The remainder of this paper consists of five sections. 
\secref{sec:preliminaries} reviews the quantum double model. 
\secref{sec:UQC} presents our protocols for the manipulation and measurement of anyons in $\mathcal{D}(S_3)$ that are necessary for the universal gate set in the $U$-model. 
In \secref{sec:quantum_double_circuit}, we discuss the circuit realization for the basic operations of computation---ribbon operators and local anyon type measurements. 
\secref{sec:S3_QEC} gives remark on the quantum error correction. We explain how to incorporate active error correction for the circuit-level noise into computation and introduce a concatenation scheme for further suppressing the physical error rate. 
In \secref{sec:SummaryOutlook}, we conclude with a brief summary and an outlook.

\section{Review of quantum double models}
\label{sec:preliminaries}

The quantum double model is an exactly solvable lattice model that is designed based on the Drinfeld quantum double of a finite group $G$~\cite{Kitaev2003Annals_FT_anyons}. 
We abuse the notation and denote both the quantum model and the Drinfeld double by $\mathcal{D}(G)$. 
In this section, we review the Hamiltonian and anyonic content of $\mathcal{D}(G)$. 
We also provide a review of the Drinfeld quantum double algebra $\mathcal{D}(G)$ in \appref{appendix:Drinfeld_quantum_double} for the mathematically inclined reader.

For a finite group $G$, we denote a group element by $g \in G$ and its inverse by $\bar{g}$. 
The irreducible representations (irreps) of the group $G$ and the corresponding Drinfeld double $\mathcal{D}(G)$ are labeled by Greek letters $\alpha,\beta,\gamma,\dots$. 
For any $g\in G$, we label its representation matrix in an irrep $\alpha$ by $\Gamma^\alpha(g)$ and the character by $\chi_\alpha(g)$. The fusion rules for irreps are given by
\begin{equation} \label{eqn:fusion_irreps}
    \alpha \times \beta = \sum_\gamma N_{\alpha\beta}^\gamma \gamma\;,
\end{equation}
where $N_{\alpha\beta}^\gamma$ is the multiplicity of the irrep $\gamma$. 

\begin{figure*}[tb]
    \centering
    \includegraphics[width = \textwidth]{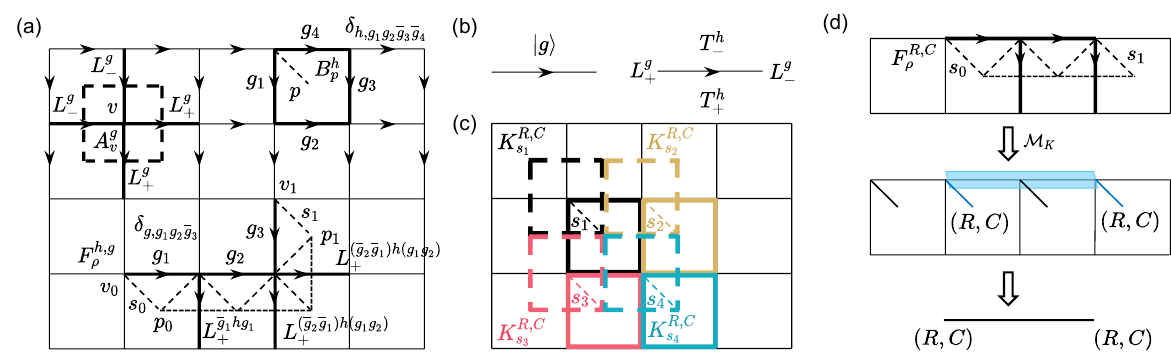}
    \caption{Operators of the quantum double model $\mathcal{D}(G)$. (a) The vertex operator $A^g_v$, plaquette operator $B^h_p$, and ribbon operator $F^{h,g}_\rho$ from $s_0 = (v_0,p_0)$ to $s_1 = (v_1,p_1)$, defined on a directed square lattice with horizontal/vertical edges pointing rightward/downward. (b) (Left) The orthonormal basis $\{\ket{g}\}$ of the Hilbert space on each directed edge, and (Right) the action of $L$ and $T$ operators on each edge. (c) The neighboring anyon type measurement operators $K^{R,C}_s$ for $s=s_1,\cdots,s_4$. (d) (Upper) The exicted state $F^{R,C}_\rho\ket{\Omega}$. (Middle) The configuration resulting from anyon type measurement $\mathcal{M}_K$, with blue lines indicating the presence of $(R,C)$ anyons at the ends of $\rho$ (light blue band). (Lower) The simplified representation using lines with anyon labels $(R,C)$ at their ends, omitting the square lattice.}
    \label{fig:ops_quantum_double}
\end{figure*}
 
Based on the group $G$, we define the quantum double model $\mathcal{D}(G)$ on a two-dimensional squared lattice as follows. 
We associate each edge with a $|G|$-dimensional Hilbert space $\mathcal{H}_G = \mathbb{C}G$ that has a orthonormal basis $\{\ket{g}:g\in G\}$ labeled by the group elements. 
As we will see later, it is generally convenient to consider directed edges to describe how the operator acts on the basis state, as shown in \figref{fig:ops_quantum_double}(a). Our convention is that, all horizontal and vertical edges point rightward and downward, respectively.
Specifically, there are four basic linear operators acting on $\mathcal{H}_G$
\begin{equation} \label{eqn:LT_ops}
\begin{aligned}
    & L^g_+ \ket{m} = \ket{gm}\;, \quad L^g_- \ket{m} = \ket{m\bar{g}}\;,\\
    & T^h_+\ket{m} = \delta_{h,m}\ket{m}\;,\quad T^h_-\ket{m} = \delta_{\bar{h},m}\ket{m}\;.
\end{aligned}
\end{equation}
Namely, $L^g$ does group multiplication and $T^h$ is the projection onto a specific basis state.
Pictorially, $L^g$ takes the $+$ ($-$) sign if it acts from the starting (ending) point of an edge $e$, while $T^h$ takes the $+$ ($-$) sign if it acts from the right (left) hand side of $e$. 
The right panel of \figref{fig:ops_quantum_double}(b) illustrates the conventions for the actions of $L_{\pm}^g$ and $T_{\pm}^h$ on a horizontal edge; the corresponding conventions for a vertical edge can be inferred by rotating the diagram clockwise by $90^\circ$. For a formal definition grounded in the language of direct and dual triangles, we refer readers to Appendix~\ref{appendix:ribbon_algebra} and to Ref.~\cite{Bombin2008PRB_NonAbelian}.

Based on these $L$ and $T$ operators, one can define the vertex operator $A^g_v$ and plaquette operator $B^h_p$, for any vertex $v$ and plaquette $p$, respectively, as illustrated in \figref{fig:ops_quantum_double}(a). 
In our choice of edge directions, these operators are defined as follows:
\begin{equation} \label{eqn:A_B_ops_DG}
\begin{aligned}
    A^{g}_v =\ & \begin{tikzpicture}[baseline = {(current bounding box.center)}]
        \draw (0,0) -- (0.8,0);
        \draw (0.4,0) node[inner sep=0.5pt, fill=white] {\scriptsize $L^g_-$};
        \draw (0.8,-0.8) -- (0.8,0);
        \draw (0.8,-0.4) node[inner sep=0.5pt, fill=white] {\scriptsize $L^g_+$};
        \draw (0.8,0) -- (1.6,0);
        \draw (1.2,0) node[inner sep=0.5pt, fill=white] {\scriptsize $L^g_+$};
        \draw (0.8,0) -- (0.8,0.8);
        \draw (0.8,0.4) node[inner sep=0.5pt, fill=white] {\scriptsize $L^g_-$};
    \end{tikzpicture}\;,\\
    B_p^{h} =\ & \sum_{\{g_i\}}\delta_{h,g_1g_2\bar{g}_3\bar{g}_4} \ket{\begin{tikzpicture}[baseline={(current bounding box.center)}]
        \draw (0,0) -- (0,0.6);
        \draw (0,0.3) node[inner sep=0.5pt, fill=white] {\scriptsize $g_1$};
        \draw (0,0.6) -- (0.6,0.6);
        \draw (0.3,0.6) node[inner sep=0.5pt, fill=white] {\scriptsize $g_4$};
        \draw (0.6,0.6) -- (0.6,0);
        \draw (0.6,0.3) node[inner sep=0.5pt, fill=white] {\scriptsize $g_3$};
        \draw (0,0) -- (0.6,0);
        \draw (0.3,0) node[inner sep=0.5pt, fill=white] {\scriptsize $g_2$};
    \end{tikzpicture}}
    \bra{\begin{tikzpicture}[baseline={(current bounding box.center)}]
        \draw (0,0) -- (0,0.6);
        \draw (0,0.3) node[inner sep=0.5pt, fill=white] {\scriptsize $g_1$};
        \draw (0,0.6) -- (0.6,0.6);
        \draw (0.3,0.6) node[inner sep=0.5pt, fill=white] {\scriptsize $g_4$};
        \draw (0.6,0.6) -- (0.6,0);
        \draw (0.6,0.3) node[inner sep=0.5pt, fill=white] {\scriptsize $g_3$};
        \draw (0,0) -- (0.6,0);
        \draw (0.3,0) node[inner sep=0.5pt, fill=white] {\scriptsize $g_2$};
    \end{tikzpicture}}\;.
\end{aligned}
\end{equation}
We define a site by a tuple $s = (v,p)$ consisting of a plaquette $p$ and its northwest vertex $v$, as illustrated by the thin dashed lines in \figref{fig:ops_quantum_double}(a). 
At different sites, $A$ and $B$ operators  commute. 
At the same site $s=(v,p)$, they satisfy the so-called Drinfeld double algebra (see Appendix~\ref{appendix:Drinfeld_quantum_double})
\begin{equation}
\begin{gathered}
    A^g_vA^{g^\prime}_v = A^{gg^\prime}_v\;,\quad B^{h}_pB^{h^\prime}_p = \delta_{h,h^\prime} B^h_p\;,\\
    A^g_vB^h_p = B^{gh\bar{g}}_p A^g_v\;.
\end{gathered}
\end{equation}
Therefore, the local Hilbert space at each site $s$ can be decomposed into a direct sum of different irreps of the Drinfeld double $\calD(G)$. Each irrep of $\calD(G)$ is labeled by a tuple $(R,C)$ where $C$ represents a conjugacy class of $G$, and $R$ is an irrep of the centralizer $Z(C)$ of $C$ (see Appendix~\ref{appendix:quantum_double}). The dimension of the irrep $(R,C)$ is $d_{R,C} = |R||C|$, where $|R|$ is the dimension of $R$ and $|C|$ is the number of elements in $C$.

Physically, $A^g_v$ implements local gauge transformations and $B^h_p$ represents the magnetic charge operators in the language of lattice gauge theory. 
Accordingly, we can define the projectors onto the zero charge and flux subspaces as follows:
\begin{equation} \label{eqn:A_B_commuting_projectors}
    A_v = \frac{1}{|G|}\sum_{g\in G}A^g_v,\quad B_p = B^e_p\;.
\end{equation}
The Hamiltonian of $\mathcal{D}(G)$ is the summation of these commuting projectors on each site, defined by
\begin{equation} \label{eqn:Hamiltonian_quantum_double}
    H_{G} = -\sum_{v}A_v - \sum_{p} B_p\;.
\end{equation}
On a planar lattice, the model has a unique ground state $\ket{\Omega}$ satisfying $A_v\ket{\Omega} = B_p\ket{\Omega} = \ket{\Omega}$ indicating zero charge and no flux everywhere.

The quantum double Hamiltonian \eqnref{eqn:Hamiltonian_quantum_double} has anyonic excitations that are in one-to-one correspondence with the irrep of Drinfeld double.
Given a ribbon $\rho$ that connects the starting site $s_0$ and the ending site $s_1$ as depicted in \figref{fig:ops_quantum_double}(a), we can create a pair of $(R,C)$ anyons localized at the two end sites by applying a set of ribbon operators $F^{R,C}_{\rho}$ to the ground state~\cite{Bombin2008PRB_NonAbelian}:
\begin{equation} \label{eqn:F_RC}
    F^{R,C;u,v}_{\rho} = \frac{|R|}{|Z(C)|}\sum_{n\in Z(C)} \Gamma^{R}_{jj^\prime}(n)F^{c,\tau_c n \bar{\tau}_{c^\prime}}_\rho\;.
\end{equation}
On the right-hand side, $\Gamma^{R}_{jj^\prime}(n)$ is the matrix of the irrep $R$, $\tau_c$ is the group element satisfying $\tau_c g_c \bar{\tau}_c = c$ with $g_c$ being a representative of $C$, and $F^{h,g}_\rho$ with $h,g \in G$ is an operator defined on the same support $\rho$, whose action is detailed in Appendix~\ref{appendix:ribbon_algebra}. 
On the left-hand side, $u = (c,j)$ and $v=(c^\prime,j^\prime)$ for $c,c^\prime \in C$ are the local degrees of freedom of $(R,C)$ at $s_0$ and $s_1$.
The quantum dimension of the anyon is given by the dimension of the irrep $(R,C)$.
When $(R,C)$ has a dimension greater than 1, the corresponding anyon is a non-Abelian anyon. The fusion of various non-Abelian anyons has multiple outcomes, which is a defining feature of non-Abelian anyons.

The anyon types can be measured locally using a set of closed ribbon operators. For a closed ribbon $\sigma$ that starts and ends at the same site $s$, the operator that measures the anyon type $(R,C)$ inside $\sigma$ is defined as:
\begin{equation} \label{eqn:K_RC}
    K^{R,C}_{\sigma} = \frac{|R|}{|Z(C)|} \sum_{n\in Z(C),c\in C} \bar{\chi}_R(n) F^{\tau_c n \bar{\tau}_c,c}_\sigma\;,
\end{equation}
where $\chi_R$ is the character of $R$. The set of $K^{R,C}_\sigma$ operators satisfy the following conditions \cite{Bombin2008PRB_NonAbelian,Beigi2011CMP_QDS3,bravyi2022adaptive,Cui2018TopologicalQC}:
\begin{equation}
\begin{aligned}
    K^{R_1,C_1}_\sigma K^{R_2,C_2}_\sigma &= \delta_{R_1,R_2}\delta_{C_1,C_2}K^{R_1,C_1}_\sigma\;, \\
    \sum_{(R,C) \in \mathrm{Irr}(\mathcal{D}(G))} K^{R,C}_\sigma &= 1\;,
\end{aligned}
\end{equation}
indicating that $K^{R,C}_\sigma$'s form a set of orthogonal projective measurements.
As a special case, we can define a single-site measurement at $s = (v,p)$ using the smallest closed ribbons $A^g_v$ and $B^h_p$ 
as~\cite{Bombin2008PRB_NonAbelian}
\begin{equation} \label{eqn:K_RC_single}
    K^{R,C}_{s} = \frac{|R|}{|Z(C)|} \sum_{n\in Z(C),c\in C} \bar{\chi}_R(n)A^{\tau_c n \bar{\tau}_c}_v B^c_p\;.
\end{equation}
For example, $K^{R,C}_{s}$ for the trivial irrep and conjugacy class $(R,C) = ([+],\{e\})$ is reduced to
\begin{equation}
    K^{[+],\{e\}}_s = A_v B_p\;,
\end{equation}
i.e., the projector onto the space of no charge and flux.
Notably, 
\begin{equation}
    [K^{R,C}_{s},K^{R,C}_{s^\prime}] = 0
\end{equation} 
for $s\neq s^\prime$. 
Therefore, by selecting all the northwest sites $\{s\}$, we can define a commuting set of measurement, as illustrated in \figref{fig:ops_quantum_double}(c). 
We define 
\begin{equation}
\begin{gathered}
    \mathcal{M}_K \equiv \{K^{R,C}_{s}\}, \\ \forall (R,C) \in \mathrm{Irr}(\calD(G)),\  \forall s = (v,p)
\end{gathered}
\end{equation}
as the set of such measurements covering every site of the lattice. 
In a planar lattice, such a projection is surjective, with degeneracies arising from non-Abelian anyons' local degrees of freedom. 
The measurement $\mathcal{M}_K$ will project a generic state onto a configuration of anyons, with an anyon label on each site. 
We refer to it as the anyon configuration picture. 
Consider an example as shown in \figref{fig:ops_quantum_double}(d). Given an excited state $F^{R,C}_\rho\ket{\Omega}$, performing $\mathcal{M}_K$ yields a configuration shown in the middle panel. The light blue band represents the ribbon $\rho$, and the blue lines at its two ends, $s_0$ and $s_1$, indicate the presence of a pair of $(R,C)$ anyons. We can also simply use a line with anyon labels at its ends to denote such a configuration, as shown in the lower panel. Each anyon configuration has a one-to-one correspondence with an anyon fusion tree. After performing $\calM_K$, we can always consider the subsequent operations by focusing on the corresponding anyon fusion tree without worrying about the details of the lattice model.

The basic topological operations of anyons are braiding and fusion. The braiding is the exchange of two anyons, resulting in a phase factor depending on the anyon types. The fusion is realized by bringing two anyons to a single site and measuring their total charge by $\mathcal{M}_K$. In the lattice model, we can use $F^{R,C}_\rho$ and $\mathcal{M}_K$ to realize these operations. 

We conclude this section with a concrete example $\calD(S_3)$, the quantum double model based on the permutation group of three elements. This group has the isomorphism $S_3 \cong \mathbb{Z}_3 \rtimes \mathbb{Z}_2$, with two generators $\mu \in \mathbb{Z}_3$ and $\sigma \in \mathbb{Z}_2$ satisfying the semidirect product relation $\sigma \mu \sigma = \bar{\mu}$. 
It has two 1-dimensional irreps $[+]$ and $[-]$ with:
\begin{equation}
    \Gamma^{[\pm]}(\mu) = 1\;,\quad \Gamma^{[\pm]}(\sigma) = \pm 1\;,
\end{equation}
and a 2-dimensional irrep $[2]$ with:
\begin{equation}
\begin{gathered}
    \Gamma^{[2]}(\mu) = \begin{pmatrix}
        \omega & 0 \\
        0 & \bar{\omega}
    \end{pmatrix}\;,\quad \Gamma^{[2]}(\sigma) = \begin{pmatrix}
        0 & 1 \\
        1 & 0
    \end{pmatrix}\;, \\
    \Gamma^{[2]}(\bar{\mu}) = \Gamma^{[2]}(\sigma)\Gamma^{[2]}(\mu)\Gamma^{[2]}(\sigma) = \begin{pmatrix}
        \bar{\omega} & 0 \\
        0 & \omega
    \end{pmatrix}\;.
\end{gathered}
\end{equation}
Here $\omega = \exp(i2\pi/3)$ denotes the third root of unity. 
They satisfy the following fusion rules
\begin{equation}
\begin{aligned}
    &[-]\times [-] = [+]\;,\quad [-]\times [2] = [2]\;,\\
    &[2]\times [2] = [+]+[-]+[2]\;,
\end{aligned}
\end{equation}
where the multiple fusion results in the last rule are a common feature in the fusion of high-dimensional (dimension $> 1$) irreps. 

The group $S_3$ has three conjugacy classes $C_1 = \{e\}, C_2 = \{\sigma,\mu\sigma,\mu^2\sigma\}$ and $C_3 = \{\mu,\mu^2\}$, where centralizers are $Z(C_1) = S_3, Z(C_2) = \{e,\sigma\}\cong \mathbb{Z}_2$, and $Z(C_3) = \{e,\mu,\mu^2\} \cong \mathbb{Z}_3$. Consequently, there are eight types of anyons in $\mathcal{D}(S_3)$, denoted by $A$ through $H$, as listed in Table~\ref{tab:conjugacy_irreps}. The $A$ and $B$ anyons are Abelian, while the others are non-Abelian anyons. These anyons have the fusion rules as listed in Table~\ref{tab:fusion_rules}. 

\begin{table}[h]
\centering
\begin{tabular}{|c||c@{}c@{}c|c@{}c|c@{}c@{}c|}
\hline
  \textbf{anyon type} & \textbf{A} & \textbf{B} & \textbf{C} & \textbf{D} & \textbf{E} & \textbf{F} & \textbf{G} & \textbf{H} \\ 
\hline
$\bm{C}$ & \(C_1\) & \(C_1\) & \(C_1\) & \(C_2\) & \(C_2\) & \(C_3\) & \(C_3\) & \(C_3\) \\ 
\hline
$\bm{R}$ & \([+]\) & \([-]\) & \([2]\) & \([+]\) & \([-]\) & \([1]\) & \([\omega]\) & \([\bar{\omega}]\) \\ 
\hline
$\bm{d_{R,C}}$ & \(1\) & \(1\) & \(2\) & \(3\) & \(3\) & \(2\) & \(2\) & \(2\) \\
\hline
\end{tabular}
\caption{Conjugacy classes \( C \), irreducible representations \( R \) of the corresponding centralizer subgroups \( Z(C) \), and the associated quantum dimensions \( d_{R,C} \) for anyons in the quantum double model \( \mathcal{D}(S_3) \).}

\label{tab:conjugacy_irreps}
\end{table}

\begin{table*}[tb]
\centering 
\begin{tabular}{|c||c@{\hspace{10pt}}c@{\hspace{10pt}}c|c@{\hspace{10pt}}c|c@{\hspace{10pt}}c@{\hspace{10pt}}c|}
\hline
\(\times\) & \textbf{A} & \textbf{B} & \textbf{C} & \textbf{D} & \textbf{E} & \textbf{F} & \textbf{G} & \textbf{H} \\ 
\hline
\textbf{A} & A & B & C & D & E & F & G & H \\ 
\textbf{B} & B & A & C & E & D & F & G & H \\ 

\textbf{C} & C & C & A \(+ \) B \(+ \) C & D \(+ \) E & D \(+ \) E & G \(+ \) H & F \(+ \) H & F \(+ \) G \\ 
\hline
\textbf{D} & D & E & D \(+ \) E & A \(+ \) C \(+ \) F \(+ \) G \(+ \) H & B \(+ \) C \(+ \) F \(+ \) G \(+ \) H & D \(+ \) E & D \(+ \) E & D \(+ \) E \\ 
\textbf{E} & E & D & D \(+ \) E & B \(+ \) C \(+ \) F \(+ \) G \(+ \) H & A \(+ \) C \(+ \) F \(+ \) G \(+ \) H & D \(+ \) E & D \(+ \) E & D \(+ \) E \\ 
\hline
\textbf{F} & F & F & G \(+ \) H & D \(+ \) E & D \(+ \) E & A \(+ \) B \(+ \) F & H \(+ \) C & G \(+ \) C \\ 
\textbf{G} & G & G & F \(+ \) H & D \(+ \) E & D \(+ \) E & H \(+ \) C & A \(+ \) B \(+ \) G & F \(+ \) C \\ 
\textbf{H} & H & H & F \(+ \) G & D \(+ \) E & D \(+ \) E & G \(+ \) C & F \(+ \) C & A \(+ \) B \(+ \) H \\ 
\hline
\end{tabular}
\caption{Fusion rules for anyons in the quantum double model \(\mathcal{D}(S_3)\).}
\label{tab:fusion_rules}
\end{table*}

\section{\label{sec:UQC} Implementing basic gadgets for universal computation with \texorpdfstring{$\mathcal{D}(S_3)$}{D(S3)}}

In the quantum double model, excitations created by the ribbon operators are characterized by local degrees of freedom and anyon types. While the former are susceptible to local noise, the latter are more robust~\cite{Bombin2008PRB_NonAbelian}.
Cui et al. proposed a so-called $U$-model that employs the logical encoding based on the anyon types and enables a universal gate set~\cite{Cui2015QIP_UQC_Weakly}. 
In this section, we review the basic gadgets of the $U$-model and discuss how to implement them in a potentially fault-tolerant manner.

\subsection{Review of the \texorpdfstring{$U$}{U}-model}

Consider the fusion of four $D$ anyons into a single $G$ anyon. This process has a nine-dimensional fusion space $V_G^{DDDD}$ with one complete set of basis states shown by the following fusion tree
\begin{equation} \label{eqn:state_xy_U_model}
    \ket{xy} = 
    \begin{tikzpicture}[baseline = {(current bounding box.center)}, scale = 0.7]
        \draw (0,0) -- (0,0.5);
        \draw (0,0.5) -- (-0.6,1.1);
        \draw (-0.6,1.1) -- (-1,1.5);
        \draw (-0.6,1.1) -- (-0.2,1.5);
        \draw (0,0.5) -- (0.6,1.1);
        \draw (0.6,1.1) -- (1,1.5);
        \draw (0.6,1.1) -- (0.2,1.5);
        \draw (-0.5,0.7) node {\scriptsize$x$};
        \draw (0.5,0.7) node {\scriptsize$y$};
        \draw (0.2,0) node {\scriptsize$G$};
        \draw (-1,1.75) node {\scriptsize$D$};
        \draw (-0.3,1.75) node {\scriptsize$D$};
        \draw (0.3,1.75) node {\scriptsize$D$};
        \draw (1,1.75) node {\scriptsize$D$};
    \end{tikzpicture}\;.
\end{equation}
Physically, we can prepare each basis state by creating the ($D$ and $G$) anyons at different locations on the lattice with the designated total charges $x$ and $y$.
Owing to their spatial separation, the fusion space is robust against any local process and thus can serve as a code space.
Concretely speaking, the information stored in $\ket{xy}$ is only affected by operators with support sizes larger than the spatial separation of that anyons~\cite {Kitaev2003Annals_FT_anyons,Bombin2008PRB_NonAbelian}.
We organize these basis states into two distinct groups, forming two orthogonal subspaces defined as follows
\begin{equation}
\begin{aligned}
U =& \{\ket{AG},\ket{GG},\ket{GA}\} \\
U^\perp = & \{\ket{FC},\ket{CF},\ket{FH},\ket{HF},\ket{CH},\ket{HC}\}
\end{aligned}
\end{equation}
As pointed out in \cite{Cui2015QIP_UQC_Weakly}, we can use the three-dimensional subspace $U$ to encode a logical qutrit and realize a universal gate set, which explains the term ``$U$-model" for this construction.

More specifically, as stated in the Theorem~1 of \cite{Cui2015QIP_UQC_Weakly}, the universal gate set consists of: (1) single-qutrit classical gates, (2) a generalized qutrit Hadamard gate $h = (1/\sqrt{3})\sum_{i,j=1}^3\omega^{ij}\ket{i}\bra{j}$, (3) the measurement to the qutrit $\ket{0}$ and $\ket{0}^\perp = \mathrm{span}\left\{\ket{1},\ket{2}\right\}$ subspaces, (4) and a two-qutrit generalized CNOT gate defined by $\ket{i,j}\mapsto \ket{i,i+j\mathrm{\ mod\ }3}$. We review the necessary fusion tree operations that enable the implementation of this universal gate set.

The single-qutrit gates (1) and (2) can be implemented through braiding of $D$ anyons.
However, braiding may temporarily move the logical state $\ket{xy}$ out of the computational subspace $U$,
necessitating a measurement that projects the system either back into $U$ or onto its orthogonal complement $U^\perp$ in a coherent manner.
To realize the measurement (3) distinguishing between $\ket{0}$ and $\ket{0}^\perp$ within $U$, we require an additional anyon charge measurement that detects whether the left pair of $D$ anyons have a trivial total charge $(A)$ or not $(A' = G)$.
We denote these two additional measurements by $\mathcal{M}_{U} = \{\Pi_U,\Pi_{U^\perp}\}$ and $\mathcal{M}_{A} = \{\Pi_{A},\Pi_{A^\prime}\}$ respectively, where $\Pi_{\bullet}$ represents the projector onto the corresponding subspace.
The protection of the code space stems from the spatial separation between the anyons, and any reliable implementation of these operations must at least preserve this separation throughout the process, which we discuss in \secref{sec:adaptive_movement} and \ref{sec:fault_tolerant_meas}.

The two-qutrit generalized CNOT gate (4) can be constructed by composing the Hadamard gate with a two-qutrit controlled-$Z$ gate.
Specifically, consider two logical qutrits, we need to braid $D$ anyons in the following basis
\begin{equation*}
    \begin{tikzpicture}[baseline = {(current bounding box.center)}, scale = 0.8]
        \draw (0.9,-0.4) -- (0,0.5);
        \draw (0,0.5) -- (-0.4,0.9);
        \draw (-0.4,0.9) -- (-0.6,1.1);
        \draw (-0.4,0.9) -- (-0.2,1.1);
        \draw (0,0.5) -- (0.4,0.9);
        \draw (0.4,0.9) -- (0.6,1.1);
        \draw (0.4,0.9) -- (0.2,1.1);
        \draw (0.2,-0.1) node {\scriptsize $G$};
        \draw (-0.6,1.3) node {\scriptsize $D$};
        \draw (-0.2,1.3) node {\scriptsize $D$};
        \draw (0.2,1.3) node {\scriptsize $D$};
        \draw (0.6,1.3) node {\scriptsize $D$};
        \draw (0.9,-0.9) -- (0.9,-0.4);
        \draw (0.9,-0.4) -- (1.8,0.5);
        \draw (1.8,0.5) -- (1.4,0.9);
        \draw (1.4,0.9) -- (1.2,1.1);
        \draw (1.4,0.9) -- (1.6,1.1);
        \draw (1.8,0.5) -- (2.2,0.9);
        \draw (2.2,0.9) -- (2.0,1.1);
        \draw (2.2,0.9) -- (2.4,1.1);
        \draw (1.6,-0.1) node {\scriptsize $G$};
        \draw (1.2,1.3) node {\scriptsize $D$};
        \draw (1.6,1.3) node {\scriptsize $D$};
        \draw (2.0,1.3) node {\scriptsize $D$};
        \draw (2.4,1.3) node {\scriptsize $D$};
        \draw (1.2, -0.9) node {\scriptsize $G$};
    \end{tikzpicture}\;,
\end{equation*}
where the two logical qutrits have a definite total anyon charge $G$.
Since each logical qutrit is prepared independently, we must be able to merge them into this larger fusion tree.
When there are more than two logical qutrits, we often need to apply the controlled-Z gate to any pair of them.
Therefore we also have to be able to split this larger fusion tree into small ones.
In \secref{sec:multi_qutrit_gate}, we present explicit algorithms for merging two logical qutrits and splitting this larger fusion tree.

\begin{figure*}
    \centering
    \includegraphics[width = \textwidth]{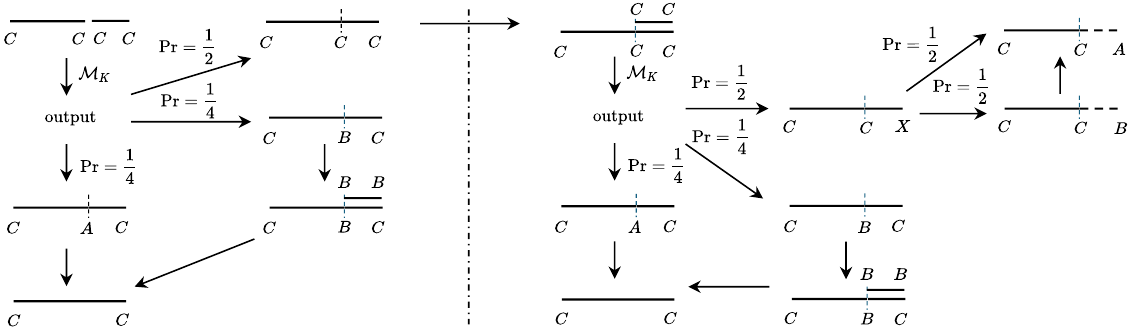}
    \caption{Protocol for moving a type $C$ (non-Abelian) anyon one site to the right.}
    \label{fig:anyon_movement}
\end{figure*}

\subsection{Braiding via adaptively moving anyons}
\label{sec:adaptive_movement}

When braiding two anyons, it is straightforward to maintain their large separation. 
The major source of errors is the process of moving a non-Abelian anyon itself.
Naively, to move an anyon $\alpha$ from the position $x$ to $y$, we first create a pair of $\alpha$ and its anti-particle $\bar{\alpha}$ at $y$ and $x$ respectively, and then fuse the two anyons at $x$ into the vacuum, as is depicted below
$$
\begin{tikzpicture}
\filldraw[black] (2,0) circle (0.05);
\filldraw[black] (2,0.5) circle (0.05);
\filldraw[black] (6,0.5) circle (0.05);
\draw[thick] (1,0) node[left]{$\cdots$} -- (2,0) node[below]{\scriptsize $\alpha$};
\draw[thick] (2,0.5) node[above]{\scriptsize $\bar\alpha$} -- (6,0.5) node[above]{\scriptsize $\alpha$};
\draw[thick] (2,0.25) ellipse (0.35 and 0.8);
\end{tikzpicture}
$$
For a non-Abelian $\alpha$, this procedure faces two obstacles.
First, the fusion process may fail due to multiple possible outcomes.
Quantitatively, the probability of fusing anyons $\alpha$ and $\beta$ into $\gamma$ is given by~\cite{Preskill2004LectureNF}:
\begin{equation} \label{eqn:fusion_prob}
    \mathrm{Pr}(\alpha\times \beta \to \gamma) = \frac{N_{\alpha\beta}^\gamma d_{\gamma}}{d_\alpha d_\beta}\;,
\end{equation}
where $N^{\gamma}_{\alpha\beta}$ is the fusion multiplicity, and $d_{\alpha}$, $d_\beta$, $d_\gamma$ are the anyon quantum dimensions.
In the case of $\mathcal{D}(S_3)$, the relevant quantum dimensions $d_{R,C}$ are listed in Table~\ref{tab:conjugacy_irreps}.
Therefore, this method succeeds only with a finite probability even in the absence of errors.
Second, creating a pair of well-separated non-Abelian anyons requires either a linear-depth unitary circuit or a constant-depth adaptive circuit. 
Consequently, any local error in the early stage of the circuit will propagate to the entire anyon string, rendering the error control impossible.

To suppress the error propagation, we move the non-Abelian anyon incrementally, step by step, rather than over a long distance in a single shot.
Specifically, to move an $\alpha$ anyon by one step, we first create an $\alpha-\bar{\alpha}$ pair using a short anyon string and then fuse $\bar{\alpha}$ with the original $\alpha$ anyon, aiming to produce a vacuum.
The fusion succeeds only with a finite probability, and we must repeat the process adaptively until we get the desired outcome. We will show in concrete examples that the success probability converges to 1 exponentially fast with the number of repetitions.
After finishing each step, we proceed to the next until moving $\alpha$ to its final destination.
In this incremental movement protocol, any local error remains confined to where we apply the operators, which makes the quantum error correction possible, as detailed in \secref{sec:S3_QEC}.

In the context of the quantum double model, we move the non-Abelian anyons by one lattice site in each step.
Namely, we only apply the shortest ribbon operator in the horizontal or vertical direction:
\begin{equation}
\begin{aligned}
    \label{eqn:shortest_F} F^{R,C;u,v}_{\rho_h} &= \frac{|R|}{|Z(C)|} \sum_{n\in Z(C)} \Gamma^R_{jj^\prime}(n) (T^{\tau_c n \bar{\tau}_{c^\prime}}_{+})_1 (L^{c^\prime}_{+})_2\;,\\
    F^{R,C;u,v}_{\rho_v} &= \frac{|R|}{|Z(C)|}\sum_{n\in Z(C)} \Gamma^{R}_{jj^\prime}(n) (L^{c}_{+})_1 (T^{\tau_c n\bar{\tau}_{c^\prime}}_{-})_2\;,
\end{aligned}
\end{equation}
where each operator acts on two edges, labeled $1$ and $2$, whose positions are specified by the corresponding ribbons $\rho_h$ and $\rho_v$ as follows: 
\begin{equation*}
    \rho_h = \ \begin{tikzpicture}[baseline={(current bounding box.center)}, scale = 0.85]
    \foreach \i in {0,1}
        {\foreach \j in {0,1}{
            \draw[thick] (\i,\j+1) -- (\i,\j);
            \draw[thick] (\i,\j+1) -- (\i+1,\j+1);
            \draw[thick] (\i,\j) -- (\i+1,\j);
            \draw[thick] (\i+1,\j+1) -- (\i+ 1,\j);
            }}
    \draw (0.5,0.5)--(1,1);
    \draw (0.5,0.5)-- (1.5,0.5);
    \foreach \i in {0,1}{           
        \draw (\i,1) -- (0.5+\i,0.5);
    }
    \node at (0.2,0.5) {\scriptsize $s$};
    \node at (1.6,0.8) {\scriptsize $s^\prime$};
    \node at (0.8,0.3) {\scriptsize $2$};
    \node at (0.47,1.2) {\scriptsize $1$};
    \end{tikzpicture} \;,\quad
    \rho_v = \ \begin{tikzpicture}[baseline={(current bounding box.center)}, scale = 0.85]
    \foreach \i in {0,1}
        {\foreach \j in {0,1}{
            \draw[thick] (\i,\j+1) -- (\i,\j);
            \draw[thick] (\i,\j+1) -- (\i+1,\j+1);
            \draw[thick] (\i,\j) -- (\i+1,\j);
            \draw[thick] (\i+1,\j+1) -- (\i+ 1,\j);
            }}
    \draw (0,1)--(0.5,1.5);
    \draw (0.5,0.5)-- (0.5,1.5);
    \foreach \i in {0,1}{           
        \draw (0,\i+1) -- (0.5,\i+0.5);
    }
    \node at (0.2,0.5) {\scriptsize $s$};
    \node at (0.55,1.8) {\scriptsize $s^\prime$};
    \node at (0.17,1.51) {\scriptsize $2$};
    \node at (0.7,0.75) {\scriptsize $1$};
    \end{tikzpicture}\;,
\end{equation*}
where the triangle markers indicate whether the operators act on the sides or endpoints of the edges. Following the conventions in \figref{fig:ops_quantum_double}(b), one can directly identify the corresponding $L^{g}_{\pm}$ and $T^{h}_{\pm}$ operators appearing in \eqnref{eqn:shortest_F}. For a more detailed and precise definition, we refer the reader to Appendix~\ref{appendix:ribbon_algebra}.

Compared with the general discussion, one difference is that the probability of the fusion outcome does not always follow \eqnref{eqn:fusion_prob}, but depends on the local degrees of freedom.~\footnote{An analogous example is the fusion of two spin-$1/2$ particles. If both spins are in the $\ket{0}$ (spin up) state, the fusion outcome must be a spin-$1$.}
In principle, one can leverage such a dependence to optimize the fusion probability by choosing the proper internal states. However, the local degrees of freedom of non-Abelian anyons are susceptible to local errors, making it impractical to track them in the presence of noise. 
Here, we only use the ribbon operators creating non-Abelian anyons with the maximally mixed local degrees of freedom. 
Let $F^{R,C}_{\rho_h}$ and $F^{R,C}_{\rho_v}$ denote such ribbon operators. Their action on a density matrix $\rho$ is given by the following quantum channel
\begin{equation} 
\label{eqn:maximally_mixed_F_RC}
    F^{R,C}_{\rho_{h/v}}[\rho] = \frac{|G|^2}{|R|^3 |C|^3} \sum_{u,v}F^{R,C;u,v}_{\rho_{h/v}}\rho \left(F^{R,C;u,v}_{\rho_{h/v}}\right)^\dagger\;.
\end{equation}
For such two independent anyons, the probability of different fusion outcomes does follow the general result \eqnref{eqn:fusion_prob} so that we can systematically analyze the success probability.
In practice, if the local noise changes local degrees of freedom, the probability will be modified though we expect our analysis still holds qualitatively in the generic case.

As an illustrative example, let us consider a long type-$C$ ribbon and move the $C$ anyon located at its right end to one site away, as shown in Figure~\ref{fig:anyon_movement}. 
The first step is to apply the shortest horizontal type $C$ ribbon operator and apply a $\mathcal{M}_K$ measurement to fuse the two intermediate $C$ anyons.
In this case, there are three possible fusion outcomes occurring with the following probabilities
\begin{equation} \label{eqn:fusion_prob_C}
    \mathrm{Pr}(C\times C \to X)= \begin{cases}
        1/4,& \mathrm{for}\ X= A,B\\
        1/2, & \mathrm{for}\ X=C
    \end{cases}\;.
\end{equation}
Depending on the outcome, the remaining procedure consists of two subroutines.

If the fusion outcome is the vacuum $A$, we have succeeded. 
If the outcome is the Abelian anyon $B$, we can remove the middle $B$ anyon by applying the shortest horizontal type-$B$ ribbon according to the fusion rules $B\times B = A$ and $B\times C = C$. These are illustrated in the lower part of the left panel.

If the fusion outcome is $C$, we need to proceed to the second subroutine (right panel of \figref{fig:anyon_movement}).  
First, we apply the shortest horizontal $C$ ribbon to try to annihilate the intermediate $C$ anyon.
If the left pair of $C$ anyons fuses to $A$ or $B$, we are back to the scenarios with outcomes $A$ or $B$ in the first subroutine, allowing us to repeat the same procedure as stated earlier. 
If the left $C$ anyon pair fuses to $C$, the right pair of $C$ anyons has two possible outcomes (obtained by applying $F$-move, see Appendix~\ref{appendix:UQC}), indicated by the $X$ anyon. The outcomes may be $X=A$ or $X=B$, each with probability $1/4$, so we return to the start of the first subroutine and restart the entire process.

Non-Abelian anyons of types $F$, $G$, and $H$ have the same fusion rules as $C$, specifically $X\times X = A+B+X$, for $X=F,G$ and $H$. Therefore, we can apply the same protocol to move any of them. 
The protocol for moving $D$ and $E$ anyons is different and is detailed in Appendix~\ref{appendix:movement_D_E}.
By mathematical induction, one can prove that the probability of successfully moving the anyon after $n$ rounds of $\mathcal{M}_K$ measurement is:
\begin{equation*} 
    \mathrm{Pr}(\mathrm{moving}\ X) = \begin{cases}
    1\;, & X=A,B\\
        1-\left(\frac{1}{2}\right)^n\;,& X=C,F,G,H\\
        1-\frac{8}{9}\left(\frac{1}{2}\right)^{n-1}\;, & X=D,E
    \end{cases}\;.
\end{equation*}
In other words, the success probability of this adaptive protocol approaches $1$ exponentially as $n$ increases.

\subsection{Measurements via anyon interferometry}
\label{sec:fault_tolerant_meas}

\begin{figure*}[tb]
    \centering
    \includegraphics[width = \textwidth]{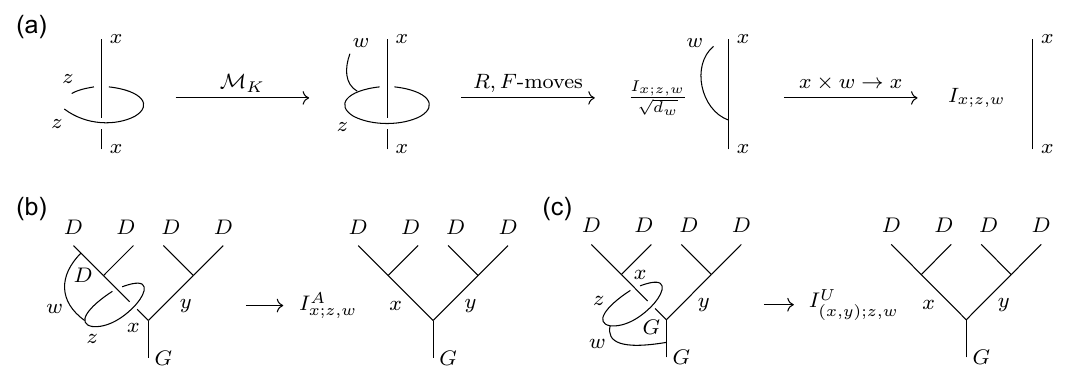}
    \caption{The remote measurement protocol using non-Abelian anyon interferometer. (a) The process of measuring an $x$ anyon: A pair of $z$ anyons is created, with one circling around $x$. The two $z$ anyons are then fused into a $w$ anyon, which is subsequently fused with $x$. The amplitude $I_{x;z,w}$ associated with this process is derived using category theory. The interferometric setups for the measurements $\mathcal{M}_A$ (b) and $\tilde{\mathcal{M}}_U$ (c), with amplitudes $I^{A}_{x;z,w}$ and $I^U_{(x,y);z,w}$, respectively.}
    \label{fig:FT_measurements}
\end{figure*}

A typical method for measuring the total anyon charge is fusion. 
However, this operation requires bringing anyons close to each other, which reduces their separation and the robustness of the code space. 
Therefore, we must realize the two measurements $\mathcal{M}_U$ and $\mathcal{M}_A$ in a remote manner, without using this naive fusion. 
In this section, we explain how to achieve this remote measurement via anyon interferometry.

We illustrate the basic idea by considering the process of measuring a single $x$ anyon (see \figref{fig:FT_measurements}(a)).
We create a pair of $z$ anyons from the vacuum and braid one of them around the $x$ anyon. We then fuse the two $z$ anyons and get the fusion outcome, e.g., a $w$ anyon, with a probability $|I_{x;z,w}|^2$ determined via a sequence of $R$-moves and $F$-moves (see Appendix~\ref{appendix:Anyon_Interferometry}). 
Next, we fuse the $w$ anyon with the $x$ anyon to return to the original fusion tree. The net effect of these operations is a diagonal gate on the anyonic state.
Moreover, if the measured state is in a linear superposition of anyonic states $\ket{\psi} = \sum_{x} c_x \ket{x}$, braiding and fusing the $z$ anyon yields a $w$ with the probability $\sum_x |c_x I_{x;z,w}|^2$ and the final state (after fusing $w$ with the original anyon)
\begin{equation} \label{eqn:measure_state_psi}
    \mathcal{M}_z (\ket{\psi}) \xrightarrow{w} \frac{1}{\sqrt{\sum_{x} |c_x I_{x;z,w}|^2}} \sum_x c_x I_{x;z,w} \ket{x}\;.
\end{equation}
Generally, we need to repeat this process sufficiently many times in order to read off the type of $x$ anyon from the probability distribution of the fusion outcome. 
For our specific purpose, this is not necessary as we can properly choose $z$ anyons so that a single shot fusion outcome provides us with enough information. 
As is shown below, we can adaptively realize the two measurements with an arbitrarily high accuracy.

This process is robust against various undesired local braiding and fusion. 
For instance, if a local error braids the two $z$ anyons when they are close to each other, the result is a trivial global phase $R^{zz}_{w}$.
Another scenario is where a pair of $a$ anyons is locally created, braided with one of the $z$ anyons, and fused into $b$. 
Then, the pair of $z$ anyons is fused into a $c$ anyon, while the $b$ and $c$ anyons are finally fused into a $w$ anyon. After removing the intermediate anyon loops, we have
\begin{equation} \label{eqn:local_interferometry}
    \begin{tikzpicture}[baseline = {(current bounding box.center)}, scale = 0.9]
        \draw (0,0) -- (0,0.21); 
        \draw (0,0.36) -- (0,2.0); 
        \draw (-0.5,0.88).. controls (-0.75,0.83) and (-0.9,0.6) .. (-0.7,0.45) ..controls (-0.4,0.22) and (0.4, 0.22) .. (0.7,0.45) .. controls (0.9,0.6) and (0.75,0.9) .. (0.08,0.95);
        \draw (-0.08,0.95) .. controls (-0.15,0.95) and (-0.3,0.94) .. (-0.38,0.91);
        \draw (-0.7,0.86) .. controls (-0.65,1.22) and (-0.45,1.05) .. (-0.43,0.85) .. controls (-0.45, 0.55) and (-0.62,0.63).. (-0.65,0.73);
        \draw (-0.59,1.09) .. controls (-0.6,1.20) and (-0.65,1.36) .. (-0.8,1.45);
        \draw (-0.80,0.54) .. controls (-1.1,0.9) and (-1,1.3) .. (-0.8,1.45);
        \draw (-0.8,1.45) .. controls (-0.8,1.60) and (-0.75,1.8) .. (-0.7,1.9);
        \node at (0.2,0) {\scriptsize \(x\)};
        \node at (0.2,2.0) {\scriptsize \(x\)};
        \node at (-0.5,0.15) {\scriptsize \(z\)};
        \node at (-0.32,0.60) {\scriptsize \(a\)};
        \node at (-0.47,1.27) {\scriptsize \(b\)};
        \node at (-1.13,1.1) {\scriptsize \(c\)};
        \node at (-0.5,2) {\scriptsize \(w\)};
    \end{tikzpicture} = I_{z,(a,b,c),w}\begin{tikzpicture}[baseline = {(current bounding box.center)}, scale = 0.9]
        \draw (0,0) -- (0,0.21); 
        \draw (0,0.36) -- (0,2.0); 
        \draw (-0.08,0.95).. controls (-0.75,0.9) and (-0.9,0.6) .. (-0.7,0.45) ..controls (-0.4,0.22) and (0.4, 0.22) .. (0.7,0.45) .. controls (0.9,0.6) and (0.75,0.9) .. (0.08,0.95);
        \draw (-0.65,0.81) .. controls (-0.9,1.00) and (-0.95,1.6) .. (-0.7,1.9);
        \node at (0.2,0) {\scriptsize\(x\)};
        \node at (0.2,2.0) {\scriptsize\(x\)};
        \node at (-0.5,0.15) {\scriptsize\(z\)};
        \node at (-0.5,2) {\scriptsize\(w\)};
    \end{tikzpicture}\;,
\end{equation}
where $I_{z,(a,b,c),w}$ is a global phase independent of $x$ (see Appendix~\ref{appendix:Phase_Factor_Interferometry}). This implies that the process returns to the original one as shown in \figref{fig:FT_measurements}(a), and the undesired local braiding and fusion do not induce logical errors.

To implement $\mathcal{M}_A = \{\Pi_A,\Pi_{A^\prime}\}$ in the computational subspace, we can apply the anyon interferometry to the left pair of $D$ anyons followed by fusing the $w$ anyon (the fusion outcome of $z$) with the first $D$ anyon of the fusion tree (see \figref{fig:FT_measurements}(b)).
By selecting $z=D$, and through an iterative procedure as described in Appendix~\ref{appendix:FT_Measurements}, the protocol reliably returns the fusion tree to $V^{DDDD}_G$. Notably, one can show that the internal state $x$ must be the same as the fusion outcome $w$, allowing for the remote implementation of $\mathcal{M}_A$.

To implement $\mathcal{M}_U = \{\Pi_U,\Pi_{U^\perp}\}$, we need an intermediate anyon interferometry $\tilde{\mathcal{M}}_U$, as depicted in \figref{fig:FT_measurements}(c). 
Here, the fusion outcome $w$ of the $z$ anyons is fused with the $G$ anyon of the fusion tree, yielding an amplitude $I_{(x,y);z,w}^U$ that depends on the state $\ket{xy}$ (see Appendix~\ref{appendix:FT_Measurements}). 
Importantly, the measurement must preserve coherence within the $U$ and $U^\perp$ subspaces, i.e., the amplitudes of the basis states in each subspace must remain unchanged after the measurement.
It requires us to choose $z = H$ for the intermediate anyon interferometry $\tilde{\mathcal{M}}_U$. 
Specifically, the amplitudes $I^{U}_{(x,y);H,A}$ of the measurement result $w = A$ for different computational basis states $\ket{xy}$ are:
\begin{equation} \label{eqn:I_U_H_A}
I^{U}_{(x,y);H,A} = \begin{cases}
     1\;, & (x,y) \in U\\
     -\frac{1}{2}\;, & (x,y) \in U^\perp
\end{cases}\;,
\end{equation}
where $(x,y) \in U$ means that the basis state $\ket{xy}$ lies within the subspace $U$. 
For the measurement outcome $w=B$, we have
\begin{equation} \label{eqn:I_U_H_B}
    I^{U}_{(x,y);H,B} = \begin{cases} 0\;, & (x,y) \in U\\
        -\frac{i\sqrt{3}}{2}\;, & (x,y) \in U^\perp_1\\
        \frac{i\sqrt{3}}{2}\;, & (x,y) \in U^\perp_2
    \end{cases}\;,
\end{equation}
where $U^\perp_1 = \mathrm{span}\{\ket{FC}, \ket{HF}, \ket{CH}\}$ and $U^\perp_2 = \mathrm{span}\{\ket{CF}, \ket{FH}, \ket{HC}\}$.
Note that the amplitudes for the states in $U_1^\perp$ and $U_2^\perp$ differ by a sign.
Consequently, if the outcome $w=B$ is obtained, we can repeat $\tilde{\mathcal{M}}_U$ until a subsequent $w=B$ result cancels the relative sign, thereby preserving the coherence within $U^\perp$.

Building upon $\tilde{\mathcal{M}}_U$, we can realize $\mathcal{M}_U$ by applying $\tilde{\mathcal{M}}_U$ repeatedly with an arbitrarily high accuracy.
Consider a general state 
\begin{equation*}
\begin{aligned}
\ket{\psi_0} & = \sum_{(x,y)\in U} \alpha_{xy}\ket{xy} \\
& + \sum_{(x,y)\in U^\perp_1} \beta_{xy}\ket{xy} + \sum_{(x,y)\in U^\perp_2} \gamma_{xy}\ket{xy}\;.
\end{aligned}
\end{equation*}
After performing $\tilde{\mathcal{M}}_U$, we ignore the overall normalization factor and write the resulting state as
\begin{equation*}
    \begin{cases}
        \sum \alpha_{xy}\ket{xy} - \frac{1}{2}\left(\sum \beta_{xy}\ket{xy} + \sum \gamma_{xy}\ket{xy} \right)\;, & w = A\\
        -\sum\beta_{xy}\ket{xy}+\sum\gamma_{xy} \ket{xy}\;,& w = B 
    \end{cases}\;
\end{equation*}
each of which occurs with the probability
\begin{equation*}
    \begin{cases}
        \sum|\alpha_{xy}|^2+\frac{1}{4}\sum|\beta_{xy}|^2+\frac{1}{4}\sum|\gamma_{xy}|^2\;,&w=A\\
        \frac{3}{4}\sum|\beta_{xy}|^2+\frac{3}{4}\sum|\gamma_{xy}|^2\;,&w=B
    \end{cases}
\end{equation*}
Here the summations are taken over the corresponding subspaces. 
After $n$ rounds of $\tilde{\mathcal{M}}_U$ measurements, if all the measurement outcomes are $w=A$, the state reads
\begin{equation*}
    \sum \alpha_{xy} \ket{xy} + \left(-\frac{1}{2}\right)^n\left( \sum \beta_{xy} \ket{xy} + \sum\gamma_{xy} \ket{xy} \right)\;,
\end{equation*}
which is exponentially close to $\Pi_U \ket{\psi_0}$, effectively realizing $\Pi_U$. 
This event happens with a probability exponentially close to $\sum |\alpha_{xy}|^2$, as what we expect for projecting $\ket{\psi_0}$ onto $U$.

On the other hand, a single occurrence of $w=B$ projects the state onto $U^\perp$ but with a relative sign between $U^\perp_1$ and $U^\perp_2$. 
Subsequently, we continue performing $\tilde{\mathcal{M}}_U$ until another $w=B$ outcome is obtained, thereby canceling out the relative minus sign. 
For a state in $U^\perp$, each $\tilde{\mathcal{M}}_U$ has a $\frac{1}{4}$ or $\frac{3}{4}$ probability of yielding $w=A$ or $w=B$, respectively. 
The probability of eventually obtaining another $w=B$ result is exponentially close to $1$ with respect to the number of measurement rounds.

We close this subsection with a remark on the potential alternative approach of using a large closed ribbon operator $K^{R,C}_\sigma$ for the measurement, where the support $\sigma$ forms a large closed ribbon enclosing the measured anyons.
Using the anyon interferometry or $K^{R,C}_\sigma$ is based on the same physics.
Their main difference is in their noise resilience properties.
It follows from \eqref{eqn:K_RC} and the entangling structure of $F^{h,g}_{\rho}$ that any error in the early stages of applying $K^{R,C}_\sigma$ propagates through the entire ribbon, making it infeasible in practice.

\subsection{Merging independent logical qutrits}
\label{sec:multi_qutrit_gate}

\begin{figure*}[tb]
    \centering
    \includegraphics[width = \textwidth]{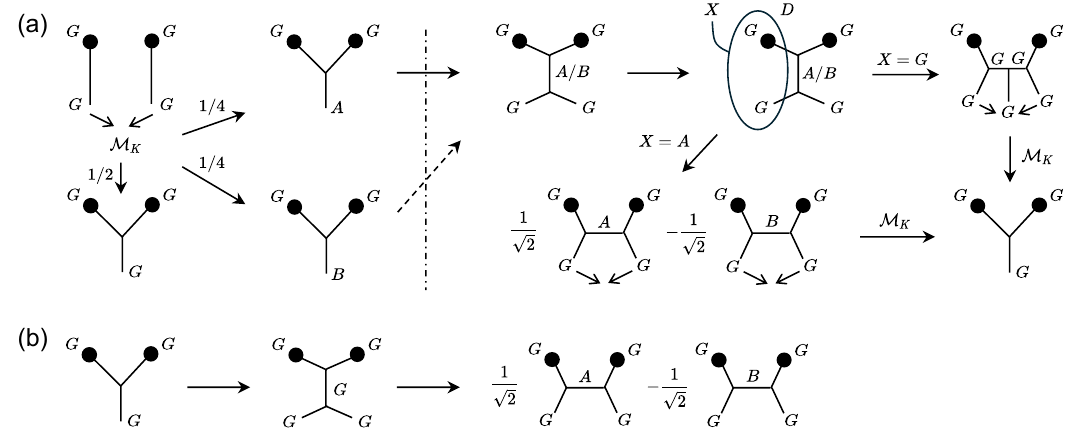}
    \caption{(a) Protocol for fusing the $G$ anyons from two fusion trees, both in $V^{DDDD}_G$, into a single $G$ anyon. (b) Protocol for splitting the $G$ anyon, in the two-qutrit fusion tree, into two $G$ anyons. This splitting process allows for the transition into two fusion trees in $V^{DDDD}_G$.}
    \label{fig:fusion_G_anyons}
\end{figure*}

In addition to the single-qutrit operations, implementing the two-qutrit controlled-$Z$ is essential for achieving a universal gate set. This gate is realized through specific braiding operations in the two-qutrit fusion tree basis:
\begin{equation}\label{eqn:two_qutrit_basis}
    \begin{tikzpicture}[baseline = {(current bounding box.center)}]
        \draw (0.9,-0.4) -- (0,0.5);
        \draw (0,0.5) -- (-0.4,0.9);
        \draw (-0.4,0.9) -- (-0.6,1.1);
        \draw (-0.4,0.9) -- (-0.2,1.1);
        \draw (0,0.5) -- (0.4,0.9);
        \draw (0.4,0.9) -- (0.6,1.1);
        \draw (0.4,0.9) -- (0.2,1.1);
        \draw (-0.35,0.6) node {\scriptsize $x_1$};
        \draw (0.44,0.6) node {\scriptsize $y_1$};
        \draw (0.2,-0.1) node {\scriptsize $G$};
        \draw (-0.6,1.3) node {\scriptsize $D$};
        \draw (-0.2,1.3) node {\scriptsize $D$};
        \draw (0.2,1.3) node {\scriptsize $D$};
        \draw (0.6,1.3) node {\scriptsize $D$};
        \draw (0.9,-0.9) -- (0.9,-0.4);
        \draw (0.9,-0.4) -- (1.8,0.5);
        \draw (1.8,0.5) -- (1.4,0.9);
        \draw (1.4,0.9) -- (1.2,1.1);
        \draw (1.4,0.9) -- (1.6,1.1);
        \draw (1.8,0.5) -- (2.2,0.9);
        \draw (2.2,0.9) -- (2.0,1.1);
        \draw (2.2,0.9) -- (2.4,1.1);
        \draw (1.45,0.6) node {\scriptsize $x_2$};
        \draw (2.24,0.6) node {\scriptsize $y_2$};
        \draw (1.6,-0.1) node {\scriptsize $G$};
        \draw (1.2,1.3) node {\scriptsize $D$};
        \draw (1.6,1.3) node {\scriptsize $D$};
        \draw (2.0,1.3) node {\scriptsize $D$};
        \draw (2.4,1.3) node {\scriptsize $D$};
        \draw (1.2, -0.9) node {\scriptsize $G$};
    \end{tikzpicture}\;,
\end{equation}
where both $\ket{x_1y_1}$ and $\ket{x_2y_2}$ are elements of $V^{DDDD}_G$. 
Since each logical qutrit is prepared independently, we must be able to merge them into this large fusion tree and also split them.
All operations in this section will not affect the internal states, so we represent each individual logical qutrit as
\begin{equation}
    \begin{tikzpicture}[baseline = {(current bounding box.center)}, scale=0.95]
        \filldraw (0,1.2) circle (2pt);
        \draw (0,0) -- (0,1.2);
        \draw (0.2,0) node {\scriptsize $G$};
        \draw (0.3,1.2) node {\scriptsize $G$};
    \end{tikzpicture}
    \equiv
    \begin{tikzpicture}[baseline = {(current bounding box.center)}, scale = 0.8]
        \draw (0,0) -- (0,0.5);
        \draw (0,0.5) -- (-0.6,1.1);
        \draw (-0.6,1.1) -- (-1,1.5);
        \draw (-0.6,1.1) -- (-0.2,1.5);
        \draw (0,0.5) -- (0.6,1.1);
        \draw (0.6,1.1) -- (1,1.5);
        \draw (0.6,1.1) -- (0.2,1.5);
        \draw (-0.5,0.7) node {\scriptsize $x$};
        \draw (0.5,0.7) node {\scriptsize $y$};
        \draw (0.2,0) node {\scriptsize $G$};
        \draw (-1,1.75) node {\scriptsize $D$};
        \draw (-0.3,1.75) node {\scriptsize $D$};
        \draw (0.3,1.75) node {\scriptsize $D$};
        \draw (1,1.75) node {\scriptsize $D$};
    \end{tikzpicture}\;.
\end{equation}
Below we describe a protocol for fusing and splitting the $G$ anyons at the roots
\begin{equation}
    \begin{tikzpicture}[baseline = {(current bounding box.center)}, scale=0.95]
        \filldraw (-0.8,1) circle (2pt);
        \draw (-0.8,0) -- (-0.8,1);
        \draw (-1,0) node {\scriptsize $G$};
        \draw (-1.1,1) node {\scriptsize $G$};
        \filldraw (0,1) circle (2pt);
        \draw (0,0) -- (0,1);
        \draw (0.2,0) node {\scriptsize $G$};
        \draw (0.3,1) node {\scriptsize $G$};
    \end{tikzpicture}\  \xleftrightarrow{\quad}\   \begin{tikzpicture}[baseline = {(current bounding box.center)}, scale=0.95]
        \filldraw (-1,1) circle (2pt); 
        \filldraw (0,1) circle (2pt);
        \draw (-1,1) -- (-0.5,0.5);
        \draw (0,1) -- (-0.5,0.5);
        \draw (-0.5,0.5) -- (-0.5,0);
        \draw (-1.3,1) node {\scriptsize $G$};
        \draw (0.3,1) node {\scriptsize $G$};
        \draw (-0.3,0) node {\scriptsize $G$};
    \end{tikzpicture}\;,
\end{equation}
which facilitates the implementation of the controlled-$Z$ gate.

The fusion protocol, as shown in \figref{fig:fusion_G_anyons}(a), has two subroutines that are separated by the vertical dashed line. In the first subroutine (left panel), we begin by bringing the two $G$ anyons together at a single site and performing a $\mathcal{M}_K$ measurement. This results in a fusion outcome of $A, B$ or $G$, with probabilities given by \eqref{eqn:fusion_prob}. If the outcome is $G$, as shown in the lower part of the first subroutine, the protocol is complete. If the outcome is $A$ or $B$, we proceed to the second subroutine (right panel). 

In the second subroutine, the $A/B$ anyon is first split into two $G$ anyons by applying a shortest $G$ ribbon operator with one end at the $A/B$ anyon, followed by a $\mathcal{M}_K$ measurement. According to the fusion rule $A/B \times G \to G$, this process annihilates the $A/B$ anyon and creates two $G$ anyons. The two $G$ anyons are then separated far apart using the protocol described in \secref{sec:adaptive_movement}. 

Next, as shown in the upper middle part of the second subroutine, an anyon interferometer is performed on the quasi-$G$ anyon and the $G$ anyon located on the left part of the configuration, using a pair of $D$ anyons. This interferometer yields a fusion outcome $X$, which can be $A$ or $G$. 

If $X=A$, we obtain a linear superposition of fusion trees with four (quasi) $G$ anyons and an internal $A$ or $B$ anyon, as depicted in the lower left part of the right panel. By fusing the lower two $G$ anyons via $\mathcal{M}_K$, we obtain a single $G$ anyon, completing the process. If $X=G$, in addition to the two quasi-$G$ anyons, the configuration includes three $G$ anyons, as shown in the right part of the second subroutine. Fusing these three $G$ anyons will also result in a single $G$ anyon, thereby achieving the desired outcome. A detailed analysis of these fusion processes and their associated probabilities is provided in Appendix~\ref{appendix:multi_qutrit_gate}.

The splitting protocol, as illustrated in Figure~\ref{fig:fusion_G_anyons}(b), begins with splitting the $G$ anyon at the root of the two-qutrit fusion tree into two $G$ anyons. This is achieved by iteratively applying the shortest $G$ ribbon, with one end anchored to the $G$ anyon, followed by a $\mathcal{M}_K$ measurement. According to the fusion rule $G\times G \to A+B+G$, the splitting process succeeds with a probability that exponentially approaches $1$, with respect to the number of iterations of the shortest $G$ ribbon application and the $\mathcal{M}_K$ measurement. Once the two $G$ anyons are generated, they are separated using the protocol described in \secref{sec:adaptive_movement}.
We can apply an $F$-move to view it as a linear superposition of fusion trees involving four (quasi) $G$ anyons with an internal $A$ or $B$ anyon (see Appendix~\ref{appendix:multi_qutrit_gate}). 
As demonstrated in Appendix~\ref{appendix:multi_qutrit_gate}, since $B$ is an Abelian anyon and indistinguishable from $A$ in all steps necessary for realizing the universal gate set, the splitting protocol is deemed successful.

All operations in these protocols take place far from the single-qutrit fusion trees, which ensures robustness against local errors.
We can also apply them to any pair of qutrits in $V^{DDDD}_G$ by swapping fusion trees to position target qutrits adjacently, followed by the fusion and splitting steps. Consequently, the controlled-$Z$ gate can be applied to any pair of qutrits. Combined with the implementation of single-qutrit operations, we can achieve a multi-qutrit universal gate set in a remote manner.

\section{Circuits for anyon manipulation}
\label{sec:quantum_double_circuit}

It is challenging to directly work with a six-dimensional local Hilbert space and realize the quantum double algebra on the current experimental platforms.
A more accessible approach is to replace each local Hilbert space with a pair of qutrit and qubit.
In this realistic setup, it has been shown how to prepare the ground state of $\mathcal{D}(S_3)$ quantum double efficiently~\cite{verresen2022efficiently,Tantivasadakarn2023PRXQuantum_Hierarchy,bravyi2022adaptive}.  
Here we focus on the explicit implementation of the shortest ribbon operator $F^{R,C}_{\rho_{h/v}}$ and the local anyon charge measurement $\mathcal{M}_K$ using qutrits and qubits, and thus all the basic operations for the $U$-model at the circuit level. The resulting circuit architecture is compatible with several existing experimental platforms, as we discuss at the end of this section.

Let us specify our notations of qubits and qutrits for the rest of the paper. Symbols for qubits are the standard ones, such as the basis states $\ket{l}$, $l=0,1$ and the Pauli's $\mathcal{P} = \{I, X, Z, XZ\}$. 
Symbols for qutrits are written with a hat, e.g., the basis states $\ket{\hat{k}}$, $k = 0,1,2$ and the qutrit Pauli's $\hat{\mathcal{P}}=\{\hat{I}, \hat{X}, \hat{Z}, \ldots\}$. We use $\ket{\hat{k}_+}$ as the eigenstate of $\hat{X}$ with eigenvalue $e^{2\pi i k/3}$.

\subsection{The realization of the \texorpdfstring{$L$}{L} and \texorpdfstring{$T$}{T} operators}

Our first step is to explicitly construct circuits for the basic units of all quantum double operators---the $L$ and $T$ operators, whose definitions are copied below for the reader's convenience
\begin{equation*} 
\begin{aligned}
    & L^g_+ \ket{m} = \ket{gm}\;, \quad L^g_- \ket{m} = \ket{m\bar{g}}\;,\\
    & T^h_+\ket{m} = \delta_{h,m}\ket{m}\;,\quad T^h_-\ket{m} = \delta_{\bar{h},m}\ket{m}\;.
\end{aligned}
\end{equation*}
Essentially, $L^g$ follows the group multiplication rule, and $T^h$ is the projection onto the element $h$.
Their representations manifest in the qutrit-qubit basis by noticing the group isomorphism
\begin{equation}
\begin{gathered}
    S_3 \cong \mathbb{Z}_3 \rtimes \mathbb{Z}_2 \\
    g \in S_3 \mapsto \mu^k \sigma^l \in \mathbb{Z}_3 \rtimes \mathbb{Z}_2
\end{gathered}
\end{equation}
Accordingly, we encode the group basis $\{\ket{g}: g \in S_3\}$ into a qutrit and a qubit by
\begin{equation}
    \ket{g} \in \calH_{S_3} \mapsto \ket{\hat{k},l} \in \mathbb{C}^3 \otimes \mathbb{C}^2\,.
\end{equation}
This allows us to write down $L^{\mu,\sigma}$ and $T^{\mu,\sigma}$ in the qutrit-qubit basis and generate all $L^g_{\pm}$ via the group isomorphism.

Specifically, one can show that $L^{\mu}$ and $L^\sigma$ are given by the following operators
\begin{equation} \label{eqn:L_circuits}
    \begin{aligned}
        &L^{\mu}_+ = \hat{X} \otimes I,\quad L^{\mu}_- = \hat{X}^{-Z}\;,\\
        &L^{\sigma}_+ = \hat{C}\otimes X\;,\quad L^{\sigma}_- = \hat{I}\otimes X\;,
    \end{aligned}
\end{equation}
where $\hat{C}$ is the qutrit charge conjugation gate acting as $\hat{C}\ket{\hat{k}} = \ket{-\hat{k}}$, and $\hat{X}^{-Z}$ is a qubit-to-qutrit controlled gate which acts as $\hat{X}^\dagger$ or $\hat{X}$ on the qutrit when the qubit is $\ket{0}$ or $\ket{1}$. 
Introducing the controlled gate is really inevitable as it arises from the nontrivial algebra $\sigma \mu \sigma =\bar{\mu}$ between the generators $\sigma$ and $\mu$ of $\mathbb{Z}_2$ and $\mathbb{Z}_3$ in $S_3$.
To understand this, let us verify the circuit for $L^{\mu}_-$, for which we have 
\begin{equation}
    \hat{X}^{-Z}\ket{\hat{k},l} = \ket{\hat{k} - (-1)^l,l}\;.
\end{equation}
On the other hand, we have
\begin{equation}
    L^{\mu}_- \ket{g} = \ket{g\bar{\mu}} = \ket{\mu^{k}\sigma^l \mu^{-1}}\;.
\end{equation}
By the identity $\sigma \mu \sigma = \bar{\mu}$, we reach the equality between them through the group isomorphism. 
Note that $\hat{X}^{-Z}$ is a non-Clifford gate and can be converted to another non-Clifford gate introduced below.

Similarly, using the fact that operators $T^h_+$ and $T^{h}_-$ are projectors onto $h = \mu^k \sigma^l$ and $\bar{h} = \mu^{-(-1)^l k } \sigma^l$, we have
\begin{equation} \label{eqn:T_circuits}
    T^{h}_+ = P^{k}_{\hat{Z}} P^{l}_{Z}\;,\quad  T^{h}_- = P^{k}_{\hat{Z}^{-Z}} P^{l}_Z\;,
\end{equation}
where $P^{k}_{\hat{O}}$ is the projector onto the eigenstate of the operator $\hat{O}$ with eigenvalue $\omega^{k}$, and $P^{l}_{O}$ is the projector onto the eigenstate  of the operator $O$ with eigenvalue $(-1)^l$.
We can see that $T^{h}_+$ is a measurement in the computational basis.
To recast $T^{h}_-$ in the computational basis in a more transparent fashion, we introduce the qubit-to-qutrit controlled charge conjugation gate $C\hat{C}$~\cite{verresen2022efficiently}
\begin{equation}
    C\hat{C} \ket{\hat{k},l} =  \ket{(-1)^l\hat{k},l}\;.
\end{equation}
It is a non-Clifford gate and has the following algebra with Pauli operators:
\begin{equation} \label{eqn:algebra_CC}
\begin{aligned}
    &C\hat{C} (\hat{X}\otimes I) C\hat{C} = \hat{X}^Z\;,\quad C\hat{C} (\hat{Z}\otimes I) C\hat{C} = \hat{Z}^Z\;,\\
    &C\hat{C}(\hat{I} \otimes X) C\hat{C} = \hat{C}\otimes X\;,\quad C\hat{C}(\hat{I} \otimes Z) C\hat{C} = \hat{I}\otimes Z\;.
\end{aligned}
\end{equation}
Thus, we can obtain $T^{h}_-$ by applying $(\hat{C}\otimes I)C\hat{C}$ followed by a standard computational basis measurement.

In summary, we can construct all $L$ and $T$ operators using Pauli operators, $\hat{C}$, $C\hat{C}$ and the computational basis measurements, where $C\hat{C}$ is the only necessary non-Clifford gate.
Intuitively, at least one non-Clifford gate should be present in the circuits for universal computation. 
On the other hand, having only one non-Clifford gate simplifies the task of suppressing the effective physical error rate, as we will discuss in \secref{sec:ECC}.

\subsection{The realization of \texorpdfstring{$F^{R,C}_{\rho_{h/v}}$}{FRCphv} and \texorpdfstring{$K^{R,C}_s$}{MK}}

As discussed in \secref{sec:adaptive_movement}, we want to apply ribbon operators that create anyons with maximally mixed local degrees of freedom.
This requires us to introduce extra ancillas and entangle them with the local degrees of freedom, followed by discarding the ancillas. 
As an illustrative example, we show the circuit for the horizontal ribbon that creates the chargeons, i.e. $F^{R,C}$ with $C = C_1 = \{e\}$, and refer the reader to Appendices~\ref{appendix:circuit_F_RCrhoh} and \ref{appendix:circuit_F_RCrhov} for the circuit for other anyons.

Recall that the shortest horizontal ribbon written in terms of the $L$ and $T$ operators reads
\begin{equation*}
\begin{gathered}
F^{R,C;u,v}_{\rho_h} = \frac{|R|}{|Z(C)|} \sum_{n\in Z(C)} \Gamma^R_{jj^\prime}(n) (T^{\tau_c n \bar{\tau}_{c^\prime}}_{+})_1 (L^{c^\prime}_{+})_2 \,,\\
\rho_h = \ \begin{tikzpicture}[baseline={(current bounding box.center)}, scale = 0.85]
    \foreach \i in {0,1}
        {\foreach \j in {0,1}{
            \draw[thick] (\i,\j+1) -- (\i,\j);
            \draw[thick] (\i,\j+1) -- (\i+1,\j+1);
            \draw[thick] (\i,\j) -- (\i+1,\j);
            \draw[thick] (\i+1,\j+1) -- (\i+ 1,\j);
            }}
    \draw (0.5,0.5)--(1,1);
    \draw (0.5,0.5)-- (1.5,0.5);
    \foreach \i in {0,1}{           
        \draw (\i,1) -- (0.5+\i,0.5);
    }
    \node at (0.2,0.5) {\scriptsize $s$};
    \node at (1.6,0.8) {\scriptsize $s^\prime$};
    \node at (0.8,0.3) {\scriptsize $2$};
    \node at (0.47,1.2) {\scriptsize $1$};
    \end{tikzpicture} \,.
\end{gathered}
\end{equation*}
For $C= C_1 = \{e\}$,  $L^{c^\prime}_{+}$ becomes the identity operator and the operator only acts nontrivially on the edge 1.
If $R$ is the one-dimensional irreps $[+]$ or $[-]$, the ribbon creates the trivial anyon or $B = ([-],C_1)$. We have
\begin{equation}
\begin{aligned}
    F^{[+],C_1}_{\rho_h} &= (\hat{I}\otimes I)_1 \otimes (\hat{I} \otimes I)_2\;,\\
    F^{[-],C_1}_{\rho_h} &= (\hat{I}\otimes Z)_1 \otimes (\hat{I}\otimes I)_2\;.
\end{aligned}
\end{equation}
When $R$ is the two-dimensional irrep $[2]$, the ribbon creates the non-Abelian anyon $C$ and reads
\begin{equation} \label{eqn:F_2_C1_uv_circuit}
    F^{[2],C_1;u,v}_{\rho_h} = (\hat{Z}^{u+1} \otimes \ket{u+v}\bra{u+v})_1\otimes (\hat{I}\otimes I)_2\;.
\end{equation}
The $F^{[2],C_1;u,v}_{\rho_h}$ is non-unitary because the representation matrix $\Gamma^{[2]}$ has zero elements.
One can verify that we can realize $F^{[2],C_1}_{\rho_h}$ by the following circuit:
\begin{equation}
    \begin{gathered}   \includegraphics{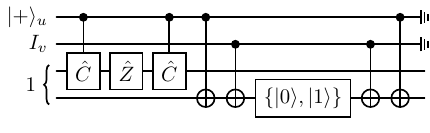}
\end{gathered}
\end{equation}
where the top two lines are the ancilla qubits that are prepared in the $\ket{+}$ and the maximally mixed state $I_v$, respectively.
The bottom two lines are the qutrit and qubit for the edge 1. The $\{\ket{0},\ket{1}\}$ denote the measurement in the corresponding bases.

\begin{figure*}[t]
    \centering
    \includegraphics[width = 0.95\textwidth]{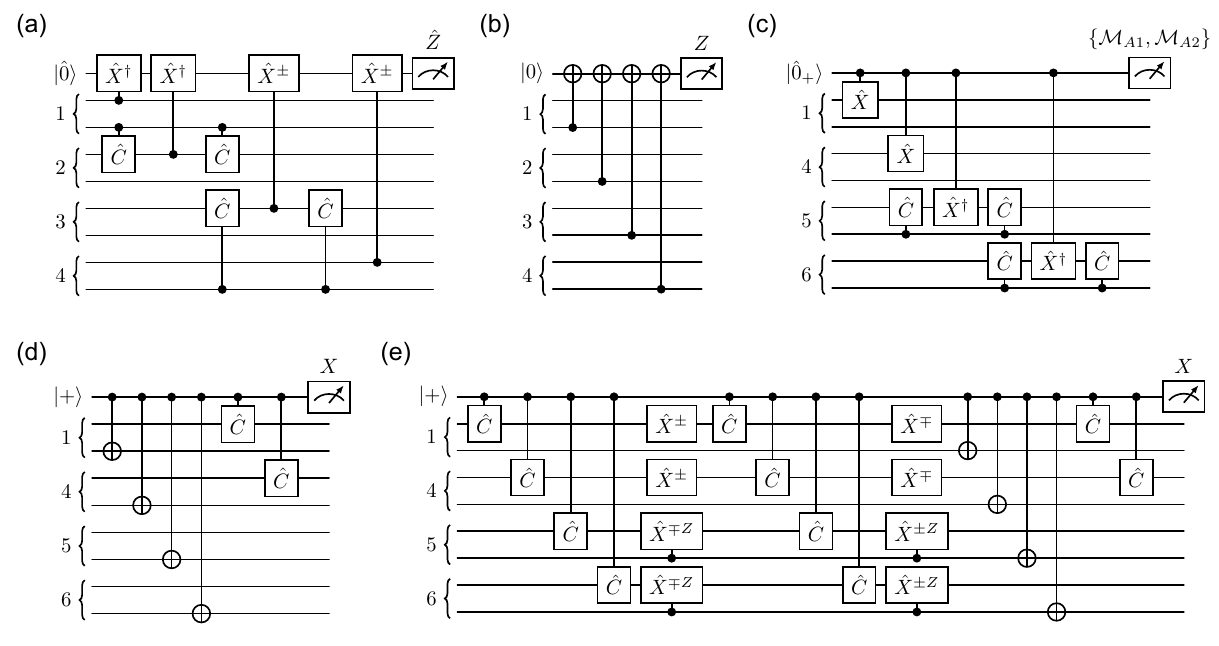}
    \caption{Quantum circuits for the $K^{R,C}_{s}$ measurement. The measurement of $B^h_p$ is implemented by adaptively measuring the operators $\hat{Z}^{\pm}_p$ and $Z_p$. (a) $\hat{Z}^{\pm}_p$ measurement, where the $+$ or $-$ sign denotes $\hat{Z}^+_p$ or $\hat{Z}^-_p$ respectively. (b) $Z_p$ measurement. (c) $A^\mu_v$ measurement, with $\mathcal{M}_{A1} = \{\ket{\hat{0}_+},\ket{\hat{0}_+}^\perp\}$ or $\mathcal{M}_{A2} = \{\ket{\hat{0}_+},\ket{\hat{1}_+},\ket{\hat{2}_+}\}$ on the ancilla qutrit corresponding to measurements in the $C_1$ or $C_3$ class, respectively. (d) $A^{\sigma}_v$ measurement. (e) $A^{\mu \sigma}_v$ or $A^{\mu^2\sigma}_v$ measurement, with the $+$ or $-$ sign indicating $A^{\mu\sigma}_v$ or $A^{\mu^2\sigma}_v$, respectively.}
    \label{fig:KRC_circuit}
\end{figure*}

To figure out the circuits for local anyon type measurement, let us recall the expression for $K_s^{R,C}$
\begin{equation*}
\begin{gathered}
    K^{R,C}_{s} = \frac{|R|}{|Z(C)|} \sum_{n\in Z(C),c\in C} \bar{\chi}_R(n)A^{\tau_c n \bar{\tau}_c}_v B^c_p \\
    \begin{tikzpicture}[baseline={(current bounding box.center)}, scale=0.9]
    \foreach \i in {0,1}
        {\foreach \j in {0,1}{
            \draw[] (\i,\j+1) -- (\i,\j);
            \draw[] (\i,\j+1) -- (\i+1,\j+1);
            \draw[] (\i,\j) -- (\i+1,\j);
            \draw[] (\i+1,\j+1) -- (\i+ 1,\j);
            }}
    \draw (1,1) -- (1.5,0.5);
    \node at (1.35,0.83) {\scriptsize$s$};
    \draw [line width=1.2pt] (1,1) -- (2,1);
    \draw [line width=1.2pt] (2,1) -- (2,0);
    \draw [line width=1.2pt] (1,0) -- (2,0);
    \draw [line width=1.2pt] (1,0) -- (1,1);
    \draw [line width=1.2pt,dashed,dash pattern=on 5pt off 3pt] (1.5,0.5) -- (1.5,1.5);
    \draw [line width=1.2pt,dashed,dash pattern=on 5pt off 3pt] (0.5,0.5) -- (1.5,0.5);
    \draw [line width=1.2pt,dashed,dash pattern=on 5pt off 3pt] (0.5,0.5) -- (0.5,1.5);
    \draw [line width=1.2pt,dashed,dash pattern=on 5pt off 3pt] (0.5,1.5) -- (1.5,1.5);    \node at (0.8,0.8) {\scriptsize$1$};
    \node at (1.2,1.2) {\scriptsize $4$};
    \node at (1.8,0.8) {\scriptsize$3$};  \node at (1.2,0.2) {\scriptsize$2$}; \node at (0.2,1.2) {\scriptsize$6$};
    \node at (0.8,1.8) {\scriptsize$5$};
    \end{tikzpicture}\;.
\end{gathered}
\end{equation*}
To construct the measurement circuits, we first write $K^{R,C}_s$ in terms of projectors onto different charge and flux sectors more explicitly. For $C = C_1$, we have: 
\begin{equation} 
\begin{aligned}
    K^{[+],C_1}_s &= P^{0}_{A^\mu_v} P^{0}_{A^\sigma_v} B^e_p\;,\\
    K^{[-],C_1}_s &= P^{0}_{A^\mu_v} P^{1}_{A^\sigma_v} B^e_p\;,\\
    K^{[2],C_1}_s &= (P^{1}_{A^\mu_v} + P^{2}_{A^{\mu}_v}) B^e_p\;,
\end{aligned}
\end{equation}
where we use the same notation as in \eqnref{eqn:T_circuits}.
For $C = C_2$, we have:
\begin{equation}
\begin{aligned} 
    K^{[+],C_2}_s &= P^{0}_{A^{\sigma}_v}B_p^{\sigma} + P^{0}_{A^{\mu\sigma}_v} B_p^{\mu \sigma} + P^{0}_{A^{\mu^2\sigma}_v} B_p^{\mu^2 \sigma}\;,\\
    K^{[-],C_2}_s &= P^{1}_{A^{\sigma}_v}B_p^{\sigma} + P^{1}_{A^{\mu\sigma}_v} B_p^{\mu \sigma} + P^{1}_{A^{\mu^2\sigma}_v} B_p^{\mu^2 \sigma}\;.
\end{aligned}
\end{equation}
For $C = C_3$, we have:
\begin{equation}
\begin{aligned} 
    K^{1,C_3}_s &= P^{0}_{A^\mu_v} B^{\mu}_p + P^{0}_{A^{\mu^2}_v} B^{\mu^2}_p\;,\\
    K^{\omega,C_3}_s &= P^{1}_{A^\mu_v} B^{\mu}_p + P^{2}_{A^{\mu^2}_v} B^{\mu^2}_p\;,\\
    K^{\bar{\omega},C_3}_s &= P^{2}_{A^\mu_v} B^{\mu}_p + P^{1}_{A^{\mu^2}_v} B^{\mu^2}_p\;.
\end{aligned}
\end{equation}
It follows from these expressions that one can realize the measurement of $K^{R,C}_s$ by measuring $A^{g}_v$ and $B^h_p$ separately. This is consistent with the intuition that each anyon is a dyon in the quantum double model with its flux detected by the plaquette operators and charge by the star operators.

The measurement circuits for $B^h_p$ are shown in \figref{fig:KRC_circuit} (a,b), and that for $A^g_v$ are shown in \figref{fig:KRC_circuit} (c,d,e). As detailed in Appendix~\ref{appendix:circuit_K_RC_s}, the measurement of $B^h_p$ can be decomposed into adaptive measurements of the two operators $\hat{Z}_p^\pm$ and $Z_p$ (see \eqnref{eqn:Z_p}), so their corresponding circuits are shown \figref{fig:KRC_circuit} (a,b). 
These circuits are analogous to the typical syndrome measurement circuits for stabilizer codes~\cite{Fowler2012_SurfaceCode}, where the entangling gate maps the stabilizers of the ancillas to the tensor products with the stabilizers of the physical qudits. Then, by measuring the ancilla on the corresponding basis, we effectively realize the measurement of the physical degrees of freedom. 
Showing these circuits realize the desired measurement is a non-trivial task and we leave the details in Appendix~\ref{appendix:circuit_K_RC_s}.

Note that \figref{fig:KRC_circuit} (a) and (c) involve a new qutrit-to-qutrit controlled gate, that applies $\hat{I}$, $\hat{X}$ or $\hat{X}^\dag$ to the target qutrit if the control qutrit is in the state $\ket{\hat{0}}$, $\ket{\hat{1}}$ or $\ket{\hat{2}}$. This is the qutrit version of the qubit CNOT gate, as also appears in the circuits of $F^{R,C}_{\rho_{h/v}}$ (see Appendix~\ref{appendix:more_circuit_realization}). One can verify that it is also a Clifford gate
\begin{equation} \label{eqn:algebra_qutrit_CX}
\begin{aligned}
    &C\hat{X}(\hat{I}\otimes \hat{Z})C\hat{X}^\dagger = \hat{Z}^\dagger\otimes \hat{Z}\;,\\
    &C\hat{X}(\hat{X}\otimes \hat{I})C\hat{X}^\dagger = \hat{X}\otimes \hat{X}\;,
\end{aligned}
\end{equation}
where the first and second qutrits are the control and target, respectively.

Combining all these elements, we have the following adaptive protocol for the $K^{R,C}_s$ measurement:
\begin{enumerate}
    \item Measure $B^{h}_p$. We first measure $Z_p$. Depending on the outcome $l_h = 0$ or $l_h = 1$, we measure $\hat{Z}_p^+$ or $\hat{Z}_p^-$ and obtain $k_h$. These results determine $h = \mu^{k_h}\sigma^{l_h}$. 
    
    \item If $h = e \in C_1$, measure $A^{\mu}_v$ in the $\mathcal{M}_{A1} = \{\ket{\hat{0}_+},\ket{\hat{0}_+}^\perp\}$ basis. The $\ket{\hat{0}_+}^\perp$ outcome indicates the $C$ anyon. If the outcome is $\ket{\hat{0}_+}$, perform the $A^\sigma_v$ measurement, where the $+1$ or $-1$ outcome corresponds to the $A$ or $B$ anyon, respectively.
    
    \item If $h = \sigma,\mu\sigma,\mu^2\sigma \in C_2$, measure $A^{\sigma}_v, A^{\mu\sigma}_v, A^{\mu^2\sigma}_v$ correspondingly. The $+1$ or $-1$ outcome corresponds to the $D$ or $E$ anyon, respectively.
    
    \item If $h = \mu,\mu^2 \in C_3$, measure the $A^{\mu}_v,A^{\mu^2}_v$ in the $\mathcal{M}_{A2}=\{\ket{\hat{0}_+},\ket{\hat{1}_+},\ket{\hat{2}_+}\}$ basis correspondingly. The $1,\omega$ or $\bar{\omega}$ and $1,\bar{\omega}$ or $\omega$ measurement outcomes for $A^\mu_v$ and $A^{\mu^2}_v$ correspond to the $F,G$ or $H$ anyon, respectively.
\end{enumerate}

In summary, we have constructed circuits for applying the shortest ribbon operators and performing anyon type measurement, which are the basic ingredients for implementing universal quantum computation at the circuit level. 
The fact that there is only one non-Clifford gate simplifies the error control, as discussed in the following section.

\subsection{Experimental Feasibility}
As demonstrated in the previous sections, our circuit model relies solely on elementary gates and measurements involving qubits and qutrits, making it compatible with a range of near-term quantum platforms. One promising candidate for experimental realization is superconducting circuits. Current technology supports high-fidelity single- and multi-qubit gates~\cite{Arute2019_QuantumSupremacy,Satzinger_2021_toric_code,Bravyi_2024_Nature,2024_Google_Surface_Code}, and has already enabled the realization of non-Abelian topological orders such as the Ising and Fibonacci phases~\cite{Andersen2023Nature_Google_NonAbelian,Xu2024_Fibonacci,minev2024realizingstringnetcondensationfibonacci}. Beyond qubits, recent advances have demonstrated the feasibility of implementing high-fidelity single- and two-qutrit gates in superconducting platforms~\cite{Yurtalan_2020_qutrit,Morvan_2021_qutrit,Goss_2022_NC_qutrit,Luo_2023_PRL_qutrit_Ent,Goss2024_qutrit}, making them a natural fit for realizing the $\mathcal{D}(S_3)$ model. 
Furthermore, our proposal features geometrically local and low-weight syndrome measurements, which align well with the local connectivity constraints inherent in superconducting transmon architectures.

Another promising platform is the trapped-ion system~\cite{2015_Senko_Trapped_ion_Interger,2017_Bermudez_Trapped_Ion_FTQC,Ringbauer_2022_NP_trapped_ion_qudits,Moses_2023_PRX_trapped_ion_high_fid,decross2024computationalpowerrandom_trappedion,edmunds2024constructingspin1haldanephase}, which exhibits exceptionally high fidelities for both single- and two-qubit gates—surpassing $99.997\%$ and $99.8\%$, respectively~\cite{Moses_2023_PRX_trapped_ion_high_fid,decross2024computationalpowerrandom_trappedion}. In recent years, several proposals have demonstrated the feasibility of realizing qutrits in trapped-ion systems~\cite{Ringbauer_2022_NP_trapped_ion_qudits,edmunds2024constructingspin1haldanephase,Meth2025_NP_trapped_ion_gauge_field}, and the qutrit toric code—often considered a precursor to the $\mathcal{D}(S_3)$ quantum double—has already been experimentally implemented~\cite{iqbal2024qutrittoriccodeparafermions}. Furthermore, the first experimental realization of the non-Abelian quantum double $\mathcal{D}(D_4)$ was achieved on a trapped-ion platform~\cite{Iqbal2024Nature_Quantinuum_non_Abelian,Iqbal2024_Feedforward}, suggesting its viability for realizing $\mathcal{D}(S_3)$ as well. Notably, the all-to-all connectivity and high-fidelity measurements inherent to trapped-ion systems make them particularly well-suited to our protocol, which requires numerous ancilla qudits and repeated syndrome measurements.

Our proposal is also compatible with quantum computing platforms based on neutral atoms~\cite{Lukin2021SpinLiquid,Bluvstein2022,Bluvstein_2023,Ma_2021_PRX_Yb,Ma_2023_Nature_Yb_high_fid,ammenwerth2024realizationfasttriplemagicalloptical,evered2025probingtopologicalmatterfermion}. Early demonstrations of high-fidelity quantum operations with qubits in ${}^{87}$Rb~\cite{Lukin2021SpinLiquid,Bluvstein2022,Bluvstein_2023} have inspired ongoing efforts to explore alternative atomic species such as ${}^{171}$Yb~\cite{Ma_2021_PRX_Yb,Ma_2023_Nature_Yb_high_fid,Jia_2024_npj_QI_Yb_high_dim,Nakamura_2024_PRX_dual_Yb,muniz_2024_highfidelityuniversalgates171yb} and ${}^{88}$Sr~\cite{ammenwerth2024realizationfasttriplemagicalloptical}. These atoms possess multiple accessible energy levels, making them natural candidates for realizing qutrit-based protocols~\cite{Jia_2024_npj_QI_Yb_high_dim,ammenwerth2024realizationfasttriplemagicalloptical}. Furthermore, recent advances have demonstrated the feasibility of dual-species neutral atom arrays~\cite{Singh_2022_dual_Elem,Cesa_2023_PRL_Dual_Elem,Nakamura_2024_PRX_dual_Yb,Shen_2024_APS_dual,zhang2025dualtypedualelementatomarrays}, making it possible to simultaneously deploy ${}^{87}$Rb and ${}^{171}$Yb for qubit and qutrit encoding, respectively~\cite{Shen_2024_APS_dual,zhang2025dualtypedualelementatomarrays}. The scalability, intrinsic mobility, and short gate times of neutral atom systems are particularly advantageous for implementing the $\mathcal{D}(S_3)$ quantum double model.

Other platforms, such as ultracold molecules~\cite{Bao_2023_Science_Molecules,Vilas_2024_Molecules,Picard_2024_Nature} and bosonic systems~\cite{Gertler2021_Nature,Xu_2024_PRX_Bosonic,Putterman_2025_Nature}, have also demonstrated the capability to implement high-fidelity quantum gates, making them promising candidates for realizing our proposed architecture. In summary, the simplicity of our circuit model—relying only on low-dimensional systems such as qubits and qutrits—ensures compatibility with a wide range of current universal quantum platforms. We believe this experimental feasibility strengthens the potential of our proposal as a viable route toward realizing fault-tolerant universal quantum computation.

\section{\label{sec:S3_QEC} Remarks on quantum error correction}

During the computational process, errors can happen in each component of the quantum circuits, necessitating active quantum error correction (QEC). 
In this section, we explain how to actually incorporate QEC into the computation by using all the gadgets that we have provided so far.
The error threshold of the QEC algorithm for non-Abelian topological codes can be rather small.
Therefore, we also construct a concatenation encoding scheme by replacing the physical degree of freedom on each edge with another quantum error-correcting code to further suppress the effective physical error rate.

\subsection{Incorporating quantum error correction}
\label{sec:Anyon_Config_Picture}

There have been various algorithms for QEC in non-Abelian topological codes, that are shown to have a finite error threshold either numerically or analytically~\cite{Wotton2014PRX_QEC_S3,Brell_2014_Ising_QEC,Wotton2016PRA_ActiveEC,Hutter_2016_Continuous_Ising, Burton_2017_Fibonacci_QEC,Dauphinais_2017_FT_QEC_non_cyclic,Alexis2022PRX_Threshold_TVCode,schotte2022faulttolerant}.
Most of these discussions work at a phenomenological level, which assumes that errors incoherently generate short anyon pairs. 
Note that the proposed algorithms are also formulated in terms of how to pair up the error anyons as in the surface code~\cite{Fowler2012_SurfaceCode}.
Realistic errors, which occur at the circuit level, are usually represented by qubit and qutrit Pauli operators. These operators create more complicated excitations than in the phenomenological error model for non-Abelian topological codes.
Here, we want to design a syndrome measurement that can convert the circuit-level noise into the phenomenological error model, as it is easier to formulate the QEC algorithm for the latter.
Moreover, the computation algorithm ($U$-model) operates in the anyon configuration picture as well, allowing us to combine QEC and computation more coherently.

Such syndrome measurement is realized by the $\mathcal{M}_K$ measurement, namely we measure the $K_s^{R,C}$ at every site. 
As an example, suppose a single-qubit Pauli $X$ error occurs on a horizontal edge, the subsequent application of $\mathcal{M}_K$ produces non-trivial measurement outcomes at the neighboring vertices $s_1,s_2$ and $s_3$:
\begin{equation}
    \begin{tikzpicture}[baseline={(current bounding box.center)}, scale=0.9]
    \foreach \i in {0,1}
        {\foreach \j in {0,1}{
            \draw[thick] (\i,\j+1) -- (\i,\j);
            \draw[thick] (\i,\j+1) -- (\i+1,\j+1);
            \draw[thick] (\i,\j) -- (\i+1,\j);
            \draw[thick] (\i+1,\j+1) -- (\i+ 1,\j);
            }}
    \draw[ultra thick] (0,1) -- (1,1);
    \draw (0,2) -- (0.25,1.75);
    \draw (1,2) -- (1.25,1.75);
    \foreach \i in {0,1}{           
        \draw (\i,1) -- (0.25+\i,0.75);
    }
    \node at (0.4,1.6) {\scriptsize $s_1$};
    \node at (0.2,0.5) {\scriptsize $s_2$};
    \node at (1.2,0.5) {\scriptsize $s_3$};
    \node at (0.5,1.2) {\scriptsize $X$};
    \end{tikzpicture}
    \ \ \xrightarrow{\mathcal{M}_K}\ \ 
        \begin{tikzpicture}[baseline={(current bounding box.center)}, scale=0.9]
    \foreach \i in {0,1}
        {\foreach \j in {0,1}{
            \draw[thick] (\i,\j+1) -- (\i,\j);
            \draw[thick] (\i,\j+1) -- (\i+1,\j+1);
            \draw[thick] (\i,\j) -- (\i+1,\j);
            \draw[thick] (\i+1,\j+1) -- (\i+ 1,\j);
            }}
    \draw (0,2) -- (0.25,1.75);
    \draw (1,2) -- (1.25,1.75);
    \foreach \i in {0,1}{           
        \draw (\i,1) -- (0.25+\i,0.75);
    }
    \node at (0.4,1.6) {\scriptsize $D$};
    \node at (0.4,0.6) {\scriptsize $D$};
    \node at (1.4,0.6) {\scriptsize $C$};
    \node at (1.4,1.6) {\scriptsize $A$};
    \end{tikzpicture}\;.
\end{equation}
Similarly, for other Pauli errors occurring at the same edge, we have the following results under $\mathcal{M}_K$:
\begin{equation}
\begin{aligned}
    Z &\xrightarrow{\mathcal{M}_K} \{s_1:A, s_2:B, s_3:B\}\;, \\
    Y &\xrightarrow{\mathcal{M}_K} \{s_1:D, s_2:E, s_3:C\}\;, \\
    \hat{X} &\xrightarrow{\mathcal{M}_K} \{s_1:F, s_2:F, s_3:A\}\;, \\
    \hat{Z} &\xrightarrow{\mathcal{M}_K} \{s_1:A, s_2:C, s_3:C\}\;, \\
    \hat{X}\hat{Z} &\xrightarrow{\mathcal{M}_K} \{s_1:F, s_2:G, s_3:C\}\;,
\end{aligned}
\end{equation}
where $\{s_i:\alpha\}$ denotes that an $\alpha$ anyon occurs at $s_i$. 
Therefore, using $\calM_K$, we can map an arbitrary circuit-level noise into an anyon configuration picture and use the existing algorithm developed for the phenomenological error model to perform error correction.

Indeed, QEC in the anyon configuration is sufficient for correcting circuit-level noise. In $\mathcal{D}(S_3)$, ribbon operators generate all possible anyon excitations. Thus, in the anyon configuration picture, correcting anyon errors is equivalent to addressing errors induced by specific $F^{R,C;u,v}_{\rho}$ operators. In Appendix~\ref{appendix:Proof_completeness_anyon_ops}, we prove that the shortest ribbon operators form a complete orthonormal basis. Recall that an error-correcting code that can correct errors in one complete orthonormal basis can correct arbitrary circuit-level noise with the same support~\cite{Nielsen_Chuang_2010}. Therefore, we only need to operate the QEC algorithm in the anyon configuration picture. 

In all the computation tasks, each step of moving a single anyon requires local anyon charge measurement.
Therefore, we can naturally incorporate QEC by always performing the global $\mathcal{M}_K$ measurement.
This enables us to propose a general scheme for doing computation and simultaneously correcting circuit-level noise. 
Given an initial configuration $Q_0$ and the target configuration $Q$, we can iteratively apply the following protocol until reaching the final state
\begin{equation} \label{eqn:basic_step_active_EC}
    \begin{tikzpicture}[baseline={(current bounding box.center)}]
        \draw[-{Stealth}] (3.75,0)   -- (3.75,5);
        \node at (3.95,4.8) {$t$};
        \node at (0,-0.25) {$\cdots$};
        \draw[-{Stealth}] (0,0)   -- (0,0.75);
        \node at (0.65,0.375) {$\mathcal{E}(Q_0)$};
        \node[rectangle,draw,rounded corners,line width=0.8pt] (r) at (0,1) {Syndrome Measurement $\mathcal{M}_K$};
        \draw[-{Stealth}] (0,1.25)   -- (0,2);
        \node at (0.4,1.625) {$Q_1$};
        \node[rectangle,draw,rounded corners,line width=0.8pt] (r) at (0,2.25) {Decode $Q_1$};
        \draw[-{Stealth}] (0,2.5)   -- (0,3.33);
        \node at (1.6,2.9) {Movement pattern};
        \node[rectangle,draw,rounded corners,line width=0.8pt] (r) at (0,3.65) {Apply $F^{R,C}_{\rho_h}$ and $F^{R,C}_{\rho_v}$};
        \draw[-{Stealth}] (0,4.0)   -- (0,4.75);
        \node at (0.65,4.375) {$\mathcal{R}(Q_1)$};
        \node at (0,5) {$\cdots$};
    \end{tikzpicture}
\end{equation}
In each round, $\mathcal{M}_K$ maps the input state $\mathcal{E}(Q_0)$, corrupted by some error channel $\mathcal{E}$, into the anyon configuration $Q_1$. Then we apply the decoding algorithm to determine how to manipulate the anyons and apply $F^{R,C}_{\rho_{h/v}}$ afterward, represented as a recovery operation $\mathcal{R}$. 
The new state $\mathcal{R}(Q_1)$ is fed to the next round as the input. The QEC and computation are deemed successful if the configuration $Q$ is achieved without logical errors.

The actual error threshold for non-Abelian codes can be rather small even if we assume a perfect $\calM_K$ syndrome measurement and only consider errors occurring on idling qudits and in the circuits of the ribbon operators~\cite{Hutter_2016_Continuous_Ising,Dauphinais_2017_FT_QEC_non_cyclic}. Intuitively, since non-Abelian anyons cannot be moved instantaneously, additional errors may occur during this process. On the other hand, applying ribbon operators may propagate a single-site error to two neighboring sites and increase the bare error rate.
Therefore, it is necessary to suppress the physical error rate for each qubit-qutrit pair to a sufficiently small value in order to achieve the error threshold.

\subsection{\label{sec:ECC} Effective error rate suppression via concatenation}

To suppress the physical error rate, we can replace the qubit and qutrit on each edge with a small qubit and qutrit quantum error correcting code, which is called the ``local code".
Intuitively, we can regard this architecture as an analog of the concatenation code, where we concatenate the $\calD(S_3)$ quantum double topological code with local codes defined on each edge. 
We apply all the gates in the circuit construction at the logical level of these local codes.
This method reduces the effective physical error rate to the logical error rate of these codes, which is hopefully below the threshold.
In this section, we provide an explicit construction of the local code based on the qudit Calderbank-Shor-Steane (CSS) code~\cite{Shor1995PRA_QEC,Shor1996PRA_ShorCode,Steane1996PRL_SteaneCode}. 
In particular, it allows a transversal realization for all the necessary Clifford logical gates and a fault-tolerant realization of the controlled charge conjugation gate $C\hat{C}$.

For $p$-dimensional qudits, a $n$-qudit CSS code is defined by two $p$-ary parity check matrices $H_X$ and $H_Z$. 
The rows of $H_X$ and $H_Z$ specify the $X$ and $Z$ stabilizers and thus the code space. 
We use $\mathcal{H}_X$ and $\mathcal{H}_Z$ to denote the linear spaces spanned by the rows of $H_X$ and $H_Z$, respectively.
The code distance is the minimal Hamming weight of the nonzero vectors in $\mathcal{H}_Z^\perp$ and $\mathcal{H}_X^\perp$. 
The dimension of the code subspace is given by the dimension of the coset $\mathcal{H}_Z^\perp/\mathcal{H}_X$, and a basis of the code state reads
\begin{equation} \label{eqn:code_states_CSS}
    \ket{x+\mathcal{H}_X} \equiv  \sum_{y\in \mathcal{H}_X} \ket{x+y}\,,\, x\in \mathcal{H}_Z^\perp/\mathcal{H}_X\,.
\end{equation}
An $[[n,k,d]]$ qudit code uses $n$ physical qudits to encode $k$ logical qudits with a distance $d$.

One important example is the $[[9,1,3]]$ Shor code, which uses 9 qubits to protect a single qubit and is robust against any single-qubit errors~\cite{Shor1995PRA_QEC}. Specifically, the parity check matrices are 
\begin{equation}
    H_X = \begin{bmatrix}
       1 & 1 & 1 & 1& 1 & 1 & 0 & 0 & 0\\
       0 & 0 & 0 & 1 & 1 & 1 & 1 & 1 & 1
   \end{bmatrix}
\end{equation}
and
\begin{equation}
    H_Z = \left[
\begin{array}{c|c|c}
\begin{array}{ccc}
1 & 1 & 0 \\
0 & 1 & 1 \\
\end{array} & 
& \\ \hline &
\begin{array}{ccc}
1 & 1 & 0 \\
0 & 1 & 1 \\
\end{array} & \\\hline   & &
\begin{array}{ccc}
    1 & 1 & 0 \\
    0 & 1 & 1
\end{array}
\end{array}
\right]\;.
\end{equation}
We define $\tilde{0} \equiv 000, \tilde{1} \equiv 111$ and can write the two logical states as
\begin{equation}
\begin{aligned} \label{eqn:logical_states_Shor_code}
    &\ket{0}_L = \ket{\tilde{0}\tilde{0}\tilde{0}}+\ket{\tilde{1}\tilde{1}\tilde{0}}+\ket{\tilde{0}\tilde{1}\tilde{1}}+\ket{\tilde{1}\tilde{0}\tilde{1}}\\
    &\ket{1}_L  = \ket{\tilde{1}\tilde{1}\tilde{1}}+\ket{\tilde{0}\tilde{0}\tilde{1}}+\ket{\tilde{1}\tilde{0}\tilde{0}}+\ket{\tilde{0}\tilde{1}\tilde{0}}\;,
\end{aligned}
\end{equation}
with logical Pauli operators $X_L = X_1\ldots X_9,Z_L = Z_1\ldots Z_9$. 

As discussed in \secref{sec:quantum_double_circuit}, our circuits for anyon manipulation involve basic Pauli's, the qubit CNOT gate, the qutrit $C\hat{X}$ and $\hat{C}$ gates, and the non-Clifford $C\hat{C}$ gate. 
We must realize all these gates at the logical level in a fault-tolerant manner to suppress error propagation. 
It follows from the CSS structure that any qubit code supports transversal logical CNOT gates and any qutrit code supports transversal logical $\hat{C}$ and $C\hat{X}$ gates, which guarantees the fault tolerance automatically.
The remaining task is to properly design a pair of the qubit and qutrit code that also allows a fault-tolerant realization of the non-Clifford logical $C\hat{C}$ gate.

To gain an intuition of our final construction, let us start by examining the most naive idea of constructing a fault-tolerant logical $C\hat{C}$.
Specifically, we consider a qubit and qutrit code by $\mathrm{CSS}(X,H_X;Z,H_Z)$ and $\mathrm{CSS}(\hat{X},\hat{H}_{\hat{X}};\hat{Z},\hat{H}_{\hat{Z}})$ with the code parameters $[[n,1,d_1]]$ and $[[n,1,d_2]]$.
Namely, they only protect a single logical qubit and qutrit, and share the same length $n_1 = n_2 = n$.
In this case, the basis of their logical states reads
\begin{equation} \label{eqn:logical_states_qubit_qutrit_code}
\begin{aligned}
    \ket{\alpha}_L =& \sum_{y \in \mathcal{H}_X} \ket{\alpha x + y} \;,\quad  \alpha \in \bbF_2 \,,\\
    \ket{\hat{\beta}}_L =& \sum_{\hat{y} \in \hat{\mathcal{H}}_{\hat{X}}} \ket{\hat{\beta} \hat{x}+\hat{y}} \;,\quad \hat{\beta} \in \bbF_3 \,,
\end{aligned}
\end{equation}
where $x \in \mathcal{H}_Z^\perp / \mathcal{H}_X$, $\hat{x} \in  \hat{\mathcal{H}}_{\hat{Z}}^\perp / \hat{\mathcal{H}}_{\hat{X}}$ are the generators of the coset spaces.
On the one hand, the logical controlled-charge conjugation gate $C\hat{C}^L$ should act on the logical states as
\begin{equation} \label{eqn:CC_action_logical_states}
    C\hat{C}^L \ket{\alpha}_L \ket{\hat{\beta}}_L = \ket{\alpha}_L \ket{(1+\alpha)\cdot \hat{\beta}}_L\;,
\end{equation}
where the the multiplication between elements in $\mathbb{F}_2$ and $\mathbb{F}_3$ is defined as $\alpha \cdot \hat{\beta}\ \mathrm{mod}\ 3$.
If we naively apply $C\hat{C}$ at the physical level transversally, we have
\begin{equation}
\begin{aligned}
    & C\hat{C}^{\otimes n} \ket{\alpha}_L\ket{\hat{\beta}}_L \\ 
    = & \sum_{y,\hat{y}} \ket{\alpha x + y} \ket{(\hat{\beta}\hat{x} + \hat{y})+(\alpha x + y)\wedge (\hat{\beta}\hat{x} + \hat{y})}
\end{aligned}
\label{eq:naive application}
\end{equation}
where $\wedge$ denotes a component-wise multiplication in $\mathbb{F}_3$, known as the Schur product
\begin{equation}
    (u_1,\cdots,u_n) \wedge (v_1,\cdots,v_n) = (u_1v_1, \cdots,u_nv_n)\;.
\end{equation}
In order for the naive version \eqnref{eq:naive application} to match \eqnref{eqn:CC_action_logical_states}, it is necessary to have $(\alpha x + y)\wedge (\hat{\beta} \hat{x} + \hat{y})$ remains in the space $\hat{\mathcal{H}}_{\hat{X}} + (\alpha-1)\cdot \hat{\beta}\hat{x}$. 

However, this condition cannot be satisfied if we want $\mathrm{CSS}(\hat{X},\hat{H}_{\hat{X}};\hat{Z},\hat{H}_{\hat{Z}})$ to have a maximized code distance.
Indeed, we need $\hat{\calH}_{\hat{X}}$ and $\hat{\calH}_{\hat{Z}}^\perp$ to be the so-called \textit{indecomposable} linear vector spaces, meaning they cannot be expressed as a direct sum of subspaces; otherwise, the code distance is limited by the largest subspaces of them. 
For an indecomposable linear vector space $\mathcal{H} \subseteq \mathbb{F}^n_q$ and any vector $a\in \mathbb{F}^n_q$, $a\wedge \mathcal{H} \subseteq \mathcal{H}$ holds if and only if $a$ is a multiple of the identity vector~\cite{randriambololona2014products}.
Here, $(y+\alpha x)$ cannot be a multiple of the identity vector, as this would trivialize the qubit logical state $\ket{\alpha}_L$.  

Instead, we adjust the code parameters by considering a larger qubit code with $n_1>n_2$ and divide the physical qubits into small groups of $n_2$ qubits, allowing each group to serve as an independent control for the qutrit code.
Specifically, we have the following construction:
\begin{lemma} \label{lemma:logical_CC}
    For any odd integer $n$, an $[[n^2, 1, n]]$ qubit Shor code $\mathrm{CSS}(X,H_X;Z,H_Z)$ and an $[[n,1,d]]$ qutrit CSS code $\mathrm{CSS}(\hat{X},\hat{\mathcal{H}}_{\hat{X}};\hat{Z},\hat{\mathcal{H}}_{\hat{Z}})$ enable a fault-tolerant logical controlled $\hat{C}$ gate given by:
    \begin{equation}
        C\hat{C}^L = C\hat{C}^{\otimes n}_{1\sim n} \hat{\mathcal{R}} C\hat{C}^{\otimes n}_{n+1 \sim 2n} \cdots \hat{\mathcal{R}} C\hat{C}^{\otimes n}_{(n-1)n+1 \sim n^2}\;,
    \end{equation}
    where $C\hat{C}^{\otimes n}_{i\sim j}$ is the transversal $C\hat{C}$ gate controlled by the $i$ through $j$ (a total of $n$) physical qubits over the entire qutrit code, and $\hat{\mathcal{R}}$ represents the error correction operation on the qutrit code. 
\end{lemma}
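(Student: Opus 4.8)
The plan is to verify the two things that ``fault-tolerant logical $C\hat C$'' requires: (i) that the stated operator $C\hat C^L$ implements \eqnref{eqn:CC_action_logical_states} on the logical subspace, and (ii) that it does not spread a single physical fault into an uncorrectable error, the latter being handled by the interleaved rounds of $\hat{\mathcal R}$. For (i), I would decompose the $[[n^2,1,n]]$ qubit Shor code into its $n$ blocks of $n$ qubits each. Write a logical codeword of the qubit code as a superposition of strings $\alpha x + y$ with $x\in\mathcal H_Z^\perp/\mathcal H_X$ and $y\in\mathcal H_X$; the key structural fact of the Shor code is that $\mathcal H_X$ is spanned by ``block-repetition'' vectors, so that within each block the restriction of $\alpha x+y$ is \emph{constant}, equal to some $b_m\in\mathbb F_2$ for block $m$. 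Applying $C\hat C^{\otimes n}_{(m-1)n+1\sim mn}$ with the $m$-th block as the $n$-qubit control over the single qutrit codeword therefore multiplies the qutrit state by the transversal gate $C\hat C^{\otimes n}$ whose control string is the constant vector $b_m\cdot(1,\dots,1)$. Since $(1,\dots,1)$ \emph{is} a multiple of the identity vector, the obstruction from \cite{randriambololona2014products} cited in the text does not apply here: the Schur product of $b_m(1,\dots,1)$ with any qutrit codeword $\hat\beta\hat x+\hat y$ stays in the appropriate coset, so the effect is exactly the logical $C\hat C$ controlled by the bit $b_m$, i.e.\ $\ket{\hat\beta}_L\mapsto\ket{(1+b_m)\hat\beta}_L$ when $b_m=0$ and the identity... (one must track signs carefully: $C\hat C$ acts as $\hat C$ when the control bit is $0$). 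Composing over the $n$ blocks multiplies the qutrit's logical value by $\prod_m (\text{sign from }b_m)$; I would then check that $\sum_m b_m \equiv \alpha \pmod 2$ because the $b_m$ record the parity pattern of $\alpha x + y$ across blocks, so the net charge conjugation is $\hat C^{\sum_m b_m}=\hat C^\alpha$, matching \eqnref{eqn:CC_action_logical_states} (recalling $\hat C^2=\hat I$, equivalently multiplication by $(1+\alpha)$ on $\mathbb F_3$). This is where I would be most careful, and I expect it to be the main obstacle: getting the bookkeeping of which block contributes which power of $\hat C$ exactly right, and confirming that the repetition structure of the $[[n^2,1,n]]$ Shor code makes each block's control constant.

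For (ii), the argument is the standard propagation analysis for transversal gates. A single-qubit fault before or during one $C\hat C^{\otimes n}_{(m-1)n+1\sim mn}$ round can, in the worst case, propagate to the qutrit code (since the qubit is a control), but it touches only \emph{one} qutrit, because $C\hat C$ is transversal qutrit-wise; likewise a single qutrit fault stays on one qutrit and can back-propagate to at most one qubit of the active block. Thus after any one $C\hat C^{\otimes n}$ round the accumulated damage is a weight-$1$ error on the qutrit code plus a weight-$1$ error on the qubit code. Inserting a qutrit error-correction round $\hat{\mathcal R}$ after every block prevents these weight-$1$ qutrit errors from accumulating to weight $\ge d$; the qubit side accumulates at most one fault per block but, since the qubit code is only used as control and its errors are corrected by the surrounding $\calD(S_3)$-level syndrome extraction (or by a final qubit $\mathcal R$), the whole composite gate maps a single fault to a correctable error. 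I would state this as: with at most one fault, the output differs from $C\hat C^L$ applied to the ideal input by an error of weight $\le 1$ on each code, hence correctable, which is the definition of fault tolerance used here.

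Concretely the write-up would proceed: (1) recall the block/repetition description of the $[[n^2,1,n]]$ Shor code and note the restriction-is-constant property; (2) compute the action of one $C\hat C^{\otimes n}$ round on a fixed qubit codeword $\times$ qutrit codeword, using that the control string is a scalar multiple of $(1,\ldots,1)$ so Schur multiplication preserves $\hat{\mathcal H}_{\hat X}$-cosets; (3) compose the $n$ rounds and check the total power of $\hat C$ equals $\alpha$; (4) do the fault-propagation count and invoke that $\hat{\mathcal R}$ after each round keeps qutrit errors below distance while qubit errors stay weight $\le 1$. One auxiliary point to verify en route is that $\hat{\mathcal R}$ acting between rounds commutes (up to correctable error) with the remaining $C\hat C^{\otimes n}$ rounds on the \emph{logical} subspace — this is immediate because $\hat{\mathcal R}$ is the identity on the qutrit code space — so inserting it does not change the logical action established in step (3).
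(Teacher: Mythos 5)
Your proposal follows essentially the same route as the paper's proof: it exploits the block-wise constant (repetition) structure of the $[[n^2,1,n]]$ Shor code so that each round applies $C\hat C^{\otimes n}$ with a constant control string $b_m(1,\dots,1)$, uses the fact that Schur multiplication by a scalar preserves the $\hat{\mathcal H}_{\hat X}$-cosets (the paper's ``$2\hat y$ is equivalent to $\hat y$'' step), counts the parity of all-ones blocks to obtain the net logical action $\hat C^{\sum_m b_m}=\hat C^{\alpha}$, and invokes the preservation of the qutrit code space by each round to justify interleaving $\hat{\mathcal R}$ for fault tolerance, exactly as the paper does (which works out $n=3$ explicitly and notes the general case is analogous). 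One small caution: with the paper's convention $C\hat C\ket{\hat k,l}=\ket{(-1)^{l}\hat k,l}$, charge conjugation is applied when the control bit is $1$, not $0$ as in your parenthetical; using the correct convention is precisely what makes your count give $(1+\alpha)\cdot\hat\beta$ rather than an extra logical $\hat C$.
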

\begin{proof}
It suffices to consider the case with $n=3$. The general case can be proved similarly.\footnote{It follows from the no-cloning theorem that there is no three-qutrit code with a distance larger than $1$. The $n=3$ case is merely a toy example and will not be useful in practice.}
In this case, a logical $C\hat{C}^L$ is
\begin{equation}
\label{eqn:logical_CC_Shor}
    C\hat{C}^L = C\hat{C}_{1\sim 3}^{\otimes 3}\, C\hat{C}^{\otimes 3}_{4\sim 6} \, C\hat{C}^{\otimes 3}_{7\sim 9}\;.
\end{equation}
To verify this claim, we first note that in the $[[9,1,3]]$ Shor code, the qubits in the group $1\sim 3$, $4\sim 6$ and $7\sim 9$ consistently take the bit string values $\tilde{0} = 000$ or $\tilde{1} = 111$, as shown in \eqref{eqn:logical_states_Shor_code}. 
Using $\tilde{\alpha} = \tilde{0}$ or $\tilde{1}$ to denote the restriction of the codeword to qubits $1\sim 3$, the qutrit state after applying $C\hat{C}^{\otimes 3}_{1\sim 3}$ reads
\begin{equation} \label{eqn:action_CC_3_qubits}
    \sum_{\hat{y}\in \hat{\mathcal{H}}_{\hat{X}}}\ket{(\tilde{1}+\tilde{\alpha})\wedge (\hat{y}+\hat{\beta}\hat{x})} = \begin{cases}
        \ket{\hat{\beta}}_L\;, & \tilde{\alpha} = \tilde{0}\\
        \ket{2\hat{\beta}}_L\;, & \tilde{\alpha} = \tilde{1}
    \end{cases}\;,
\end{equation}
where we have used the fact that $2\hat{y}$ is equivalent to $\hat{y}$ when summing over all $\hat{y}\in \hat{\mathcal{H}}_{\hat{X}}$. 
The results for applying $C\hat{C}^{\otimes}_{4\sim 6}$ and $C\hat{C}^{\otimes}_{7\sim 9}$ are similar.
Therefore, by applying $C\hat{C}^{\otimes 3}_{1\sim 3}\, C\hat{C}^{\otimes 3}_{4\sim 6}\, C\hat{C}^{\otimes 3}_{7\sim 9}$, each occurrence of $\tilde{1}$ in the codeword of Shor code induces a mapping $\ket{\hat{\beta}}_L \to \ket{2\hat{\beta}}_L$.   
Using the following arithmetic property
\begin{equation} \label{eqn:arithmetic_F2_F3}
    2^q \ \mathrm{mod}\ 3=\begin{cases}
        1,\quad q\ \mathrm{\ even} \\
        2,\quad q\ \mathrm{\ odd}
    \end{cases}\;,
\end{equation}
we obtain the desired action of $C\hat{C}^L$
\begin{equation}
\begin{aligned}
    C\hat{C}^{\otimes 3}_{1\sim 3}\, C\hat{C}^{\otimes 3}_{4\sim 6}\, C\hat{C}^{\otimes 3}_{7\sim 9} \ket{\alpha}_L\ket{\hat{\beta}}_L =\ket{\alpha}_L \ket{(1 + \alpha) \cdot \hat{\beta}}_L\,.
\end{aligned}
\end{equation}
Note that $C\hat{C}^L$ defined by \eqnref{eqn:logical_CC_Shor} is not transversal and thus not fault-tolerant. 
Fortunately, \eqnref{eqn:action_CC_3_qubits} shows that $C\hat{C}^{\otimes 3}$ preserves the codespace of the qutrit code.
Therefore, we can perform qutrit error correction $\hat{\mathcal{R}}$ on it after each round of $C\hat{C}^{\otimes 3}$ to prevent the single-qutrit errors from propagating. Namely, we should consider
\begin{equation}
    C\hat{C}^L = C\hat{C}_{1\sim 3}^{\otimes 3}\, \hat{\mathcal{R}} \, C\hat{C}^{\otimes 3}_{4\sim 6} \,  \hat{\mathcal{R}} \, C\hat{C}^{\otimes 3}_{7\sim 9}\;
\end{equation}
as our final logical $C\hat{C}^L$ which satisfies all the requirements.
\end{proof}

After the encoding, the effective physical error rate has an exponential scaling with the code distance, i.e., $p_{\mathrm{eff}} \propto \mathrm{poly}(d)p^{\lfloor d/2 \rfloor + 1}$. 
However, the logical $C\hat{C}^L$ gate has a circuit depth proportional to $n\sim \mathrm{poly}(d)$, which causes the fundamental time unit of any gate that involves $C\hat{C}$ to become $n$ times larger after the encoding.
Among the local codes allowed in Lemma~\ref{lemma:logical_CC}, the smallest one is $n=7$, corresponding to a pair of $[[49,1,7]]$ qubit Shor code and $[[7,1,3]]$ Steane qutrit code~\cite{Steane1996PRL_SteaneCode}.
We hope the exponential suppression of the error rate is worth the polynomial overhead. However, it remains to be tested whether this additional encoding is beneficial in practice.

\section{\label{sec:SummaryOutlook} Summary and outlook}
In this work, we propose a comprehensive blueprint for realizing a large-scale anyon-based quantum computer using the $\mathcal{D}(S_3)$ quantum double model. Previous studies have identified $\mathcal{D}(S_3)$ as the minimal non-Abelian topological order that balances preparability, computational power, and error correctability. Building on this foundation, our work integrates these aspects into a unified architecture for a non-Abelian quantum processor in two dimensions, requiring only local connectivity and avoiding magic state distillation. To this end, we address several central challenges, including coherent anyon braiding and measurement in a potentially fault-tolerant manner, experimentally viable circuit designs, and the incorporation of quantum error correction. Taken together, we believe these protocols represent a significant step toward constructing a fault-tolerant universal quantum computer.

From a fundamental standpoint, our braiding and measurement protocols establish a concrete link between quantum fault tolerance and the abstract framework of anyon theory. In particular, our coherent braiding scheme resolves the issue of error propagation along anyon paths—a common source of logical error in non-Abelian topological orders. Moreover, the fact that $S_3$ is a solvable group necessitates the use of fusion operations for universality, which can introduce logical errors if implemented naively. We address this by leveraging anyon interferometry to perform the required charge measurements while maintaining large separations between encoding anyons, thereby preserving the topological protection. These techniques are broadly applicable to other non-Abelian topological orders that rely on anyon braiding and charge measurements, and we believe they offer a valuable contribution to the field of topological quantum computation.

Building on these foundations, we demonstrate how the universal gate set proposed by Cui et al.\ can be implemented remotely using our braiding and measurement protocols. This remote realization preserves noise resilience by exploiting the intrinsic locality of non-Abelian anyons. To make our scheme experimentally accessible, we design a circuit model compatible with all required anyon operations. Crucially, the circuits consist of simple gates and measurements acting on qubits and qutrits, with low-weight syndrome extraction and strictly local connectivity. These features render our model particularly feasible for near-term quantum hardware platforms—including superconducting circuits, trapped ions, and neutral atoms—making it a strong candidate for the experimental realization of universal quantum computation with non-Abelian anyons.

In addition to simulating the $\mathcal{D}(S_3)$ topological order, our proposal establishes a viable design paradigm for anyon-based quantum computing by incorporating quantum error correction at the circuit level. We show that generic circuit-level noise can be mitigated using anyon manipulations, reinforcing the robustness of the system. To further suppress the effective error rate, we introduce a concatenation scheme that replaces physical degrees of freedom with local codes, enhancing the scalability of the circuit model.

In summary, this work presents a top-down architecture for an anyon-based quantum computer that offers several distinct advantages: universality, fault tolerance, geometric locality, and experimental feasibility. We anticipate that this proposal can be realized in practice and may serve as a foundation for the eventual construction of large-scale quantum computers. We hope it will stimulate interest across multiple communities—ranging from theory to experiment—and accelerate the development of a topological quantum computer.

One important future direction is to examine quantum error correction more carefully and to assess the resource requirements specific to this proposal.
For example, one can probe the optimal error threshold by mapping the anyon error correction problem onto a statistical physics model and analyzing its phase transition~\cite{Fan2023PRXQuantum,sala2024decoherencewavefunctiondeformationd4,sala2024stabilityloopmodelsdecohering}.
In parallel, developing efficient decoding algorithms, such as minimum-weight perfect matching tailored to $\mathcal{D}(S_3)$, and investigating the corresponding error thresholds will be important—topics we hope to explore in future work.
Building on these results, one can carry out comprehensive resource estimation, as has been done for the surface code~\cite{beverland2022assessingrequirementsscalepractical,Gidney_2021}, color code~\cite{Beverland_2021}, and Fibonacci qubits~\cite{Baraban_2010}, to evaluate the practical advantages of our proposal.

In \secref{sec:adaptive_movement}, we choose a linear-depth adaptive protocol to move anyons. Given the existence of proposals for constant-depth implementations of $\mathcal{D}(G)$ for solvable groups $G$ using adaptive circuits~\cite{verresen2022efficiently,Tantivasadakarn2023PRXQuantum_Hierarchy,bravyi2022adaptive}, it would be interesting to explore whether our linear-depth protocol can be improved to a constant-depth version~\cite{ren2024efficientpreparationsolvableanyons,lyons2024protocolscreatinganyonsdefects}
while maintaining its robustness against errors.
Achieving this would significantly reduce the intrinsic overhead associated with non-Abelian anyons.

In \secref{sec:ECC}, we construct a family of quantum error-correcting codes that incorporate a fault-tolerant, non-Clifford logical $C\hat{C}$ gate between a logical qubit and qutrit.
This raises the intriguing possibility of future exploration into magic-state distillation involving systems of unequal dimensions—extending beyond existing protocols~\cite{Bravyi2005PRA_MSD,Bravyi2012PRA_MSD_LowOverhead,Haah2018codesprotocols,Campbell2012PRX_MDS_PrimeDim,Hastings2018PRL_MSD_sublogrithmic,Campbell2017PRA_unified_MSD}.
Another practical question is the search for codes supporting a logical $C\hat{C}$ that require fewer physical qubits/qutrits, thereby reducing resource overhead, or that allow for a lower-depth implementation of $C\hat{C}$, potentially improving the error threshold.
Additionally, it is natural to generalize the structure of concatenating a topological code with local codes to other non-Abelian topological orders, which would require extending our constructions to more general qudit stabilizer codes. We leave this for future work.

Finally, it may be valuable to explore the circuit complexity of computation based on the $U$-model and to search for efficient quantum algorithms realizable within $\mathcal{D}(S_3)$. Optimizing compilation strategies tailored to this universal gate set is also an important direction for future research.

\section*{Author Contributions}
L.C. formulated the main protocols of universal quantum computation, conducted the primary proof of the qubit-qutrit error-correcting codes, and constructed the circuit realization of $\mathcal{D}(S_3)$. Y.R. contributed to the development of anyon ribbon error correction and probabilistic ribbon movement techniques. R.F. formulated the basic idea of the anyon interferometry and contributed to the design of quantum error correction. A.J. contributed through discussions and insights, particularly in the formulation of the protocols and refinement of the theoretical proof. All authors contributed to writing the manuscript.

\begin{acknowledgments}
We thank Kaifeng Bu, Shawn Xingshan Cui, Michael Freedman, Hong-Ye Hu, Seth Lloyd, Chiu Fan Bowen Lo, Anasuya Lyons,  Nathanan Tantivasadakarn, Ruben Verresen, and Ashvin Vishwanath for their insightful discussion. L.C. thanks Zhiyang He, Issac Kim, Quynh T. Nguyen, Zijian Song, Yi Tan, and Chen Zhao for an insightful discussion on the qutrit-qubit codes.
R.F. thanks Yimu Bao for helpful feedback on the manuscript.
This work was supported in part by the Army Research Office Grant W911NF-19-1-0302, and by the Army Research
Office MURI Grant W911NF-20-1-0082.
This material is partially based upon work supported by the U.S.
Department of Energy, Office of Science, National Quantum Information
Science Research Centers, Quantum Systems Accelerator, under Grant
number DOE DE-SC0012704. 
R.F. is supported by the Gordon and Betty Moore Foundation (Grant GBMF8688).
While completing this manuscript, we became aware of an independent upcoming work by C.F.B. Lo et al.~\cite{lo2024universal} demonstrating universal gate set in $S_3$ quantum double using a different logical encoding. We thank the authors for informing us their work.

\end{acknowledgments}

\bibliography{ref.bib}

\onecolumngrid
\newpage
\appendix
\setcounter{secnumdepth}{2}

\section{Quantum double algebra and lattice realization}\label{appendix:quantum_double}

\subsection{Drinfeld quantum double \texorpdfstring{$\mathcal{D}(G)$}{D(G)}} \label{appendix:Drinfeld_quantum_double}
\begin{table*}[tb]
\centering
    \begin{tabular}{|c|c|c|c|c|c||c||c|}
    \hline
        $C_k$ & $\{c: c \in C_k\}$ & $g_k \in C_k$ & $\{\tau_c:c\in C_k\}$ & $Z_k$ & $R$ & Basis of $V_{(R,C)}$ & Anyon Type \\ 
        \hline
        $C_1$ & $\{e\}$ & $e$ & $\{e\}$ & $S_3$ & $[+]$ & $\ket{+}$ & A  \\ 
        & & & & & $[-]$ & $\ket{-}$ & B \\
        & & & & & $[2]$ & $\ket{2_+},\ket{2_-}$ & C
        \\ 
        \hline
        $C_2$ & $\{(12),(31),(23)\}$ & $\sigma$ & $\{e,\mu^2,\mu\}$ & $\{e,\sigma\} \simeq \mathbb{Z}_2$ & $[+]$ & $\ket{\sigma},\ket{\mu\sigma},\ket{\mu^2\sigma}$& D \\
        & $\{\sigma,\mu\sigma,\mu^2\sigma\}$ & & & & $[-]$ & $\ket{\sigma,-},\ket{\mu\sigma,-},\ket{\mu^2\sigma,-}$ & E\\\hline
        $C_3$ & $\{(123),(132)\}$ & $\mu$ & $\{e,\sigma\}$ & $\{e,\mu,\mu^2\} \simeq \mathbb{Z}_3$ & $1$ & $\ket{\mu},\ket{\mu^2}$ & F \\
        & $\{\mu,\mu^2\}$ & & & & $\omega$ & $\ket{\mu,\omega},\ket{\mu^2,\omega}$ & G\\
        & & & & & $\bar{\omega}$ & $\ket{\mu,\bar{\omega}},\ket{\mu^2,\bar{\omega}}$ & H \\\hline
    \end{tabular}
    \caption{The conjugacy classes $C_k$ and centralizers $Z_k$ of $S_3$, along with the representation $(R,C)$ of $\mathcal{D}(S_3)$. For each $C_k$, the elements $c$ and $\tau_c$ (where $c = \tau_c g_k \bar{\tau}_c$) are listed in corresponding positions. The corresponding anyon types $A\sim H$ in the quantum double model $\mathcal{D}(S_3)$ are listed in the last column.}
    \label{tab:Rep_D_S_3}
\end{table*}

This section is a lightening review of the definition and representation of the Drinfeld quantum double $\mathcal{D}(G)$~\cite{bakalov2001lectures}. For a finite group $G$, the group algebra $A = \mathbb{C}G$ and its dual space $A^*$ are both Hopf algebras, where $A^*$ is spanned by the linear functional $\delta_g \in \mathrm{Hom}(A,\mathbb{C})$, with $\delta_g(h) = \delta_{g,h}$, and $\delta_{g,h}$ is the Kronecker delta for two group elements $g,h\in G$.  The Drinfeld quantum double $\mathcal{D}(G)$ endows the $A^* \otimes A$ with the structure of a Hopf algebra. $\mathcal{D}(G)$ has the $\mathbb{C}$-basis $\{\delta_h g: g,h \in G\}$, with the multiplication given by
\begin{equation} \label{eqn:quantum_double_mult}
    (\delta_{h_1} g_1)\cdot (\delta_{h_2}g_2) = \delta_{h_1,g_1h_2\bar{g}_1}\delta_{h_1}  g_1g_2\;,
\end{equation}
as well as the comultiplication $\Delta(\delta_h g) = \sum_{h_2h_1 = h} \delta_{h_1} g\otimes \delta_{h_2} g$, unit $(\sum_{h}\delta_h) e\in \mathcal{D}(G)$, counit $\epsilon(\delta_h g) = \delta_{h,e}$, and the antipode $S(\delta_h g) = \bar{g}\delta_{\bar{h}} =  \delta_{\bar{g}\bar{h}g} \bar{g}$. It is straightforward to verify that these operations satisfy the commutative diagrams of Hopf algebra \cite{bakalov2001lectures}. 

The group $G$ is partitioned into disjoint union of conjugacy classes as $G = \cup_k C_k$, where $C_k \equiv C(g_k) = \{hg_k\bar{h}: h \in G\}$, with $g_k \in G$ being the representative of the class $C_k$. For every $c \in C_k$, one can select an element $\tau_c \in G$ such that $\tau_c g_k \bar{\tau}_c = c$, and we choose $\tau_{g_k} = e$. For each $C_k$, the centralizer is denoted by $Z_k \equiv Z(C_k) = Z(g_k) = \{h\in G:hg_k = g_k h\}$. Notice that for any $g \in G$ and $c\in C_k$, we have $\bar{\tau}_{gc\bar{g}} g \tau_c \in Z_k$. 

The irreducible representation (irrep) of $\mathcal{D}(G)$ is given by a tuple $(R,C)$, where $C$ is one of the conjugacy classes of $G$ with representative $g_k$, and $R$ is an irrep of $Z_k$ over the vector space $V_R = \{\ket{j}, 1\leq j \leq |R|\}$. The Hilbert space corresponding to $(R,C)$ is given by $V_{(R,C)} = \mathbb{C}G \otimes V_R$, with the basis $\{\ket{c}\otimes \ket{j}: c\in C, 1 \leq j \leq |R|\}$. The elements of $\mathcal{D}(G)$ act on $V_{(R,C)}$ as:
\begin{align} \label{eqn:irrep_action}
    \delta_h \ket{c}\otimes \ket{j} &= \delta_{h,c}\ket{c}\otimes \ket{j}\;,\notag\\
    g\ket{c}\otimes \ket{j} &= \ket{gc\bar{g}}\otimes \sum_{i}\Gamma^{R}_{ij}(\bar{\tau}_{gc\bar{g}} g \tau_c) \ket{i}\;.
\end{align}

The conjugacy classes and centralizers of $S_3$, as well as the representation of $\mathcal{D}(S_3)$ are listed in Table~\ref{tab:Rep_D_S_3}. As defined in the main text (see Table~\ref{tab:conjugacy_irreps}), the anyon types $A\sim H$ and their local degrees of freedom $u=(c,j),v=(c^\prime,j^\prime)$ (see \eqref{eqn:F_RC}) are also listed in the last two columns of Table~\ref{tab:Rep_D_S_3}.

\subsection{Ribbon operator algebra}\label{appendix:ribbon_algebra}

In this section, we review some necessary ribbon algebra. A ribbon $\rho$ comprises a set of sites between the beginning and ending sites $s_0$ and $s_1$, as shown in Figure~\ref{fig:ops_quantum_double}(a). Equivalently, a ribbon is composed of a sequence of triangles. Each triangle is formed by two neighboring sites (dashed lines) and a third edge that connects the points not shared by the two sites. This third edge can be either a direct edge (solid line) or a dual edge (short-dashed line). Depending on the type of this edge, the triangle is classified as either a direct triangle $\tau^\prime$ or a dual triangle $\tau$. On each triangle, the ribbon operator acts as a $T$ or $L$ operator as follows:
\begin{equation} \label{eqn:ribbon_op_triangle}
    F^{h,g}_{\tau^\prime} \equiv T^{g}_{\tau^\prime},\quad F^{h,g}_\tau \equiv \delta_{e,g} L^{h}_\tau\;,
\end{equation}
following the rules illustrated in the right panel of Figure~\ref{fig:ops_quantum_double}(b). If a ribbon $\rho$ is separated by an intermediate site $s$, which is the ending site of $\rho_1$ and the starting site of $\rho_2$, we denote it as $\rho = \rho_1\rho_2$. When acting on $\rho = \rho_1\rho_2$, the ribbon operator $F^{h,g}_\rho$ obeys the comultiplication (or gluing) rule as
\begin{equation} \label{eqn:comult_rule}
    F^{h,g}_\rho = \sum_{m \in G} F^{h,m}_{\rho_1}F^{\bar{m}hm,\bar{m}g}_{\rho_2}\;.
\end{equation}
Therefore, for a long ribbon $\rho$, one can always write the ribbon operator $F^{h,g}_\rho$ by decomposing $\rho$ into a sequence of triangles using the comultiplication rule \eqref{eqn:comult_rule} recursively and applying \eqref{eqn:ribbon_op_triangle} for each triangle. As an example, consider the ribbon in Figure~\ref{fig:ops_quantum_double}(a) with the following triangles:
\begin{equation}
        \begin{tikzpicture}[baseline={(current bounding box.center)}]
    \foreach \i in {0,1,...,2}
        {\foreach \j in {0,1}{
            \draw[thick] (\i,\j+1) -- (\i,\j);
            \draw[thick] (\i,\j+1) -- (\i+1,\j+1);
            \draw[thick] (\i,\j) -- (\i+1,\j);
            \draw[thick] (\i+1,\j+1) -- (\i+ 1,\j);
            }}
    \foreach \i in {0,1}{
        \draw (0.5 + \i,0.5) -- (1+\i,1);
        \draw (0.5+ \i,0.5) -- (1.5+\i,0.5);
    }
    \foreach \i in {0,1,2}{           
        \draw (\i,1) -- (0.5+\i,0.5);
    }
    \draw (2.5,0.5) -- (2.5,1.5);
    \draw (2,1) -- (2.5,1.5);
    \draw (2,2) -- (2.5,1.5);
    \node at (0.2,0.5) {$s_0$};
    \node at (2.5,1.8) {$s_1$};
    \node at (2.7,0.8) {$\tau_5$};
    \node at (2.18,1.5) {$\tau_6^\prime$};
    \node at (1.8, 1.7) {$g_3$};
    \foreach \i in {1,2}{
        \pgfmathtruncatemacro{\j}{2*\i}
        \node at (\i-0.2, 0.3) {$\tau_\j$};
    }
    \foreach \i in {1,2}{
        \pgfmathtruncatemacro{\j}{2*\i-1}
        \node at (\i-0.53, 0.8) {$\tau_\j^\prime$};
    }
    \foreach \i in {1,2}{
    \node at (\i-0.5,1.2) {$g_\i$};
    }
    \end{tikzpicture} \notag
\end{equation}
Then the ribbon operator $F^{h,g}_\rho$ acts as
\begin{equation}
    \sum_{\{g_i\}} \delta_{g,g_1g_2\bar{g}_3} T^{g_1}_{\tau_1^\prime} L^{\bar{g}_1 h g_1}_{\tau_2}T^{g_2}_{\tau_3^\prime} L^{(\bar{g}_2\bar{g}_1) h(g_1g_2) }_{\tau_4} L^{(\bar{g}_2\bar{g}_1) h(g_1g_2) }_{\tau_5} T^{g_3}_{\tau_6^\prime}\;,
\end{equation}
as illustrated in Figure~\ref{fig:ops_quantum_double}(a). 

\section{Additional details on the universal quantum computation protocol based on \texorpdfstring{$\mathcal{D}(S_3)$}{D(S3)}} \label{appendix:UQC}

In this section, we provide supplementary mathematical derivations for the protocols developed in \secref{sec:UQC}. 

\subsection{Review of anyon diagrammatics}

Anyon theories are described using unitary modular tensor categories~\cite{Kitaev2006Annals_AnyonTheory}. Here, we review some necessary anyon diagrammatics for a self-contained derivation. 
For the remainder of this section, we assume that the fusion multiplicity is not larger than 1, i.e., for all $\alpha$, $\beta$, and $\gamma$, we have $N^\gamma_{\alpha\beta} \leq 1$, which is sufficient for our discussion of quantum double.
Our discussion involves the following key identities
\begin{itemize}
\item The first one is the resolution of the identity:
\begin{equation} \label{eqn:resolution_identity}
    \begin{tikzpicture}[baseline = {(current bounding box.center)}]
        \draw (-1,0) -- (-1,1.4);
        \draw (-0.8,0) node {$\alpha$};
        \draw (0,0) -- (0,1.4);
        \draw (0.2,0) node {$\beta$};
    \end{tikzpicture} = \sum_{\gamma} N^\gamma_{\alpha\beta} \sqrt{\frac{d_\gamma}{d_\alpha d_\beta}}\ 
    \begin{tikzpicture}[baseline = {(current bounding box.center)}]
        \draw (-0.8,1.4) -- (-0.4,1);
        \draw (0,1.4) -- (-0.4,1);
        \draw (-0.4,1) -- (-0.4,0.4);
        \draw (-0.4,0.4) -- (-0.8,0);
        \draw (-0.4,0.4) -- (0,0);
        \draw (-1,0) node {$\alpha$};
        \draw (0.2,0) node {$\beta$};
        \draw (-1,1.4) node {$\alpha$};
        \draw (0.2,1.4) node {$\beta$};
        \draw (-0.2,0.7) node {$\gamma$};
    \end{tikzpicture}\;,
\end{equation}
where $\gamma$ sums over all anyons in the theory, and the normalization factor ensures isotopic invariance \cite{Bonderson_2008_Interferometry}, i.e., invariant under topologically trivial deformation of the worldlines. This identity allows the insertion of an anyon worldline at any point in the diagram, facilitating subsequent calculations. 

\item The second identity is the charge conservation:
\begin{equation} \label{eqn:charge_conservation}
    \begin{tikzpicture}[baseline = {(current bounding box.center)}]
        \draw (0,0) -- (0,0.4);
        \draw (0,1.0) -- (0,1.4);
        \draw (0,0.4) .. controls (-0.3,0.5) and (-0.3,0.9) .. (0,1);
        \draw (0,0.4) .. controls (0.3,0.5) and (0.3,0.9) .. (0,1);
        \draw (0.25,0) node {$\gamma$};
        \draw (0.25,1.4) node {$\gamma^\prime$};
        \draw (-0.45,0.7) node {$\alpha$};
        \draw (0.45,0.7) node {$\beta$};
    \end{tikzpicture} = \delta_{\gamma\gamma^\prime} \sqrt{\frac{d_\alpha d_\beta}{d_\gamma}}\ \ \ 
    \begin{tikzpicture}[baseline = {(current bounding box.center)}]
        \draw (0,0) -- (0,1.4);
        \draw (0.2,0) node {$\gamma$};
    \end{tikzpicture}\;,
\end{equation}
where the normalization factor ensures isotopic invariance. Physically, this identity states that if a $\gamma$ anyon is split into an $\alpha$ and a $\beta$ anyon, fusing these two anyons will necessarily result in the original $\gamma$ anyon, reflecting the charge conservation.

\item The third identity is the $R$-move:
\begin{equation} \label{eqn:R_move}
    \begin{tikzpicture}[baseline={(current bounding box.center)}]
        \draw (0,-0.5) -- (0,-0.2);
        \draw (0.2,-0.5) node {$\gamma$};
        \draw (0,-0.2) .. controls (0.4,0.1) and (0.4,0.2).. (0.1,0.27);
        \draw (-0.15,0.35) -- (-0.5,0.5);
        \draw (-0.7,0.5) node {$\beta$};
        \draw (0,-0.2) .. controls (-0.4,0.1) and (-0.5,0.2) .. (0.5,0.5);
        \draw (0.7,0.5) node {$\alpha$};
    \end{tikzpicture}
    =R_{\gamma}^{\alpha\beta}
    \begin{tikzpicture}[baseline={(current bounding box.center)}]
        \draw (0,-0.5) -- (0,0);
        \draw (0.2,-0.5) node {$\gamma$}; 
        \draw (0,0) -- (-0.5,0.5);
        \draw (0,0) -- (0.5,0.5);
        \draw (-0.7,0.5) node {$\beta$};
        \draw (0.7,0.5) node {$\alpha$};
    \end{tikzpicture}\;,
\end{equation}
which indicates that a topologically non-trivial braiding in the anyon diagram can be resolved by introducing a phase factor, known as the $R$-symbol $R^{\alpha\beta}_\gamma$. A unitary theory should satisfy $\left(R^{\alpha\beta}_{\gamma}\right)^{-1} = \left(R^{\alpha\beta}_{\gamma}\right)^*$.

\item The fourth identity is the $F$-move, which can be expressed in two equivalent forms:
\begin{equation} \label{eqn:F_move}
    \begin{tikzpicture}[baseline={(current bounding box.center)}]
        \draw (0,0.1) -- (0,0.5);
        \draw (0,0.5) -- (-0.4,0.9);
        \draw (-0.4,0.9) -- (-0.8,1.3);
        \draw (-0.4,0.9) -- (0,1.3);
        \draw (0,0.5) -- (0.8,1.3);
        \draw (0.2,0.1) node {$\eta$};
        \draw (-0.8,1.5) node {$\alpha$};
        \draw (0,1.5) node {$\beta$};
        \draw (0.8,1.5) node {$\gamma$};
        \draw (-0.35,0.5) node {$\mu$};
    \end{tikzpicture} = \sum_{\nu} [F^{\alpha\beta \gamma}_\eta]_{\mu\nu} 
    \begin{tikzpicture}[baseline={(current bounding box.center)}]
        \draw (0,0.1) -- (0,0.5);
        \draw (0,0.5) -- (0.4,0.9);
        \draw (0.4,0.9) -- (0.8,1.3);
        \draw (0.4,0.9) -- (0,1.3);
        \draw (0,0.5) -- (-0.8,1.3);
        \draw (0.2,0.1) node {$\eta$};
        \draw (-0.8,1.5) node {$\alpha$};
        \draw (0,1.5) node {$\beta$};
        \draw (0.8,1.5) node {$\gamma$};
        \draw (0.35,0.5) node {$\nu$};
    \end{tikzpicture}\;,\quad
    \begin{tikzpicture}[baseline={(current bounding box.center)}]
        \draw (-0.6,0.6) -- (-0.4,0);
        \draw (-0.6,-0.6) -- (-0.4,0);
        \draw (0.6,0.6) -- (0.4,0);
        \draw (0.6,-0.6) -- (0.4,0);
        \draw (-0.4,0) -- (0.4,0);
        \draw (-0.8,-0.6) node {$\alpha$};
        \draw (-0.8,0.6) node {$\beta$};
        \draw (0.8,-0.6) node {$\eta$};
        \draw (0.8,0.6) node {$\gamma$};
        \draw (0,0.2) node {$\mu$};
    \end{tikzpicture} = \sum_{\nu} [F^{\alpha\beta \gamma}_\eta]_{\mu\nu}     \begin{tikzpicture}[baseline={(current bounding box.center)}]
        \draw (-0.6,0.6) -- (0,0.4);
        \draw (-0.6,-0.6) -- (0,-0.4);
        \draw (0.6,0.6) -- (0,0.4);
        \draw (0.6,-0.6) -- (0,-0.4);
        \draw (0,-0.4) -- (0,0.4);
        \draw (-0.8,-0.6) node {$\alpha$};
        \draw (-0.8,0.6) node {$\beta$};
        \draw (0.8,-0.6) node {$\eta$};
        \draw (0.8,0.6) node {$\gamma$};
        \draw (0.2,0) node {$\nu$};
    \end{tikzpicture}\;.
\end{equation}
The $F$-move describes the basis transformation of internal states within a fusion tree. 
In a unitary theory, $[F^{\alpha\beta\gamma}_{\eta}]_{\mu\nu}$ must be a unitary matrix, and it should satisfy the pentagon and hexagon equation together with the $R$-move.
\end{itemize}

Given any fusion tree state, the quantum trace $\widetilde{\mathrm{Tr}}$ is defined by connecting the outgoing lines with their corresponding incoming lines, thereby closing the diagram by anyon loops.
It follows from setting $\gamma^\prime = \gamma = 1$ (the trivial anyon) in \eqref{eqn:charge_conservation} that
\begin{equation} \label{eqn:anyon_loop}
    \widetilde{\mathrm{Tr}}[\begin{tikzpicture}[baseline = {(current bounding box.center)}]
        \draw (-0.5,0.4) .. controls (-0.2,0) and (0.2,0) .. (0.5,0.4);
        \draw (-0.5,-0.4) .. controls (-0.2,0) and (0.2,0) .. (0.5,-0.4);
        \draw (-0.7,0.4) node {$\alpha$};
        \draw (0.7,0.4) node {$\alpha$};
        \draw (-0.7,-0.4) node {$\alpha$};
        \draw (0.7,-0.4) node {$a$};
    \end{tikzpicture}] = \begin{tikzpicture}[baseline = {(current bounding box.center)}]
        \draw (-0.5,0.4) .. controls (-0.2,0) and (0.2,0) .. (0.5,0.4);
        \draw (-0.5,-0.4) .. controls (-0.2,0) and (0.2,0) .. (0.5,-0.4);
        \draw (-0.5,0.4) .. controls (-0.6,0.6) and (-0.73, 0.6) .. (-0.75,0) .. controls (-0.73, -0.6) and (-0.6,-0.6) .. (-0.5,-0.4);
        \draw (0.5,0.4) .. controls (0.6,0.6) and (0.73, 0.6) .. (0.75,0) .. controls (0.73, -0.6) and (0.6,-0.6)  .. (0.5,-0.4);
        \draw (-0.25,0.4) node {$\alpha$};
    \end{tikzpicture} =
    \begin{tikzpicture}[baseline = {(current bounding box.center)}]
        \draw[fill=none](0,0) circle (0.4);
        \draw (-0.6, 0) node {$\alpha$};
    \end{tikzpicture} = d_\alpha\;.
\end{equation}
Therefore the actual trace $\mathrm{Tr}$ of a state should be defined by
\begin{equation} \label{eqn:example_trace}
    \mathrm{Tr}[\begin{tikzpicture}[baseline = {(current bounding box.center)}]
        \draw (-0.5,0.4) .. controls (-0.2,0) and (0.2,0) .. (0.5,0.4);
        \draw (-0.5,-0.4) .. controls (-0.2,0) and (0.2,0) .. (0.5,-0.4);
        \draw (-0.7,0.4) node {$\alpha$};
        \draw (0.7,0.4) node {$\alpha$};
        \draw (-0.7,-0.4) node {$\alpha$};
        \draw (0.7,-0.4) node {$\alpha$};
    \end{tikzpicture}] = \frac{1}{d_\alpha} \widetilde{\mathrm{Tr}}[\begin{tikzpicture}[baseline = {(current bounding box.center)}]
        \draw (-0.5,0.4) .. controls (-0.2,0) and (0.2,0) .. (0.5,0.4);
        \draw (-0.5,-0.4) .. controls (-0.2,0) and (0.2,0) .. (0.5,-0.4);
        \draw (-0.7,0.4) node {$\alpha$};
        \draw (0.7,0.4) node {$\alpha$};
        \draw (-0.7,-0.4) node {$\alpha$};
        \draw (0.7,-0.4) node {$a$};
    \end{tikzpicture}] =\frac{1}{d_\alpha}\ \begin{tikzpicture}[baseline = {(current bounding box.center)}]
        \draw (-0.5,0.4) .. controls (-0.2,0) and (0.2,0) .. (0.5,0.4);
        \draw (-0.5,-0.4) .. controls (-0.2,0) and (0.2,0) .. (0.5,-0.4);
        \draw (-0.5,0.4) .. controls (-0.6,0.6) and (-0.73, 0.6) .. (-0.75,0) .. controls (-0.73, -0.6) and (-0.6,-0.6) .. (-0.5,-0.4);
        \draw (0.5,0.4) .. controls (0.6,0.6) and (0.73, 0.6) .. (0.75,0) .. controls (0.73, -0.6) and (0.6,-0.6)  .. (0.5,-0.4);
        \draw (-0.25,0.4) node {$\alpha$};
    \end{tikzpicture} = 1\;,
\end{equation}
where the factor $1/d_\alpha$ accounts for the normalization properly.

\subsection{The protocol for moving \texorpdfstring{$D$}{D} and \texorpdfstring{$E$}{E} anyons} \label{appendix:movement_D_E}

\begin{figure*}
    \centering
    \includegraphics[width = 0.95\textwidth]{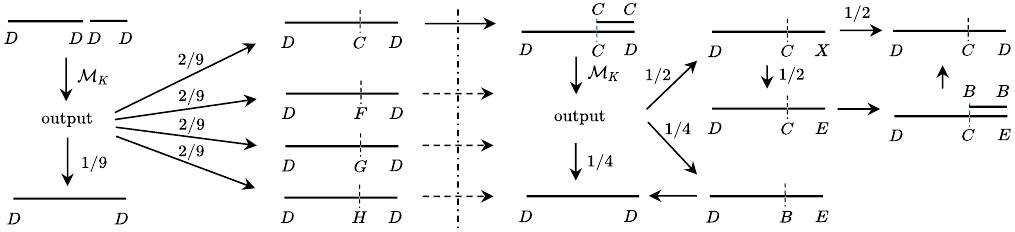}
    \caption{Protocol for moving a type $D$ (non-Abelian) anyon one site to the right.}
    \label{fig:anyon_movement_D}
\end{figure*}

Since $D$ and $E$ anyons have similar fusion rules, we explicitly demonstrate the protocol for moving a $D$ anyon one site to the right, as shown in Figure~\ref{fig:anyon_movement_D}. The corresponding protocol for the $E$ anyon is developed by exchanging $D$ and $E$ labels. The protocol consists of two subroutines, separated by the vertical dashed line in the figure.

We start at the uppermost left part of the first subroutine (left panel), where we extend the long $D$-line by appending the shortest horizontal $D$ ribbon to move the $D$ anyon located at its right end. According to the fusion rule $D \times D = A + C + F + G + H$ and the quantum dimensions $d_A = 1$, $d_C = d_F = d_G = d_H = 2$, the $\mathcal{M}_K$ measurement has five possible outcomes with the corresponding probabilities labeled in the left panel of the figure.

If the outcome is $A$, with probability $1/9$, we successfully move the $D$ anyon. If the outcome is $C$, with probability $2/9$, we proceed to the second subroutine (right panel), where we try to annihilate the intermediate $C$ anyon by applying the shortest horizontal $C$ ribbon. The $\mathcal{M}_K$ measurement has three outcomes: if the outcome is $A$ or $B$, i.e., the Abelian anyons, the configuration can always be corrected to the desired one, as shown in the lower part of the right panel. When the outcome is $C$, with probability $1/2$, the configuration can be corrected to start over the right subroutine, as shown in the upper part of the right panel.

If the fusion outcome of the first subroutine is $F$, $G$, or $H$, one can proceed to the second subroutine by replacing all $C$ labels with $F$, $G$, or $H$ labels, indicated by the dashed arrows in the figure. Therefore, for $n$ rounds of $\mathcal{M}_K$ measurement, the failure probability is $(8/9) \cdot (1/2)^{n-1}$, leading to a success probability of $1 - (8/9) \cdot (1/2)^{n-1}$ as described in the main text.

\subsection{Calculation of the amplitude \texorpdfstring{$I_{x;z,w}$}{Ixyz}}
\label{appendix:Anyon_Interferometry}
We derive the amplitude $I_{x;z,w}$ for the anyon interferometry protocol shown in Figure~\ref{fig:FT_measurements}(a). By isotopic invariance, we deform the $x$ anyon worldlines in the second subfigure to obtain:
\begin{equation}
    \begin{tikzpicture}[baseline = {(current bounding box.center)}]
        \draw (-0.5,1.2) .. controls (-0.7,1.1) and (-1.0,0.9) .. (-1.0,0.6) .. controls (-1.0,0.3) and (-0.7,0.1) .. (-0.5,0) .. controls (-0.3,0.1) and (0.1,0.4) .. (-0.11,0.78);
        \draw (-0.25,0.92) .. controls (-0.28,1.0) and (-0.38,1.1) .. (-0.5,1.2);
        \draw (-0.5,1.2) -- (-0.5,1.5);
        \draw (0.5,1.2) .. controls (0.5,0.5) and (0.3,0) .. (0,0) .. controls (-0.1,0.1) .. (-0.16,0.16);
        \draw (-0.28,0.28) .. controls (-0.5,0.6) and (-0.22,0.85).. (-0.18,0.85) .. controls (-0.05,0.95) and (0,1.1) .. (0.05,1.2);
        \draw (-1.2,0.7) node {$z$};
        \draw (0.7,0.7) node {$x$};
        \draw (-0.7,1.5) node {$w$};
    \end{tikzpicture}
    =\begin{tikzpicture}[baseline = {(current bounding box.center)}]
        \draw (-0.5,1.2) .. controls (-0.7,1.1) and (-1.0,0.9) .. (-1.0,0.6) .. controls (-1.0,0.3) and (-0.7,0.1) .. (-0.5,0) .. controls (-0.4,0.1) .. (-0.4,0.2);
        \draw (-0.4,0.2) -- (-0.4,0.5);
        \draw (-0.5,1.2) -- (-0.5,1.5);
        \draw (0.5,1.2) .. controls (0.5,0.5) and (0.3,0) .. (0,0) .. controls (-0.1,0.1) .. (-0.1,0.2);
        \draw (-0.1,0.2) -- (-0.1,0.5);
        \draw (-0.4,0.5) .. controls (0,0.7) and (0,0.9) .. (-0.18,0.98);
        \draw (-0.1,0.5) .. controls (-0.12,0.51) .. (-0.16,0.56);
        \draw (-0.3,0.66) .. controls (-0.5,0.75) and (-0.5,0.95) .. (0.05,1.2);
        \draw (-0.32,1.09) .. controls (-0.35,1.1) .. (-0.5,1.2);
        \draw (-1.2,0.7) node {$z$};
        \draw (0.7,0.7) node {$x$};
        \draw (-0.7,1.5) node {$w$};
        \draw (-0.6,0.5) node {$z$};
        \draw (0.1,0.5) node {$x$};
    \end{tikzpicture}
    =\sum_{w^\prime}N^{w^\prime}_{zx} \sqrt{\frac{d_{w^\prime}}{d_zd_x}} 
        \begin{tikzpicture}[baseline = {(current bounding box.center)}]
        \draw (-0.5,1.2) .. controls (-0.7,1.1) and (-1.0,0.9) .. (-1.0,0.6) .. controls (-1.0,0.3) and (-0.7,0.1) .. (-0.5,0) .. controls (-0.4,0.03) .. (-0.25,0.15);
        \draw (-0.25,0.15) -- (-0.25,0.4);
        \draw (-0.5,1.2) -- (-0.5,1.5);
        \draw (0.5,1.2) .. controls (0.5,0.5) and (0.3,0) .. (0,0) .. controls (-0.1,0.03) .. (-0.25,0.15);
        \draw (-0.4,0.5) .. controls (0,0.7) and (0,0.9) .. (-0.18,0.98);
        \draw (-0.1,0.5) .. controls (-0.12,0.51) .. (-0.16,0.56);
        \draw (-0.3,0.66) .. controls (-0.5,0.75) and (-0.5,0.95) .. (0.05,1.2);
        \draw (-0.32,1.09) .. controls (-0.35,1.1) .. (-0.5,1.2);
        \draw (-0.25,0.4) -- (-0.4,0.5);
        \draw (-0.25,0.4) -- (-0.1,0.5);
        \draw (-1.0,0.1) node {$z$};
        \draw (0.5,0.1) node {$x$};
        \draw (-0.7,1.5) node {$w$};
        \draw (-0.6,0.6) node {$z$};
        \draw (0.1,0.6) node {$x$};
        \draw (0,0.27) node {$w^\prime$};
    \end{tikzpicture} = \sum_{w^\prime} N^{w^\prime}_{zx} \sqrt{\frac{d_{w^\prime}}{d_zd_x}} R^{zx}_{w^\prime} R^{xz}_{w^\prime} 
    \begin{tikzpicture}[baseline = {(current bounding box.center)}]
        \draw (-0.5,1.2) .. controls (-0.7,1.1) and (-1.0,0.9) .. (-1.0,0.6) .. controls (-1.0,0.3) and (-0.7,0.1) .. (-0.5,0) .. controls (-0.4,0.03) .. (-0.25,0.15);
        \draw (-0.25,0.15) -- (-0.25,1.1);
        \draw (-0.5,1.2) -- (-0.5,1.5);
        \draw (0.5,1.2) .. controls (0.5,0.5) and (0.3,0) .. (0,0) .. controls (-0.1,0.03) .. (-0.25,0.15);
        \draw (-0.25,1.1) .. controls (-0.35,1.15) .. (-0.5,1.2);
        \draw (-0.25,1.1) -- (0,1.2);
        \draw (-0.45,1) node {$z$};
        \draw (-0.05,1) node {$x$};
        \draw (-0.7,1.5) node {$w$};
        \draw (-1.0,0.1) node {$z$};
        \draw (0.5,0.1) node {$x$};
        \draw (0,0.6) node {$w^\prime$};
    \end{tikzpicture}\;.
\end{equation}
In the first equality, we apply isotopic invariance to deform the diagram. The second equality utilizes the resolution of the identity \eqref{eqn:resolution_identity}, while the third equality applies the $R$-move \eqref{eqn:R_move} twice. Since the diagram contains a $z$ loop, we introduce a factor of $1/d_z$ in accordance with the relation between the actual trace and the quantum trace (see \eqref{eqn:example_trace}). Applying the $F$-move \eqref{eqn:F_move}, we obtain:
\begin{equation}
    \sum_{w^\prime,  w^{\prime\prime}}\frac{1}{d_z}  N^{w^\prime}_{zx} \sqrt{\frac{d_{w^\prime}}{d_zd_x}} R^{zx}_{w^\prime} R^{xz}_{w^\prime}[F^{zxx}_z]_{w^\prime w^{\prime\prime}}\  
    \begin{tikzpicture}[baseline = {(current bounding box.center)}]
        \draw (0.5,1.2) .. controls (0.4,0.2) and (0.1,0.2) .. (0,1.2);
        \draw (0.1,0.6) .. controls (-0.1,0.5) and (-0.3,0.5) .. (-0.3,0.7);
        \draw (-0.5,1.0) .. controls (-0.55,1.1) and (-0.58,1.15) .. (-0.6,1.2);
        \draw (-0.5,1.0) .. controls (-0.58,0.85) and (-0.55,0.67) .. (-0.3,0.7);
        \draw (-0.3,0.7) .. controls (-0.2,0.75) and (-0.25,1.05) .. (-0.5,1.0);
        \draw (0.5,1.4) node {$x$};
        \draw (0,1.4) node {$x$};
        \draw (-0.6,1.4) node {$w$};
        \draw (-0.5,0.55) node {$z$};
        \draw (-0.3,1.15) node {$z$};
        \draw (-0.1,0.35) node {$w^{\prime\prime}$};
    \end{tikzpicture} = \sum_{w^\prime} N^{w^\prime}_{zx} \sqrt{\frac{d_{w^\prime}}{d_z d_x d_w}} R^{zx}_{w^\prime} R^{xz}_{w^\prime} [F^{zxx}_z]_{w^\prime w}   \begin{tikzpicture}[baseline = {(current bounding box.center)}]
        \draw (0.5,1.2) .. controls (0.4,0.2) and (0.1,0.2) .. (0,1.2);
        \draw (0.03,0.9) .. controls (-0.1,0.5) and (-0.3,0.5) .. (-0.4,1.2);
        \draw (0.5,1.4) node {$x$};
        \draw (0,1.4) node {$x$};
        \draw (-0.5,1.4) node {$w$};
    \end{tikzpicture}\;,
\end{equation}
where we employ charge conservation \eqref{eqn:charge_conservation} to resolve the $z$ loop. Finally, fusing the $w$ anyon with the $x$ anyon and applying \eqref{eqn:charge_conservation} again results in:
\begin{equation}
    \begin{tikzpicture}[baseline = {(current bounding box.center)}]
        \draw (-0.5,1.2) .. controls (-0.7,1.1) and (-1.0,0.9) .. (-1.0,0.6) .. controls (-1.0,0.3) and (-0.7,0.1) .. (-0.5,0) .. controls (-0.3,0.1) and (0.1,0.4) .. (-0.11,0.78);
        \draw (-0.25,0.92) .. controls (-0.28,1.0) and (-0.38,1.1) .. (-0.5,1.2);
        \draw (-0.5,1.2) -- (-0.5,1.5);
        \draw (0.5,1.2) .. controls (0.5,0.5) and (0.3,0) .. (0,0) .. controls (-0.1,0.1) .. (-0.16,0.16);
        \draw (-0.28,0.28) .. controls (-0.5,0.6) and (-0.22,0.85).. (-0.18,0.85) .. controls (-0.05,0.95) and (0,1.1) .. (0.05,1.2);
        \draw (-1.2,0.7) node {$z$};
        \draw (0.7,0.7) node {$x$};
        \draw (-0.7,1.5) node {$w$};
    \end{tikzpicture}
    \to \ 
    \sum_{w^\prime} N^{w^\prime}_{zx} \sqrt{\frac{d_{w^\prime}}{d_{z}d_x}} R^{zx}_{w^\prime} R^{xz}_{w^\prime} [F^{zxx}_z]_{w^\prime w}\ \begin{tikzpicture}[baseline = {(current bounding box.center)}]
        \draw (0.5,1.2) .. controls (0.4,0) and (-0.1,0) .. (-0.2,1.2);
        \draw (0.5,1.4) node {$x$};
        \draw (-0.2,1.4) node {$x$};
    \end{tikzpicture}\;,
\end{equation}
which yields the amplitude in Figure~\ref{fig:FT_measurements}(a):
\begin{equation} \label{eqn:I_xzw}
    I_{x;z,w} = \sum_{w^\prime} N^{w^\prime}_{zx} \sqrt{\frac{d_{w^\prime}}{d_{z}d_x}} R^{zx}_{w^\prime} R^{xz}_{w^\prime} [F^{zxx}_z]_{w^\prime w}\;.
\end{equation}

\subsection{The phase factor from local anyon interferometer errors}
\label{appendix:Phase_Factor_Interferometry}
In this section, we derive the phase factor $I_{z,(a,b,c),w}$ that arises from local anyon interferometer errors, as introduced in \eqref{eqn:local_interferometry}. By applying the $F$-move \eqref{eqn:F_move}, followed by the resolution of identity \eqref{eqn:resolution_identity} and the $R$-move \eqref{eqn:R_move}, we obtain:
\begin{equation}
    \begin{tikzpicture}[baseline = {(current bounding box.center)}]
        \draw (0,0) -- (0,0.21); 
        \draw (0,0.36) -- (0,2.0); 
        \draw (-0.5,0.88).. controls (-0.75,0.83) and (-0.9,0.6) .. (-0.7,0.45) ..controls (-0.4,0.22) and (0.4, 0.22) .. (0.7,0.45) .. controls (0.9,0.6) and (0.75,0.9) .. (0.08,0.95);
        \draw (-0.08,0.95) .. controls (-0.15,0.95) and (-0.3,0.94) .. (-0.38,0.91);
        \draw (-0.7,0.86) .. controls (-0.65,1.22) and (-0.45,1.05) .. (-0.43,0.85) .. controls (-0.45, 0.55) and (-0.62,0.63).. (-0.65,0.73);
        \draw (-0.59,1.09) .. controls (-0.6,1.20) and (-0.65,1.36) .. (-0.8,1.45);
        \draw (-0.80,0.54) .. controls (-1.1,0.9) and (-1,1.3) .. (-0.8,1.45);
        \draw (-0.8,1.45) .. controls (-0.8,1.60) and (-0.75,1.8) .. (-0.7,1.9);
        \node at (0.2,0) {\(x\)};
        \node at (0.2,2.0) {\(x\)};
        \node at (-0.5,0.15) {\(z\)};
        \node at (-0.32,0.60) {\(a\)};
        \node at (-0.47,1.27) {\(b\)};
        \node at (-1.13,1.1) {\(c\)};
        \node at (-0.5,2) {\(w\)};
    \end{tikzpicture} = \sum_{b^\prime} \frac{[F^{aac}_w]_{bb^\prime}}{d_a}\ 
    \begin{tikzpicture}[baseline = {(current bounding box.center)}]
        \draw (0,0) -- (0,0.21); 
        \draw (0,0.36) -- (0,2.0); 
        \draw (-0.38,0.90).. controls (-0.75,0.83) and (-0.9,0.6) .. (-0.7,0.45) ..controls (-0.4,0.22) and (0.4, 0.22) .. (0.7,0.45) .. controls (0.9,0.6) and (0.75,0.9) .. (0.08,0.95);
        \draw (-0.08,0.95) .. controls (-0.12,0.95) and (-0.23,0.94) .. (-0.28,0.92);
        \draw (-0.80,0.54) .. controls (-1.1,0.8) and (-1,1.1) .. (-0.8,1.25);
        \draw (-0.45,1.45) .. controls (-0.45,1.60) and (-0.4,1.8) .. (-0.3,2);
        \draw (-0.45,1.45) .. controls (-0.35,1.4) and (-0.2,0.6) .. (-0.54,0.78);
        \draw (-0.64,0.88) .. controls (-0.72,0.92) and (-0.78,1.05) .. (-0.8, 1.25);
        \draw (-0.8,1.25) .. controls (-0.65,1.4) and (-0.55,1.4) .. (-0.45,1.45);
        \node at (0.2,0) {\(x\)};
        \node at (0.2,2.0) {\(x\)};
        \node at (-0.5,0.15) {\(z\)};
        \node at (-0.65,1.6) {$b^\prime$};
        \node at (-0.35,0.6) {$a$};
        \node at (-1.13,1.1) {\(c\)};
        \node at (-0.6,2) {\(w\)};
    \end{tikzpicture} = \sum_{b^\prime,c^\prime}  \frac{N^{c^\prime}_{az}}{d_a} \sqrt{\frac{d_{c^\prime}}{d_a d_z}} R^{az}_{c^\prime} R^{za}_{c^\prime}[F^{aac}_w]_{bb^\prime} \ 
    \begin{tikzpicture}[baseline = {(current bounding box.center)}]
        \draw (0,0) -- (0,0.21); 
        \draw (0,0.36) -- (0,2.0); 
        \draw (-0.08,0.95).. controls (-0.75,0.9) and (-0.9,0.6) .. (-0.7,0.45) ..controls (-0.4,0.22) and (0.4, 0.22) .. (0.7,0.45) .. controls (0.9,0.6) and (0.75,0.9) .. (0.08,0.95);
        \draw (-0.80,0.54) .. controls (-1.1,0.8) and (-1,1.1) .. (-0.8,1.25);
        \draw (-0.45,1.45) .. controls (-0.45,1.60) and (-0.4,1.8) .. (-0.3,2);
        \draw (-0.45,1.45) .. controls (-0.35,1.4) and (-0.2,1.2) .. (-0.2,0.95);
        \draw (-0.8,1.25) .. controls (-0.75,1.1) and (-0.7,0.9) .. (-0.6,0.84);
        \draw (-0.8,1.25) .. controls (-0.65,1.4) and (-0.55,1.4) .. (-0.45,1.45);
        \node at (0.2,0) {\(x\)};
        \node at (0.2,2.0) {\(x\)};
        \node at (-0.5,0.15) {\(z\)};
        \node at (-0.65,1.6) {$b^\prime$};
        \node at (-0.59,1.11) {$a$};
        \node at (-0.15,1.35) {$a$};
        \node at (-0.33,0.73) {$c^\prime$};
        \node at (-0.60,0.61) {$z$};
        \node at (-1.13,1.1) {\(c\)};
        \node at (-0.6,2) {\(w\)};
    \end{tikzpicture} \;.
\end{equation}
By applying the $F$-move twice on the two $a$-lines, this expression transforms into:
\begin{equation}
    \sum_{b^\prime,c^\prime,d^\prime, e^\prime} \frac{N^{c^\prime}_{az}}{d_a} \sqrt{\frac{d_{c^\prime}}{d_a d_z}} R^{az}_{c^\prime} R^{za}_{c^\prime} [F^{aac}_w]_{bb^\prime} [F^{cb^\prime c^\prime}_{z}]_{ad^\prime} [F^{b^\prime w z}_{c^\prime}]_{ae^\prime}\  
    \begin{tikzpicture}[baseline = {(current bounding box.center)}]
        \draw (0,0) -- (0,0.21); 
        \draw (0,0.36) -- (0,2.0); 
        \draw (-0.08,1.3) .. controls (-0.95,1.2) and (-1.2,0.75) .. (-0.8,0.45) .. controls (-0.5,0.2) and (0.5,0.2) .. (0.8,0.45) .. controls (1.2,0.75) and (0.95,1.2) .. (0.08,1.3); 
        \draw (-0.95,0.6) .. controls (-1.25, 0.75) and (-1.2,1.1) .. (-0.93,0.95);
        \draw (-0.65,1.15) .. controls (-0.75,1.6) and (-0.5,1.45) .. (-0.45,1.23);
        \draw (-0.4,2) .. controls (-0.27,1.7) and (-0.25,1.6) .. (-0.18,1.28);
        \node at (0.2,0) {\(x\)};
        \node at (0.2,2.0) {\(x\)};
        \node at (-0.7,0.15) {\(z\)};
        \node at (-0.65,1.6) {$b^\prime$};
        \node at (-0.5,1.0) {$c^\prime$};
        \node at (-0.85,1.25) {$d^\prime$};
        \node at (-1.3,1) {$c$};
        \node at (-0.8,0.75) {$z$};
        \node at (-0.25,1.05) {$e^\prime$};
        \node at (-0.6,2) {\(w\)};
    \end{tikzpicture}\;.
\end{equation}
Applying the charge conservation condition \eqref{eqn:charge_conservation} twice gives the final result:
\begin{equation}
    \begin{tikzpicture}[baseline = {(current bounding box.center)}]
        \draw (0,0) -- (0,0.21); 
        \draw (0,0.36) -- (0,2.0); 
        \draw (-0.5,0.88).. controls (-0.75,0.83) and (-0.9,0.6) .. (-0.7,0.45) ..controls (-0.4,0.22) and (0.4, 0.22) .. (0.7,0.45) .. controls (0.9,0.6) and (0.75,0.9) .. (0.08,0.95);
        \draw (-0.08,0.95) .. controls (-0.15,0.95) and (-0.3,0.94) .. (-0.38,0.91);
        \draw (-0.7,0.86) .. controls (-0.65,1.22) and (-0.45,1.05) .. (-0.43,0.85) .. controls (-0.45, 0.55) and (-0.62,0.63).. (-0.65,0.73);
        \draw (-0.59,1.09) .. controls (-0.6,1.20) and (-0.65,1.36) .. (-0.8,1.45);
        \draw (-0.80,0.54) .. controls (-1.1,0.9) and (-1,1.3) .. (-0.8,1.45);
        \draw (-0.8,1.45) .. controls (-0.8,1.60) and (-0.75,1.8) .. (-0.7,1.9);
        \node at (0.2,0) {\(x\)};
        \node at (0.2,2.0) {\(x\)};
        \node at (-0.5,0.15) {\(z\)};
        \node at (-0.32,0.60) {\(a\)};
        \node at (-0.47,1.27) {\(b\)};
        \node at (-1.13,1.1) {\(c\)};
        \node at (-0.5,2) {\(w\)};
    \end{tikzpicture}
    \ \to \ 
    \sum_{b^\prime, c^\prime} N^{c^\prime}_{az} N^{z}_{b^\prime c^\prime} \sqrt{\frac{d_{b^\prime}d_c}{d_a}} \frac{d_{c^\prime}}{d_z d_a} R^{az}_{c^\prime} R^{za}_{c^\prime} [F^{aac}_w]_{bb^\prime} [F^{cb^\prime c^\prime}_z]_{az} [F^{b^\prime w z}_{c^\prime}]_{az}\  \begin{tikzpicture}[baseline = {(current bounding box.center)}]
        \draw (0,0) -- (0,0.21); 
        \draw (0,0.36) -- (0,2.0); 
        \draw (-0.08,0.95).. controls (-0.75,0.9) and (-0.9,0.6) .. (-0.7,0.45) ..controls (-0.4,0.22) and (0.4, 0.22) .. (0.7,0.45) .. controls (0.9,0.6) and (0.75,0.9) .. (0.08,0.95);
        \draw (-0.65,0.81) .. controls (-0.9,1.00) and (-0.95,1.6) .. (-0.7,1.9);
        \node at (0.2,0) {\(x\)};
        \node at (0.2,2.0) {\(x\)};
        \node at (-0.5,0.15) {\(z\)};
        \node at (-0.5,2) {\(w\)};
    \end{tikzpicture}\;,
\end{equation}
which yields the $x$-independent phase factor in \eqref{eqn:local_interferometry}:
\begin{equation}
    I_{z,(a,b,c),w} = \sum_{b^\prime, c^\prime} N^{c^\prime}_{az} N^{z}_{b^\prime c^\prime} \sqrt{\frac{d_{b^\prime}d_c}{d_a}} \frac{d_{c^\prime}}{d_z d_a} R^{az}_{c^\prime} R^{za}_{c^\prime} [F^{aac}_w]_{bb^\prime} [F^{cb^\prime c^\prime}_z]_{az} [F^{b^\prime w z}_{c^\prime}]_{az}\;.
\end{equation}

\subsection{The remote measurements}
\label{appendix:FT_Measurements}
This section provides a detailed examination of the anyon interferometry protocols depicted in Figures~\ref{fig:FT_measurements}(b) and \ref{fig:FT_measurements}(c). The fusion rules, governed by the multiplicity $N^{\gamma}_{\alpha\beta}$, along with the associated $F$-symbols and $R$-symbols for $\mathcal{D}(S_3)$, are comprehensively outlined in \cite{Cui2015QIP_UQC_Weakly}. As demonstrated in the supplementary Mathematica notebook, we confirm that these categorical data satisfy both the Pentagon and Hexagon consistency equations. Readers may also refer to the notebook for the amplitudes discussed in this section.

Now, we compute the amplitude for the anyon interferometer that implements $\mathcal{M}_A$, as depicted in Figure~\ref{fig:FT_measurements}(b). 
The interferometer is performed using a pair of $D$ anyons.
By steps analogous to those in Figure~\ref{fig:FT_measurements}(a) and Appendix~\ref{appendix:Anyon_Interferometry}, we resolve the $D$-loop by introducing the amplitude $I_{x;D,w}$ expressed as:
\begin{equation}
    \begin{tikzpicture}[baseline = {(current bounding box.center)}]
         \draw (0,0) -- (0,0.5);
        \draw (0,0.5) -- (-0.15,0.65);
        \draw (-0.25,0.75) -- (-0.6,1.1);
        \draw (-0.6,1.1) -- (-1,1.5);
        \draw (-0.6,1.1) -- (-0.2,1.5);
        \draw (0,0.5) -- (0.6,1.1);
        \draw (0.6,1.1) -- (1,1.5);
        \draw (0.6,1.1) -- (0.2,1.5);
        \draw (-0.48,0.89) .. controls (-0.82,0.67) and (-0.96,0.46) .. (-0.75,0.42) .. controls (-0.55,0.4) and (-0.33,0.53) .. (-0.16,0.72) .. controls (-0.03,0.87) and (0.05,1.1) .. (-0.35,0.95);
        \draw (-0.85,0.5) .. controls (-1.4,0.75) and (-1.3,1.1) .. (-1.3,1.1);
        \draw (-0.2,0.4) node {$x$};
        \draw (0.5,0.7) node {$y$};
        \draw (0.2,0) node {$G$};
        \draw (-1,1.75) node {$D$};
        \draw (-0.3,1.75) node {$D$};
        \draw (0.3,1.75) node {$D$};
        \draw (1,1.75) node {$D$};
        \draw (-0.75,0.15) node {$D$};
        \draw (-1.35,1.3) node {$w$};
    \end{tikzpicture} = \frac{I_{x;D,w}}{\sqrt{d_w}}\ 
    \begin{tikzpicture}[baseline = {(current bounding box.center)}]
        \draw (0,0) -- (0,0.5);
        \draw (0,0.5) -- (-0.6,1.1);
        \draw (-0.6,1.1) -- (-1,1.5);
        \draw (-0.6,1.1) -- (-0.2,1.5);
        \draw (0,0.5) -- (0.6,1.1);
        \draw (0.6,1.1) -- (1,1.5);
        \draw (0.6,1.1) -- (0.2,1.5);
        \draw (-0.3,0.8) .. controls (-0.7,0.7) and (-1.05,0.85) .. (-1.2,1.1);
        \draw (-0.3,0.5) node {$x$};
        \draw (0.5,0.7) node {$y$};
        \draw (-1.3,1.3) node {$w$};
        \draw (0.2,0) node {$G$};
        \draw (-1,1.75) node {$D$};
        \draw (-0.3,1.75) node {$D$};
        \draw (0.3,1.75) node {$D$};
        \draw (1,1.75) node {$D$};
    \end{tikzpicture}\;.
\end{equation}
By referring to \eqref{eqn:I_xzw}, and considering $x=A,G$, the non-zero amplitudes are:
\begin{equation}
    I_{A;D,A} = 1,\quad I_{G;D,G} = \omega^2\;.
\end{equation}
This implies that when two $D$ anyons are fused following the braiding of one with $x$, the outcome $w=A$ or $w=G$ indicates whether $x=A$ or $x=G$. As illustrated in the final panel of Figure~\ref{fig:FT_measurements}(a), the next step is to fuse the $w$ anyon with an appropriate anyon to return the fusion tree to $V^{DDDD}_G$, ensuring that the logical information stored in the system remains unaffected. If $w=A$, no further action is required. If $w=G$, which implies $x=G$, we fuse the $w$ anyon with the upper left-most $D$ anyon in the tree. The result of this fusion can be computed by performing an $F$-move, as follows:
\begin{equation}
    \begin{tikzpicture}[baseline = {(current bounding box.center)}]
        \draw (0,0) -- (0,0.5);
        \draw (0,0.5) -- (-0.8,1.3);
        \draw (-0.8,1.3) -- (-1.2,1.7);
        \draw (-0.6,1.1) -- (-0.3,1.4);
        \draw (-0.3,1.4) -- (-0.6,1.7);
        \draw (-0.3,1.4) -- (0,1.7);
        \draw (0,0.5) -- (0.8,1.3);
        \draw (0.8,1.3) -- (0.4,1.7);
        \draw (0.8,1.3) -- (1.2,1.7);
        \draw (-0.5,0.7) node {$G$};
        \draw (0.5,0.7) node {$y$};
        \draw (-0.25,1.1) node {$G$};
        \draw (-1.3,1.95) node {$G$};
        \draw (0.2,0) node {$G$};
        \draw (-0.6,1.95) node {$D$};
        \draw (0,1.95) node {$D$};
        \draw (0.4,1.95) node {$D$};
        \draw (1.2,1.95) node {$D$};
    \end{tikzpicture} = \sum_{x^\prime} [(F^{GDD}_G)^{-1}]_{Gx^\prime} \ 
    \begin{tikzpicture}[baseline = {(current bounding box.center)}]
        \draw (0,0) -- (0,0.5);
        \draw (0,0.5) -- (-0.8,1.3);
        \draw (-0.8,1.3) -- (-1.2,1.7);
        \draw (-0.6,1.1) -- (-0.3,1.4);
        \draw (-0.9,1.4) -- (-0.6,1.7);
        \draw (-0.3,1.4) -- (0,1.7);
        \draw (0,0.5) -- (0.8,1.3);
        \draw (0.8,1.3) -- (0.4,1.7);
        \draw (0.8,1.3) -- (1.2,1.7);
        \draw (-0.5,0.7) node {$G$};
        \draw (0.5,0.7) node {$y$};
        \draw (-0.95,1.1) node {$x^\prime$};
        \draw (-1.3,1.95) node {$G$};
        \draw (0.2,0) node {$G$};
        \draw (-0.6,1.95) node {$D$};
        \draw (0,1.95) node {$D$};
        \draw (0.4,1.95) node {$D$};
        \draw (1.2,1.95) node {$D$};
    \end{tikzpicture} = \frac{1}{\sqrt{2}}\ \begin{tikzpicture}[baseline = {(current bounding box.center)}]
        \draw (0,0) -- (0,0.5);
        \draw (0,0.5) -- (-0.8,1.3);
        \draw (-0.8,1.3) -- (-1.2,1.7);
        \draw (-0.6,1.1) -- (-0.3,1.4);
        \draw (-0.9,1.4) -- (-0.6,1.7);
        \draw (-0.3,1.4) -- (0,1.7);
        \draw (0,0.5) -- (0.8,1.3);
        \draw (0.8,1.3) -- (0.4,1.7);
        \draw (0.8,1.3) -- (1.2,1.7);
        \draw (-0.5,0.7) node {$G$};
        \draw (0.5,0.7) node {$y$};
        \draw (-0.95,1.1) node {$D$};
        \draw (-1.3,1.95) node {$G$};
        \draw (0.2,0) node {$G$};
        \draw (-0.6,1.95) node {$D$};
        \draw (0,1.95) node {$D$};
        \draw (0.4,1.95) node {$D$};
        \draw (1.2,1.95) node {$D$};
    \end{tikzpicture} - \frac{1}{\sqrt{2}}\ \begin{tikzpicture}[baseline = {(current bounding box.center)}]
        \draw (0,0) -- (0,0.5);
        \draw (0,0.5) -- (-0.8,1.3);
        \draw (-0.8,1.3) -- (-1.2,1.7);
        \draw (-0.6,1.1) -- (-0.3,1.4);
        \draw (-0.9,1.4) -- (-0.6,1.7);
        \draw (-0.3,1.4) -- (0,1.7);
        \draw (0,0.5) -- (0.8,1.3);
        \draw (0.8,1.3) -- (0.4,1.7);
        \draw (0.8,1.3) -- (1.2,1.7);
        \draw (-0.5,0.7) node {$G$};
        \draw (0.5,0.7) node {$y$};
        \draw (-0.95,1.1) node {$E$};
        \draw (-1.3,1.95) node {$G$};
        \draw (0.2,0) node {$G$};
        \draw (-0.6,1.95) node {$D$};
        \draw (0,1.95) node {$D$};
        \draw (0.4,1.95) node {$D$};
        \draw (1.2,1.95) node {$D$};
    \end{tikzpicture}\;,
\end{equation}
where we have used the fact that every $F^{\alpha\beta\gamma}_{\eta}$ is a unitary matrix. The non-zero matrix elements of $F^{GDD}_G$ are:
\begin{equation}
    [F^{GDD}_G]_{DG} = \frac{1}{\sqrt{2}},\quad [F^{GDD}_G]_{EG} = -\frac{1}{\sqrt{2}}\;.
\end{equation}
Therefore, when the $G$ anyon is fused with the $D$ anyon, the resulting fusion trees, along with their corresponding probabilities, are given as follows:
\begin{equation}
    \mathrm{Prob} = \frac{1}{2}:\ \begin{tikzpicture}[baseline = {(current bounding box.center)}]
        \draw (0,0) -- (0,0.5);
        \draw (0,0.5) -- (-0.6,1.1);
        \draw (-0.6,1.1) -- (-1,1.5);
        \draw (-0.6,1.1) -- (-0.2,1.5);
        \draw (0,0.5) -- (0.6,1.1);
        \draw (0.6,1.1) -- (1,1.5);
        \draw (0.6,1.1) -- (0.2,1.5);
        \draw (-0.5,0.7) node {$G$};
        \draw (0.5,0.7) node {$y$};
        \draw (0.2,0) node {$G$};
        \draw (-1,1.75) node {$D$};
        \draw (-0.3,1.75) node {$D$};
        \draw (0.3,1.75) node {$D$};
        \draw (1,1.75) node {$D$};
    \end{tikzpicture}\;,\quad \mathrm{Prob} = \frac{1}{2}:\ \begin{tikzpicture}[baseline = {(current bounding box.center)}]
        \draw (0,0) -- (0,0.5);
        \draw (0,0.5) -- (-0.6,1.1);
        \draw (-0.6,1.1) -- (-1,1.5);
        \draw (-0.6,1.1) -- (-0.2,1.5);
        \draw (0,0.5) -- (0.6,1.1);
        \draw (0.6,1.1) -- (1,1.5);
        \draw (0.6,1.1) -- (0.2,1.5);
        \draw (-0.5,0.7) node {$G$};
        \draw (0.5,0.7) node {$y$};
        \draw (0.2,0) node {$G$};
        \draw (-1,1.75) node {$E$};
        \draw (-0.3,1.75) node {$D$};
        \draw (0.3,1.75) node {$D$};
        \draw (1,1.75) node {$D$};
    \end{tikzpicture}\;.
\end{equation}
If the fusion outcome is $D$ (the first fusion tree), we return to the original fusion tree space $V^{DDDD}_G$. However, if the outcome is $E$ (the second fusion tree), according to the fusion rules $B \times E \to D$ and $B \times G \to G$, we introduce a pair of $B$ anyons. We then fuse one $B$ anyon with the $E$ anyon and the other with the $G$ anyon, leading to the following relation:
\begin{equation} \label{eqn:B_anyon_correction_MA}
    \begin{tikzpicture}[baseline = {(current bounding box.center)}]
        \draw (0,0) -- (0,0.5);
        \draw (0,0.5) -- (-0.6,1.1);
        \draw (-0.6,1.1) -- (-1,1.5);
        \draw (-0.6,1.1) -- (-0.2,1.5);
        \draw (0,0.5) -- (0.6,1.1);
        \draw (0.6,1.1) -- (1,1.5);
        \draw (0.6,1.1) -- (0.2,1.5);
        \draw (-0.9,1.4) .. controls (-1.3,1.1) and (-1.1,0.4) .. (0,0.2);
        \draw (-0.5,0.7) node {$G$};
        \draw (0.5,0.7) node {$y$};
        \draw (0.2,0) node {$G$};
        \draw (0.2,0.4) node {$G$};
        \draw (-0.9,1.05) node {$E$};
        \draw (-1.1,0.5) node {$B$};
        \draw (-1,1.75) node {$D$};
        \draw (-0.3,1.75) node {$D$};
        \draw (0.3,1.75) node {$D$};
        \draw (1,1.75) node {$D$};
    \end{tikzpicture} = [F^{BDD}_G]_{EG} [F^{BGy}_{G}]_{GG}\ \begin{tikzpicture}[baseline = {(current bounding box.center)}]
        \draw (0,0) -- (0,0.5);
        \draw (0,0.5) -- (-0.6,1.1);
        \draw (-0.6,1.1) -- (-1,1.5);
        \draw (-0.6,1.1) -- (-0.2,1.5);
        \draw (0,0.5) -- (0.6,1.1);
        \draw (0.6,1.1) -- (1,1.5);
        \draw (0.6,1.1) -- (0.2,1.5);
        \draw (-0.5,0.7) node {$G$};
        \draw (0.5,0.7) node {$y$};
        \draw (0.2,0) node {$G$};
        \draw (-1,1.75) node {$D$};
        \draw (-0.3,1.75) node {$D$};
        \draw (0.3,1.75) node {$D$};
        \draw (1,1.75) node {$D$};
    \end{tikzpicture}\;,
\end{equation}
where we apply $F$-moves twice and invoke charge conservation to eliminate the $B$ worldline. Since $[F^{BDD}_G]_{EG} = 1$ and $[F^{BGy}_G]_{GG} = 1$ or $-1$ for $y=A$ or $y=G$, respectively, this process introduces a relative minus sign between the basis states $\ket{GG}$ and $\ket{GA}$ in the subspace $A^\prime = \mathrm{span}\{\ket{GG},\ket{GA}\}$. To resolve this, we continue the interferometry and fusion protocol until another $E$ outcome is obtained, thereby canceling the relative minus sign, as in the procedure for $\mathcal{M}_U$ in \secref{sec:fault_tolerant_meas}. The probability of eventually obtaining another $E$ outcome is exponentially close to $1$ with respect to the number of measurement rounds. Thus, the protocol for realizing $\mathcal{M}_A$ via anyon interferometry is successfully implemented, as shown in Figure~\ref{fig:FT_measurements}(b).

Next, we compute the amplitude for the anyon interferometry that realizes $\tilde{\mathcal{M}}_U$, as depicted in Figure~\ref{fig:FT_measurements}(c). By using a pair of $H$ anyons to perform the interferometry, we resolve the $H$-loop to obtain the following amplitude:
\begin{equation}
    \begin{tikzpicture}[baseline = {(current bounding box.center)}]
         \draw (0,0) -- (0,0.5);
        \draw (0,0.5) -- (-0.15,0.65);
        \draw (-0.25,0.75) -- (-0.6,1.1);
        \draw (-0.6,1.1) -- (-1,1.5);
        \draw (-0.6,1.1) -- (-0.2,1.5);
        \draw (0,0.5) -- (0.6,1.1);
        \draw (0.6,1.1) -- (1,1.5);
        \draw (0.6,1.1) -- (0.2,1.5);
        \draw (-0.48,0.89) .. controls (-0.82,0.67) and (-0.96,0.46) .. (-0.75,0.42) .. controls (-0.55,0.4) and (-0.33,0.53) .. (-0.16,0.72) .. controls (-0.03,0.87) and (0.05,1.1) .. (-0.35,0.95);
        \draw (-0.85,0.5) .. controls (-1.4,0.75) and (-1.3,1.1) .. (-1.3,1.1);
        \draw (-0.2,0.4) node {$x$};
        \draw (0.5,0.7) node {$y$};
        \draw (0.2,0) node {$G$};
        \draw (-1,1.75) node {$D$};
        \draw (-0.3,1.75) node {$D$};
        \draw (0.3,1.75) node {$D$};
        \draw (1,1.75) node {$D$};
        \draw (-0.75,0.15) node {$H$};
        \draw (-1.35,1.3) node {$w$};
    \end{tikzpicture} = \frac{I_{x;H,w}}{\sqrt{d_w}}\ 
    \begin{tikzpicture}[baseline = {(current bounding box.center)}]
        \draw (0,0) -- (0,0.5);
        \draw (0,0.5) -- (-0.6,1.1);
        \draw (-0.6,1.1) -- (-1,1.5);
        \draw (-0.6,1.1) -- (-0.2,1.5);
        \draw (0,0.5) -- (0.6,1.1);
        \draw (0.6,1.1) -- (1,1.5);
        \draw (0.6,1.1) -- (0.2,1.5);
        \draw (-0.3,0.8) .. controls (-0.7,0.7) and (-1.05,0.85) .. (-1.2,1.1);
        \draw (-0.3,0.5) node {$x$};
        \draw (0.5,0.7) node {$y$};
        \draw (-1.3,1.3) node {$w$};
        \draw (0.2,0) node {$G$};
        \draw (-1,1.75) node {$D$};
        \draw (-0.3,1.75) node {$D$};
        \draw (0.3,1.75) node {$D$};
        \draw (1,1.75) node {$D$};
    \end{tikzpicture}\;.
\end{equation}
The non-zero values of $I_{x;H,w}$ are:
\begin{equation}
\begin{aligned}
    &I_{A;H,A} = I_{B;H,A} = I_{G;H,A} = 1\;,\quad I_{D;H,H} = \omega^2\;,\quad I_{E;H,H} = -\omega^2\;,\quad I_{C;H,A} = H_{F;H,A} = I_{H;H,A} = -\frac{1}{2}\;.\\
    &I_{C;H,B} = I_{H;H,B} = -\frac{i\sqrt{3}}{2},\quad I_{F;H,B} = \frac{i\sqrt{3}}{2}\;.
\end{aligned}
\end{equation}
This implies that the fusion outcome $w$ is always Abelian. Thus, according to $w\times G \to G$, we can fuse $w$ with $G$ to return the fusion tree to $V^{DDDD}_G$. The result of this fusion is computed by performing an $F$-move as follows:
\begin{equation} \label{eqn:w_anyon_fusion_MU}
    \begin{tikzpicture}[baseline = {(current bounding box.center)}]
        \draw (0,0) -- (0,0.5);
        \draw (0,0.5) -- (-0.6,1.1);
        \draw (-0.6,1.1) -- (-1,1.5);
        \draw (-0.6,1.1) -- (-0.2,1.5);
        \draw (0,0.5) -- (0.6,1.1);
        \draw (0.6,1.1) -- (1,1.5);
        \draw (0.6,1.1) -- (0.2,1.5);
        \draw (-0.3,0.8) -- (-0.8,0.3);
        \draw (-0.25,0.5) node {$x$};
        \draw (-0.6,0.85) node {$x$};
        \draw (0.5,0.7) node {$y$};
        \draw (-1,0.3) node {$w$};
        \draw (0.2,0) node {$G$};
        \draw (-1,1.75) node {$D$};
        \draw (-0.3,1.75) node {$D$};
        \draw (0.3,1.75) node {$D$};
        \draw (1,1.75) node {$D$};
    \end{tikzpicture} 
    = [F^{wxy}_{G}]_{xG}
    \begin{tikzpicture}[baseline = {(current bounding box.center)}]
        \draw (0,0) -- (0,0.5);
        \draw (0,0.5) -- (-0.6,1.1);
        \draw (-0.6,1.1) -- (-1,1.5);
        \draw (-0.6,1.1) -- (-0.2,1.5);
        \draw (0,0.5) -- (0.6,1.1);
        \draw (0.6,1.1) -- (1,1.5);
        \draw (0.6,1.1) -- (0.2,1.5);
        \draw (0,0.2) -- (-0.8,0.2);
        \draw (-0.5,0.7) node {$x$};
        \draw (0.5,0.7) node {$y$};
        \draw (-1,0.2) node {$w$};
        \draw (0.2,0) node {$G$};
        \draw (0.2,0.4) node {$G$};
        \draw (-1,1.75) node {$D$};
        \draw (-0.3,1.75) node {$D$};
        \draw (0.3,1.75) node {$D$};
        \draw (1,1.75) node {$D$};
    \end{tikzpicture}\;.
\end{equation}
By fusing $w$ with $G$, we realize the following mapping of the fusion tree:
\begin{equation}
    \begin{tikzpicture}[baseline = {(current bounding box.center)}]
        \begin{scope}[shift={(0,0)}]
        \draw (0,0) -- (0,0.5);
        \draw (0,0.5) -- (-0.15,0.65);
        \draw (-0.25,0.75) -- (-0.6,1.1);
        \draw (-0.6,1.1) -- (-1,1.5);
        \draw (-0.6,1.1) -- (-0.2,1.5);
        \draw (0,0.5) -- (0.6,1.1);
        \draw (0.6,1.1) -- (1,1.5);
        \draw (0.6,1.1) -- (0.2,1.5);
        \draw (-0.48,0.89) .. controls (-0.82,0.67) and (-0.96,0.46) .. (-0.75,0.42) .. controls (-0.55,0.4) and (-0.33,0.53) .. (-0.16,0.72) .. controls (-0.03,0.87) and (0.05,1.1) .. (-0.35,0.95);
        \draw (-0.75,0.42) .. controls (-0.8,0.2) and (-0.6,0.1) .. (0,0.2);
        \draw (-0.35,1.1) node {$x$};
        \draw (0.5,0.7) node {$y$};
        \draw (0.2,0) node {$G$};
        \draw (-1,1.75) node {$D$};
        \draw (-0.3,1.75) node {$D$};
        \draw (0.3,1.75) node {$D$};
        \draw (1,1.75) node {$D$};
        \draw (-0.95,0.8) node {$H$};  \draw (-0.92,0.18) node {$w$};
        \draw (-0.2,0.4) node {$G$};
    \end{scope}

    \draw[->] (1.3,0.7) -- (1.7,0.7);

    \begin{scope}[shift={(5.4,0)}]
        \draw (-2.1,0.7) node {$I_{x;H,w} [F^{wxy}_G]_{xG}$};
        \draw (0,0) -- (0,0.5);
        \draw (0,0.5) -- (-0.6,1.1);
        \draw (-0.6,1.1) -- (-1,1.5);
        \draw (-0.6,1.1) -- (-0.2,1.5);
        \draw (0,0.5) -- (0.6,1.1);
        \draw (0.6,1.1) -- (1,1.5);
        \draw (0.6,1.1) -- (0.2,1.5);
        \draw (-0.5,0.7) node {$x$};
        \draw (0.5,0.7) node {$y$};
        \draw (0.2,0) node {$G$};
        \draw (-1,1.75) node {$D$};
        \draw (-0.3,1.75) node {$D$};
        \draw (0.3,1.75) node {$D$};
        \draw (1,1.75) node {$D$};
    \end{scope}
    \end{tikzpicture}\;,
\end{equation}
which yields the amplitude for $\tilde{\mathcal{M}}_U$, as shown in Figure~\ref{fig:FT_measurements}(c):
\begin{equation}
    I^U_{(x,y);z,w} = I_{x;H,w}[F^{wxy}_G]_{xG}\;.
\end{equation}
By substituting the categorical data from $\mathcal{D}(S_3)$, we obtain the values of $I^{U}_{(x,y);z,w}$ as listed in \secref{sec:fault_tolerant_meas}.

\subsection{Fusion and splitting of the fusion trees}
\label{appendix:multi_qutrit_gate}
In this section, we provide a detailed derivation of the fusion and splitting protocols described in Figure~\ref{fig:fusion_G_anyons}. Specifically, in the second subroutine of Figure~\ref{fig:fusion_G_anyons}(a), the outcome of the anyon interferometry protocol can be computed by applying an $F$-move to the upper middle panel, as follows:
\begin{equation}
    \begin{tikzpicture}[baseline = {(current bounding box.center)}]
        \filldraw (-0.6,0.6) circle (3pt); 
        \filldraw (0.6,0.6) circle (3pt);
        \draw (-0.6,0.6) -- (0,0.4);
        \draw (0.6,0.6) -- (0,0.4);
        \draw (0,0.4) -- (0,-0.4);
        \draw (0,-0.4) -- (-0.6,-0.6);
        \draw (0,-0.4) -- (0.6,-0.6);
        \draw (-0.9,0.6) node {$G$};
        \draw (0.9,0.6) node {$G$};
        \draw (-0.8,-0.6) node {$G$};
        \draw (0.8,-0.6) node {$G$};
        \draw (0.5,0) node {$A/B$};
    \end{tikzpicture} = \sum_{X} [(F^{GGG}_G)^{-1}]_{(A/B)X}
    \begin{tikzpicture}[baseline = {(current bounding box.center)}]
        \filldraw (-0.6,0.6) circle (3pt); 
        \filldraw (0.6,0.6) circle (3pt);
        \draw (-0.6,0.6) -- (-0.4,0);
        \draw (0.6,0.6) -- (0.4,0);
        \draw (0.4,0) -- (-0.4,0);
        \draw (-0.4,0) -- (-0.6,-0.6);
        \draw (0.4,0) -- (0.6,-0.6);
        \draw (-0.9,0.6) node {$G$};
        \draw (0.9,0.6) node {$G$};
        \draw (-0.8,-0.6) node {$G$};
        \draw (0.8,-0.6) node {$G$};
        \draw (0,0.2) node {$X$};
    \end{tikzpicture}
    = \frac{1}{2}\begin{tikzpicture}[baseline = {(current bounding box.center)}]
        \filldraw (-0.6,0.6) circle (3pt); 
        \filldraw (0.6,0.6) circle (3pt);
        \draw (-0.6,0.6) -- (-0.4,0);
        \draw (0.6,0.6) -- (0.4,0);
        \draw (0.4,0) -- (-0.4,0);
        \draw (-0.4,0) -- (-0.6,-0.6);
        \draw (0.4,0) -- (0.6,-0.6);
        \draw (-0.9,0.6) node {$G$};
        \draw (0.9,0.6) node {$G$};
        \draw (-0.8,-0.6) node {$G$};
        \draw (0.8,-0.6) node {$G$};
        \draw (0,0.2) node {$A$};
    \end{tikzpicture} + \frac{1}{2}\begin{tikzpicture}[baseline = {(current bounding box.center)}]
        \filldraw (-0.6,0.6) circle (3pt); 
        \filldraw (0.6,0.6) circle (3pt);
        \draw (-0.6,0.6) -- (-0.4,0);
        \draw (0.6,0.6) -- (0.4,0);
        \draw (0.4,0) -- (-0.4,0);
        \draw (-0.4,0) -- (-0.6,-0.6);
        \draw (0.4,0) -- (0.6,-0.6);
        \draw (-0.9,0.6) node {$G$};
        \draw (0.9,0.6) node {$G$};
        \draw (-0.8,-0.6) node {$G$};
        \draw (0.8,-0.6) node {$G$};
        \draw (0,0.2) node {$B$};
    \end{tikzpicture} 
    \pm \frac{1}{\sqrt{2}}
    \begin{tikzpicture}[baseline = {(current bounding box.center)}]
        \filldraw (-0.6,0.6) circle (3pt); 
        \filldraw (0.6,0.6) circle (3pt);
        \draw (-0.6,0.6) -- (-0.4,0);
        \draw (0.6,0.6) -- (0.4,0);
        \draw (0.4,0) -- (-0.4,0);
        \draw (-0.4,0) -- (-0.6,-0.6);
        \draw (0.4,0) -- (0.6,-0.6);
        \draw (-0.9,0.6) node {$G$};
        \draw (0.9,0.6) node {$G$};
        \draw (-0.8,-0.6) node {$G$};
        \draw (0.8,-0.6) node {$G$};
        \draw (0,0.2) node {$G$};
    \end{tikzpicture}\;,
\end{equation}
where $+$ or $-$ sign corresponds to the presence of the $A$ or $B$ anyon on the left-hand side of the equation. In this calculation, we have used the matrix representation of $F^{GGG}_G$ as follows:
\begin{equation}
    F^{GGG}_G = \begin{pmatrix}
        1/2 & 1/2 & 1/\sqrt{2}\\
        1/2 & 1/2 & -1/\sqrt{2}\\
        1/\sqrt{2} & -1/\sqrt{2} & 0
    \end{pmatrix}\;,
\end{equation}
where the columns and rows are labeled in the order $A, B, G$. According to the following non-zero amplitudes:
\begin{equation}
    I_{A;D,A} = 1\;,\quad I_{B;D,A} = -1\;,\quad I_{G;D,G} = \omega^2\;,
\end{equation}
when using a pair of $D$ anyons to perform the interferometry, the resulting fusion tree states, along with the corresponding fusion outcomes $w$, are: 
\begin{equation}
    w=A:\ \frac{1}{\sqrt{2}}\begin{tikzpicture}[baseline = {(current bounding box.center)}]
        \filldraw (-0.6,0.6) circle (3pt); 
        \filldraw (0.6,0.6) circle (3pt);
        \draw (-0.6,0.6) -- (-0.4,0);
        \draw (0.6,0.6) -- (0.4,0);
        \draw (0.4,0) -- (-0.4,0);
        \draw (-0.4,0) -- (-0.6,-0.6);
        \draw (0.4,0) -- (0.6,-0.6);
        \draw (-0.9,0.6) node {$G$};
        \draw (0.9,0.6) node {$G$};
        \draw (-0.8,-0.6) node {$G$};
        \draw (0.8,-0.6) node {$G$};
        \draw (0,0.2) node {$A$};
    \end{tikzpicture} - \frac{1}{\sqrt{2}}\begin{tikzpicture}[baseline = {(current bounding box.center)}]
        \filldraw (-0.6,0.6) circle (3pt); 
        \filldraw (0.6,0.6) circle (3pt);
        \draw (-0.6,0.6) -- (-0.4,0);
        \draw (0.6,0.6) -- (0.4,0);
        \draw (0.4,0) -- (-0.4,0);
        \draw (-0.4,0) -- (-0.6,-0.6);
        \draw (0.4,0) -- (0.6,-0.6);
        \draw (-0.9,0.6) node {$G$};
        \draw (0.9,0.6) node {$G$};
        \draw (-0.8,-0.6) node {$G$};
        \draw (0.8,-0.6) node {$G$};
        \draw (0,0.2) node {$B$};
    \end{tikzpicture} \;,\quad w=G:\ \begin{tikzpicture}[baseline = {(current bounding box.center)}]
        \filldraw (-0.6,0.6) circle (3pt); 
        \filldraw (0.6,0.6) circle (3pt);
        \draw (-0.6,0.6) -- (-0.4,0);
        \draw (0.6,0.6) -- (0.4,0);
        \draw (0.4,0) -- (-0.4,0);
        \draw (-0.4,0) -- (-0.6,-0.6);
        \draw (0.4,0) -- (0.6,-0.6);
        \draw (0,0) -- (0,-0.6);
        \draw (-0.9,0.6) node {$G$};
        \draw (0.9,0.6) node {$G$};
        \draw (-0.8,-0.6) node {$G$};
        \draw (0.8,-0.6) node {$G$};
        \draw (-0.2,0.2) node {$G$};
        \draw (0.2,0.2) node {$G$};
        \draw (0.2,-0.6) node {$G$};
    \end{tikzpicture}\;.
\end{equation}
These present the two possible outcomes, as illustrated in the second subroutine of Figure~\ref{fig:fusion_G_anyons}(a).

Next, we explain why fusing the $G$ anyons in these fusion trees will always result in a single $G$ anyon, as shown in the lower right panel of Figure~\ref{fig:fusion_G_anyons}(a). In the first case, where $w=A$, applying an $F$-move gives:
\begin{equation}
    \frac{1}{\sqrt{2}}\begin{tikzpicture}[baseline = {(current bounding box.center)}]
        \filldraw (-0.6,0.6) circle (3pt); 
        \filldraw (0.6,0.6) circle (3pt);
        \draw (-0.6,0.6) -- (-0.4,0);
        \draw (0.6,0.6) -- (0.4,0);
        \draw (0.4,0) -- (-0.4,0);
        \draw (-0.4,0) -- (-0.6,-0.6);
        \draw (0.4,0) -- (0.6,-0.6);
        \draw (-0.9,0.6) node {$G$};
        \draw (0.9,0.6) node {$G$};
        \draw (-0.8,-0.6) node {$G$};
        \draw (0.8,-0.6) node {$G$};
        \draw (0,0.2) node {$A$};
    \end{tikzpicture} - \frac{1}{\sqrt{2}}\begin{tikzpicture}[baseline = {(current bounding box.center)}]
        \filldraw (-0.6,0.6) circle (3pt); 
        \filldraw (0.6,0.6) circle (3pt);
        \draw (-0.6,0.6) -- (-0.4,0);
        \draw (0.6,0.6) -- (0.4,0);
        \draw (0.4,0) -- (-0.4,0);
        \draw (-0.4,0) -- (-0.6,-0.6);
        \draw (0.4,0) -- (0.6,-0.6);
        \draw (-0.9,0.6) node {$G$};
        \draw (0.9,0.6) node {$G$};
        \draw (-0.8,-0.6) node {$G$};
        \draw (0.8,-0.6) node {$G$};
        \draw (0,0.2) node {$B$};
    \end{tikzpicture} = \begin{tikzpicture}[baseline = {(current bounding box.center)}]
        \filldraw (-0.6,0.6) circle (3pt); 
        \filldraw (0.6,0.6) circle (3pt);
        \draw (-0.6,0.6) -- (0,0.4);
        \draw (0.6,0.6) -- (0,0.4);
        \draw (0,0.4) -- (0,-0.4);
        \draw (0,-0.4) -- (-0.6,-0.6);
        \draw (0,-0.4) -- (0.6,-0.6);
        \draw (-0.9,0.6) node {$G$};
        \draw (0.9,0.6) node {$G$};
        \draw (-0.8,-0.6) node {$G$};
        \draw (0.8,-0.6) node {$G$};
        \draw (0.2,0) node {$G$};
    \end{tikzpicture}\;,
\end{equation}
showing that fusing the two lower $G$ anyons results in a single $G$ anyon. In the second case, where $w=G$, we can first fuse the left two $G$ anyons by performing the following $F$-move:
\begin{equation}
    \begin{tikzpicture}[baseline = {(current bounding box.center)}]
        \filldraw (-0.6,0.6) circle (3pt); 
        \filldraw (0.6,0.6) circle (3pt);
        \draw (-0.6,0.6) -- (-0.4,0);
        \draw (0.6,0.6) -- (0.4,0);
        \draw (0.4,0) -- (-0.4,0);
        \draw (-0.4,0) -- (-0.6,-0.6);
        \draw (0.4,0) -- (0.6,-0.6);
        \draw (0,0) -- (0,-0.6);
        \draw (-0.9,0.6) node {$G$};
        \draw (0.9,0.6) node {$G$};
        \draw (-0.8,-0.6) node {$G$};
        \draw (0.8,-0.6) node {$G$};
        \draw (-0.2,0.2) node {$G$};
        \draw (0.2,0.2) node {$G$};
        \draw (0.2,-0.6) node {$G$};
    \end{tikzpicture} = \frac{1}{\sqrt{2}} \begin{tikzpicture}[baseline = {(current bounding box.center)}]
        \filldraw (-0.6,0.6) circle (3pt); 
        \filldraw (0.6,0.6) circle (3pt);
        \draw (-0.6,0.6) -- (-0.4,0);
        \draw (0.6,0.6) -- (0.4,0);
        \draw (0.4,0) -- (-0.4,0);
        \draw (-0.4,0) -- (-0.6,-0.6);
        \draw (0.4,0) -- (0.6,-0.6);
        \draw (-0.5,-0.3) -- (-0.3,-0.6);
        \draw (-0.9,0.6) node {$G$};
        \draw (0.9,0.6) node {$G$};
        \draw (-0.8,-0.6) node {$G$};
        \draw (0.8,-0.6) node {$G$};
        \draw (0,0.2) node {$G$};
        \draw (-0.1,-0.6) node {$G$};
        \draw (-0.7,-0.2) node {$A$};
    \end{tikzpicture}
    -\frac{1}{\sqrt{2}}
    \begin{tikzpicture}[baseline = {(current bounding box.center)}]
        \filldraw (-0.6,0.6) circle (3pt); 
        \filldraw (0.6,0.6) circle (3pt);
        \draw (-0.6,0.6) -- (-0.4,0);
        \draw (0.6,0.6) -- (0.4,0);
        \draw (0.4,0) -- (-0.4,0);
        \draw (-0.4,0) -- (-0.6,-0.6);
        \draw (0.4,0) -- (0.6,-0.6);
        \draw (-0.5,-0.3) -- (-0.3,-0.6);
        \draw (-0.9,0.6) node {$G$};
        \draw (0.9,0.6) node {$G$};
        \draw (-0.8,-0.6) node {$G$};
        \draw (0.8,-0.6) node {$G$};
        \draw (0,0.2) node {$G$};
        \draw (-0.1,-0.6) node {$G$};
        \draw (-0.7,-0.2) node {$B$};
    \end{tikzpicture}\;.
\end{equation}
Thus, after fusing the two $G$ anyons, the resulting fusion trees and their corresponding probabilities are:
\begin{equation}
    \mathrm{Prob} = \frac{1}{2}:\begin{tikzpicture}[baseline = {(current bounding box.center)}]
        \filldraw (-0.6,0.6) circle (3pt); 
        \filldraw (0.6,0.6) circle (3pt);
        \draw (-0.6,0.6) -- (-0.4,0);
        \draw (0.6,0.6) -- (0.4,0);
        \draw (0.4,0) -- (-0.4,0);
        \draw (-0.4,0) -- (-0.6,-0.6);
        \draw (0.4,0) -- (0.6,-0.6);
        \draw (-0.9,0.6) node {$G$};
        \draw (0.9,0.6) node {$G$};
        \draw (-0.8,-0.6) node {$A$};
        \draw (0.8,-0.6) node {$G$};
        \draw (0,0.2) node {$G$};
    \end{tikzpicture}\;,\quad \mathrm{Prob} = \frac{1}{2}:\begin{tikzpicture}[baseline = {(current bounding box.center)}]
        \filldraw (-0.6,0.6) circle (3pt); 
        \filldraw (0.6,0.6) circle (3pt);
        \draw (-0.6,0.6) -- (-0.4,0);
        \draw (0.6,0.6) -- (0.4,0);
        \draw (0.4,0) -- (-0.4,0);
        \draw (-0.4,0) -- (-0.6,-0.6);
        \draw (0.4,0) -- (0.6,-0.6);
        \draw (-0.9,0.6) node {$G$};
        \draw (0.9,0.6) node {$G$};
        \draw (-0.8,-0.6) node {$B$};
        \draw (0.8,-0.6) node {$G$};
        \draw (0,0.2) node {$G$};
    \end{tikzpicture}\;.
\end{equation}
In both cases, fusing the Abelian anyons $A$ or $B$ with the $G$ anyon results in a single $G$ anyon. Therefore, regardless of the fusion outcome $w$, we can always obtain the two-qutrit fusion tree shown in the lower right panel of Figure~\ref{fig:fusion_G_anyons}(a) by fusing the lower $G$ anyons.  

Finally, we detail the splitting protocol as described in Figure~\ref{fig:fusion_G_anyons}(b). The $F$-move in the final step is straightforward, so we will focus on demonstrating why the internal $B$ anyon does not interfere with universal quantum computation. 

For single-qutrit operations, we begin by considering braiding. Since the worldlines of the $D$ anyons do not have topologically non-trivial intersections with the $B$ worldline during braiding, the presence of the $B$ anyon does not affect any braiding operations. For the $\mathcal{M}_A$ measurement, non-trivial intersections of worldlines occur when a pair of $B$ anyons is used to correct the undesired fusion outcome of an $E$ anyon, as in \eqref{eqn:B_anyon_correction_MA}. The correction is given by:
\begin{equation}
\begin{aligned}
    &\begin{tikzpicture}[baseline = {(current bounding box.center)}]
        \draw (0,-0.4) -- (0,0.5);
        \draw (0,0.5) -- (-0.6,1.1);
        \draw (-0.6,1.1) -- (-1,1.5);
        \draw (-0.6,1.1) -- (-0.2,1.5);
        \draw (0,0.5) -- (0.6,1.1);
        \draw (0.6,1.1) -- (1,1.5);
        \draw (0.6,1.1) -- (0.2,1.5);
        \draw (-0.9,1.4) .. controls (-1.3,1.1) and (-1.1,0.4) .. (0,-0.2);
        \draw (0,0.2) -- (-0.352,0.104);
        \draw (-0.528,0.056) -- (-1.1,-0.1);
        \draw (-0.5,0.7) node {$G$};
        \draw (0.5,0.7) node {$y$};
        \draw (0.2,-0.4) node {$G$};
        \draw (0.2,0.4) node {$G$};
        \draw (0.2,0) node {$G$};
        \draw (-0.9,1.05) node {$E$};
        \draw (-1.2,0.5) node {$B$};
        \draw (-1.1,-0.3) node {$B$};
        \draw (-1,1.75) node {$D$};
        \draw (-0.3,1.75) node {$D$};
        \draw (0.3,1.75) node {$D$};
        \draw (1,1.75) node {$D$};
    \end{tikzpicture} = [F^{BDD}_G]_{EG} [F^{BGy}_{G}]_{GG}\ \begin{tikzpicture}[baseline = {(current bounding box.center)}]
        \draw (0,-0.4) -- (0,0.5);
        \draw (0,0.5) -- (-0.6,1.1);
        \draw (-0.6,1.1) -- (-1,1.5);
        \draw (-0.6,1.1) -- (-0.2,1.5);
        \draw (0,0.5) -- (0.6,1.1);
        \draw (0.6,1.1) -- (1,1.5);
        \draw (0.6,1.1) -- (0.2,1.5);
        \draw (0,-0.2) .. controls (-0.4,-0.15) and (-0.4,0.35) .. (0,0.4);
        \draw (0,0.1) -- (-0.22,0.056);
        \draw (-0.352,0.036) -- (-1.1,-0.1);
        \draw (-0.5,0.7) node {$G$};
        \draw (0.5,0.7) node {$y$};
        \draw (-0.5,0.3) node {$B$};
        \draw (-1.1,-0.3) node {$B$};
        \draw (0.2,0.25) node {$G$};
        \draw (0.2,-0.4) node {$G$};
        \draw (-1,1.75) node {$D$};
        \draw (-0.3,1.75) node {$D$};
        \draw (0.3,1.75) node {$D$};
        \draw (1,1.75) node {$D$};
    \end{tikzpicture} \\
    &= [F^{BDD}_G]_{EG} [F^{BGy}_{G}]_{GG} [(F^{BBG}_G)^{-1}]_{GA}
    \begin{tikzpicture}[baseline = {(current bounding box.center)}]
        \draw (0,-0.4) -- (0,0.5);
        \draw (0,0.5) -- (-0.6,1.1);
        \draw (-0.6,1.1) -- (-1,1.5);
        \draw (-0.6,1.1) -- (-0.2,1.5);
        \draw (0,0.5) -- (0.6,1.1);
        \draw (0.6,1.1) -- (1,1.5);
        \draw (0.6,1.1) -- (0.2,1.5);
        \draw (0,-0.2) .. controls (-1.3,-0.15) and (-0.7,0.9) .. (-0.22,0.056);
        \draw (-0.22,0.056) -- (-0.55,0);
        \draw (-0.77,-0.04) -- (-1.1,-0.1);
        \draw (-0.5,0.7) node {$G$};
        \draw (0.5,0.7) node {$y$};
        \draw (-1.0,0.3) node {$B$};
        \draw (-1.1,-0.3) node {$B$};
        \draw (0.2,-0.4) node {$G$};
        \draw (-1,1.75) node {$D$};
        \draw (-0.3,1.75) node {$D$};
        \draw (0.3,1.75) node {$D$};
        \draw (1,1.75) node {$D$};
    \end{tikzpicture} = [F^{BDD}_G]_{EG} [F^{BGy}_{G}]_{GG}\begin{tikzpicture}[baseline = {(current bounding box.center)}]
        \draw (0,-0.4) -- (0,0.5);
        \draw (0,0.5) -- (-0.6,1.1);
        \draw (-0.6,1.1) -- (-1,1.5);
        \draw (-0.6,1.1) -- (-0.2,1.5);
        \draw (0,0.5) -- (0.6,1.1);
        \draw (0.6,1.1) -- (1,1.5);
        \draw (0.6,1.1) -- (0.2,1.5);
        \draw (0,0.1) -- (-1.1,-0.1);
        \draw (-0.5,0.7) node {$G$};
        \draw (0.5,0.7) node {$y$};
        \draw (-1.1,-0.3) node {$B$};
        \draw (0.2,0.25) node {$G$};
        \draw (0.2,-0.4) node {$G$};
        \draw (-1,1.75) node {$D$};
        \draw (-0.3,1.75) node {$D$};
        \draw (0.3,1.75) node {$D$};
        \draw (1,1.75) node {$D$};
    \end{tikzpicture}\;,
\end{aligned}
\end{equation}
where we have used $[F^{BBG}_G]_{AG} = 1$ and $(R^{BB}_A)^2 = 1$. This result returns to the correction described in \eqref{eqn:B_anyon_correction_MA}. For more complex fusion trees involving multiple $B$ worldlines from external fusion trees, a similar analysis leads to the same conclusion. Therefore, the internal $B$ anyon worldlines do not affect the $\mathcal{M}_A$ measurement.

For the measurement $\tilde{\mathcal{M}}_U$, we consider the fusion between the $w$ anyon and the $G$ anyon, as originally described in \eqref{eqn:w_anyon_fusion_MU}, but now with an additional $B$ anyon worldline: 
\begin{equation}
\begin{aligned}
    &\begin{tikzpicture}[baseline = {(current bounding box.center)}]
        \draw (0,0) -- (0,0.5);
        \draw (0,0.5) -- (-0.6,1.1);
        \draw (-0.6,1.1) -- (-1,1.5);
        \draw (-0.6,1.1) -- (-0.2,1.5);
        \draw (0,0.5) -- (0.6,1.1);
        \draw (0.6,1.1) -- (1,1.5);
        \draw (0.6,1.1) -- (0.2,1.5);
        \draw (-0.3,0.8) .. controls (-0.5,0.6) and (-0.6,0.3) .. (-0.7,0);
        \draw (0,0.25) -- (-0.55,0.25);
        \draw (-0.65,0.25) -- (-1,0.25);
        \draw (-0.25,0.5) node {$x$};
        \draw (-0.6,0.85) node {$x$};
        \draw (0.5,0.7) node {$y$};
        \draw (-0.5,0) node {$w$};
        \draw (0.2,0) node {$G$};
        \draw (-1,1.75) node {$D$};
        \draw (-0.3,1.75) node {$D$};
        \draw (0.3,1.75) node {$D$};
        \draw (1,1.75) node {$D$};
        \draw (-1,0.45) node {$B$};
    \end{tikzpicture} = [F^{wxy}_{G}]_{xG} 
    \begin{tikzpicture}[baseline = {(current bounding box.center)}]
        \draw (0,0) -- (0,0.5);
        \draw (0,0.5) -- (-0.6,1.1);
        \draw (-0.6,1.1) -- (-1,1.5);
        \draw (-0.6,1.1) -- (-0.2,1.5);
        \draw (0,0.5) -- (0.6,1.1);
        \draw (0.6,1.1) -- (1,1.5);
        \draw (0.6,1.1) -- (0.2,1.5);
        \draw (0,0.4) -- (-1,0.4);
        \draw (0,0.2) -- (-1,0.2);
        \draw (-0.5,0.7) node {$x$};
        \draw (0.5,0.7) node {$y$};
        \draw (0.2,0) node {$G$};
        \draw (-1,1.75) node {$D$};
        \draw (-0.3,1.75) node {$D$};
        \draw (0.3,1.75) node {$D$};
        \draw (1,1.75) node {$D$};
        \draw (-1,0.6) node {$w$};
        \draw (-1,0) node {$B$};
    \end{tikzpicture}\\
    &= [F^{wxy}_{G}]_{xG}[(F^{BwG}_G)^{-1}]_{Gx^\prime} (R^{Bw}_{x^\prime})^* [(F^{wBG}_G)]_{x^\prime G} \begin{tikzpicture}[baseline = {(current bounding box.center)}]
        \draw (0,0) -- (0,0.5);
        \draw (0,0.5) -- (-0.6,1.1);
        \draw (-0.6,1.1) -- (-1,1.5);
        \draw (-0.6,1.1) -- (-0.2,1.5);
        \draw (0,0.5) -- (0.6,1.1);
        \draw (0.6,1.1) -- (1,1.5);
        \draw (0.6,1.1) -- (0.2,1.5);
        \draw (0,0.4) -- (-1,0.4);
        \draw (0,0.2) -- (-1,0.2);
        \draw (-0.5,0.7) node {$x$};
        \draw (0.5,0.7) node {$y$};
        \draw (0.2,0) node {$G$};
        \draw (-1,1.75) node {$D$};
        \draw (-0.3,1.75) node {$D$};
        \draw (0.3,1.75) node {$D$};
        \draw (1,1.75) node {$D$};
        \draw (-1,0.6) node {$B$};
        \draw (-1,0) node {$w$};
    \end{tikzpicture}
    =[F^{wxy}_{G}]_{xG} \begin{tikzpicture}[baseline = {(current bounding box.center)}]
        \draw (0,0) -- (0,0.5);
        \draw (0,0.5) -- (-0.6,1.1);
        \draw (-0.6,1.1) -- (-1,1.5);
        \draw (-0.6,1.1) -- (-0.2,1.5);
        \draw (0,0.5) -- (0.6,1.1);
        \draw (0.6,1.1) -- (1,1.5);
        \draw (0.6,1.1) -- (0.2,1.5);
        \draw (0,0.4) -- (-1,0.4);
        \draw (0,0.2) -- (-1,0.2);
        \draw (-0.5,0.7) node {$x$};
        \draw (0.5,0.7) node {$y$};
        \draw (0.2,0) node {$G$};
        \draw (-1,1.75) node {$D$};
        \draw (-0.3,1.75) node {$D$};
        \draw (0.3,1.75) node {$D$};
        \draw (1,1.75) node {$D$};
        \draw (-1,0.6) node {$B$};
        \draw (-1,0) node {$w$};
    \end{tikzpicture}
\end{aligned} 
\end{equation}
which returns to \eqref{eqn:w_anyon_fusion_MU}. In this equation, we have used
\begin{equation}
    \begin{cases}
        w=A,\ x^\prime = B: &\quad [(F^{BAG}_G)^{-1}]_{GB} =  (R^{BA}_{B})^* =  [(F^{ABG}_G)]_{BG} = 1\\
        w=B,\ x^\prime = A: &\quad [(F^{BBG}_G)^{-1}]_{GA} =  (R^{BB}_{A})^* =  [(F^{BBG}_G)]_{AG} = 1
    \end{cases}\;.
\end{equation}
By extending this analysis to more complex fusion trees involving multiple $B$ worldlines, we can conclude that the internal $B$ anyon worldlines do not affect the $\tilde{\mathcal{M}}_U$ measurement.

For the multi-qutrit operations, the $B$ worldline will only have topologically non-trivial effect when two fusion trees in $V^{GGGG}_D$ are swapped. However, one can verify that this swap always result in a trivial phase $R^{BG}_G R^{GB}_G = 1$. Therefore, the $B$ worldline does not affect the multi-qutrit operation either.

In summary, we have demonstrated that, all operations required to realize the universal gate set are unaffected by the internal $B$ worldline, confirming that the splitting protocol is successful.

\section{The circuit realization of \texorpdfstring{$\mathcal{D}(S_3)$}{D(S3)}} \label{appendix:more_circuit_realization}

\subsection{The circuit realization of \texorpdfstring{$F^{R,C}_{\rho_h}$}{FRCrhoh}}
\label{appendix:circuit_F_RCrhoh}

For later convenience, we restate the expressions for $\rho_h$ and $F^{R,C;u,v}_{\rho_h}$ as follows:
\begin{equation} \label{eqn:F_RC_rho_h_expression}
    F^{R,C;u,v}_{\rho_h} = \frac{|R|}{|Z(C)|} \sum_{n\in Z(C)} \Gamma^R_{jj^\prime}(n) (T^{\tau_c n \bar{\tau}_{c^\prime}}_{+})_1 (L^{c^\prime}_{+})_2\;,\quad     \rho_h = \ \begin{tikzpicture}[baseline={(current bounding box.center)}]
    \foreach \i in {0,1}
        {\foreach \j in {0,1}{
            \draw[thick] (\i,\j+1) -- (\i,\j);
            \draw[thick] (\i,\j+1) -- (\i+1,\j+1);
            \draw[thick] (\i,\j) -- (\i+1,\j);
            \draw[thick] (\i+1,\j+1) -- (\i+ 1,\j);
            }}
    \draw (0.5,0.5)--(1,1);
    \draw (0.5,0.5)-- (1.5,0.5);
    \foreach \i in {0,1}{           
        \draw (\i,1) -- (0.5+\i,0.5);
    }
    \node at (0.2,0.5) {$s$};
    \node at (1.6,0.8) {$s^\prime$};
    \node at (0.8,0.3) {$2$};
    \node at (0.47,1.2) {$1$};
    \end{tikzpicture}
\end{equation}

Next, we explore the circuit realization of $F^{R,C_2}_{\rho_h}$. For $C_2 = \{\sigma,\mu\sigma,\mu^2\sigma\}$, we have $Z(C_2) = \{e,\sigma\}\cong \mathbb{Z}_2$, with one-dimensional irreps $R=[+],[-]$, indicating $j=j^\prime = 1$. Therefore, we use $u,v = 0,1,2$ to represent $c,c^\prime=\{\sigma,\mu\sigma,\mu^2\sigma\} \in C_2$, respectively. For $n\in Z(C_2)$, one can obtain the $\tau_c n \bar{\tau}_{c^\prime}$ for $T$ operators as listed in Table~\ref{tab:C_2_transformation}.
\begin{table}[h]
\centering
\begin{tabular}{|c|cc|cc|cc|}
\hline
\diagbox[dir=SE]{$c \ (\tau_{c})$}{$c^\prime \ (\tau_{c^\prime})$} & \multicolumn{2}{c|}{$\sigma\ (e)$}                          & \multicolumn{2}{c|}{$\mu \sigma \ (\mu^2)$}                          & \multicolumn{2}{c|}{$\mu^2 \sigma\ (\mu)$}                          \\ \hline
$\sigma\ (e)$ & \multicolumn{1}{c|}{$e$} & $\sigma$                      & \multicolumn{1}{c|}{$\mu$} & $\mu^2\sigma$                      & \multicolumn{1}{c|}{$\mu^2$} & $\mu\sigma$                      \\ \hline
$\mu \sigma\ (\mu^2)$ & \multicolumn{1}{c|}{$\mu^2$} & $\mu^2 \sigma$                      & \multicolumn{1}{c|}{$e$} & $\mu\sigma$                      & \multicolumn{1}{c|}{$\mu$} & $\sigma$                      \\ \hline
$\mu^2 \sigma \ (\mu)$ & \multicolumn{1}{c|}{$\mu$} & $\mu\sigma$ & \multicolumn{1}{c|}{$\mu^2$} & $\sigma$ & \multicolumn{1}{c|}{$e$} & $\mu^2\sigma$ \\ \hline
\end{tabular}
\caption{The $\tau_{c} n \bar{\tau}_{c^\prime}$ for $C_2$, for $n=e$ and $n=\sigma$ in the left and right sub-cell, respectively.}
\label{tab:C_2_transformation}
\end{table}

\begin{table}[h]
\centering
{\renewcommand{\arraystretch}{1.3}
\begin{tabular}{|c|c|c|c|}
\hline
\diagbox[dir=SE]{$u$}{$v$} & $0$ & $1$& $2$ \\ \hline
$0$ & $\ket{\hat{0}0}\pm \ket{\hat{0}1}$ & $\ket{\hat{1}0}\pm\ket{\hat{2}1}$ & $\ket{\hat{2}0}\pm\ket{\hat{1}1}$  \\ \hline
$1$ & $\ket{\hat{2}0}\pm\ket{\hat{2}1}$ & $\ket{\hat{0}0}\pm\ket{\hat{1}1}$ & $\ket{\hat{1}0}\pm\ket{\hat{0}1}$ \\ \hline
$2$ & $\ket{\hat{1}0}\pm\ket{\hat{1}1}$ & $\ket{\hat{2}0}\pm\ket{\hat{0}1}$ & $\ket{\hat{0}0}\pm\ket{\hat{2}1}$ \\ \hline
\end{tabular}}
\caption{The basis $\ket{u,v}_{\pm}$ for the measurement on edge $1$ by $F^{R,C_2;u,v}_{\rho_h}$, where the $\pm$ corresponds to $R=[+]/[-]$, respectively.}
\label{tab:C_2_measurement}
\end{table}

Given $u,v$, the summation of $n\in Z(C_2)$ in \eqref{eqn:F_RC_rho_h_expression} for $T$ operators results in the projection of edge $1$ onto the states $\ket{u,v}_{\pm}$ as listed in Table~\ref{tab:C_2_measurement}. The $L^{c^\prime}_{+}$ operators act on edge $2$ according to \eqref{eqn:L_circuits}. Therefore, the circuit realization of $F^{R,C_2;u,v}_{\rho_h}$ is given by:
\begin{align}
    & F^{[+],C_2;u,v}_{\rho_h} = (\ket{u,v}\bra{u,v}_+)_1 \otimes (\hat{X}^v \hat{C}\otimes X)_2\;,\\
    & F^{[-],C_2;u,v}_{\rho_h} = (\ket{u,v}\bra{u,v}_-)_1 \otimes (\hat{X}^v \hat{C}\otimes X)_2\;.
\end{align}

One can prepare the state $\ket{u,v}_{\pm}$ by applying $(\hat{X}^Z)^v (\hat{C}\otimes I)$ on the $\ket{\hat{u}}\otimes \ket{\pm}$ state. Thus, the projection to $\ket{u,v}_{\pm}$ can be realized by applying a $(\hat{C}\otimes I)\hat{X}^{-vZ}$ on edge $1$ and measuring it in the Pauli $\hat{Z}$ basis of the qutrit and the Pauli $X$ basis of the qubit. If the $X$ measurement yields the opposite result (i.e. $\ket{\mp}$ when we expect $\ket{\pm}$), we have effectively applied a ribbon with the opposite charge (i.e. applying $D$ when we intend $E$ and vice versa). This can be corrected by applying a Pauli $Z$ on edge $1$, according to the fusion rule $D\times B = E$ and $E\times B = D$. Following the protocol for the $C$ anyon (see the paragraph below \eqref{eqn:F_2_C1_uv_circuit}), we make $v$ maximally mixed by employing a $\ket{\hat{0}_+}$ ancilla qutrit as the control of the $\hat{X}^v$ and $\hat{X}^{-vZ}$ gates and tracing it out. Next, we make $u$ maximally mixed by projecting $\ket{\hat{u}}$ into a Bell pair with an ancilla qutrit in the maximally mixed state $\hat{I}_u$ and tracing out the ancilla. In conclusion, the circuit realization for $F^{[\pm],C_2}_{\rho_h}$ is given by:
\begin{equation}
\begin{gathered}
    \includegraphics[width=10cm]{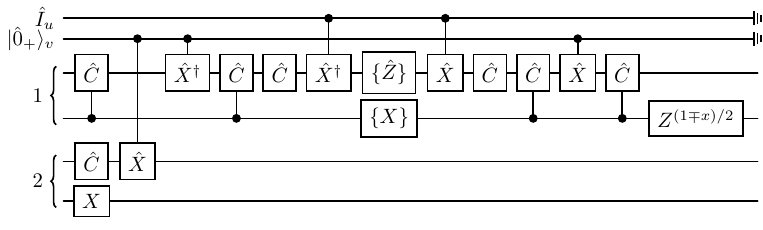}
\end{gathered}
\end{equation}
where $\{\hat{Z}\}$ and $\{X\}$ denote measurements in the Pauli $\hat{Z}$ and $X$ bases, respectively. The $Z^{(1\mp x)/2}$ represents the adaptive application of $Z$ based on the $\{X\}$ measurement outcome $x=1$ or $x=-1$, i.e. the state $\ket{+}$ or $\ket{-}$. For $([+],C_2)$/$([-],C_2)$, we apply $Z$ when $x=-1$/$x=1$, respectively.

Finally, we construct the circuit for $F^{R,C_3}_{\rho_h}$. For $C_3 = \{\mu,\mu^2\}$, we have $Z(C_3) = \{e,\mu,\mu^2\}$, with one-dimensional irreps $R=[1],[\omega],[\bar{\omega}]$. We use $u,v = 0,1$ to denote $c,c^\prime = \{\mu,\mu^2\} \in C_3$, respectively. Then one can obtain the $\tau_c n \bar{\tau}_{c^\prime}$ as listed in Table~\ref{tab:C_3_transformation}.

\begin{table}[h]
\centering
\begin{tabular}{|c|ccc|ccc|}
\hline
\diagbox[dir=SE]{$c \ (\tau_{c})$}{$c^\prime \ (\tau_{c^\prime})$} & \multicolumn{3}{c|}{$\mu\ (e)$}                          & \multicolumn{3}{c|}{$\mu^2 \ (\sigma)$}                           \\ \hline
$\mu\ (e)$ & \multicolumn{1}{c|}{$e$} & \multicolumn{1}{c|}{$\mu$} & $\mu^2$ & \multicolumn{1}{c|}{$\sigma$}                      & \multicolumn{1}{c|}{$\mu\sigma$} & $\mu^2\sigma$                      \\ \hline
$\mu^2\ (\sigma)$ & \multicolumn{1}{c|}{$\sigma$} & \multicolumn{1}{c|}{$\mu^2 \sigma$} & $\mu\sigma$ & \multicolumn{1}{c|}{$e$} & \multicolumn{1}{c|}{$\mu^2$} & $\mu$ \\ \hline
\end{tabular}
\caption{The $\tau_{c} n \bar{\tau}_{c^\prime}$ for $C_3$, for $n=e$, $n=\mu$ and $n=\mu^2$ in the left, middle and right sub-cell, respectively.}
\label{tab:C_3_transformation}
\end{table}

Given $u$ and $v$, the summation of $n\in Z(C_3)$ in \eqref{eqn:F_RC_rho_h_expression} for $T$ operators applies an $\hat{I}$, $\hat{Z}^{u+1}$ or $\hat{Z}^{-(u+1)}$ gate on edge $1$ for the $[1],[\omega]$ and $[\bar{\omega}]$ representation, respectively, followed by a projection of the qubit to the state $\ket{u+v}$. Therefore, combining the action of $L^{c^\prime}_+$ on edge 2, the circuit for $F^{R,C_3;u,v}_{\rho_h}$ is given by:
\begin{align}
    F^{[1],C_3;u,v}_{\rho_h} &= (\hat{I}\otimes \ket{u+v}\bra{u+v}) \otimes (\hat{X}^{v+1} \otimes I)_2\;, \\
    F^{[\omega],C_3;u,v}_{\rho_h} &= (\hat{Z}^{u+1}\otimes \ket{u+v}\bra{u+v}) \otimes (\hat{X}^{v+1} \otimes I)_2\;, \\
    F^{[\bar{\omega}],C_3;u,v}_{\rho_h} &= (\hat{Z}^{-(u+1)}\otimes \ket{u+v}\bra{u+v}) \otimes (\hat{X}^{v+1} \otimes I)_2\;.
\end{align}

Using similar methods, we construct the circuit realization of $F^{R,C_3}_{\rho_h}$ as follows:
\begin{equation}
\begin{gathered}
\includegraphics[width=8cm]{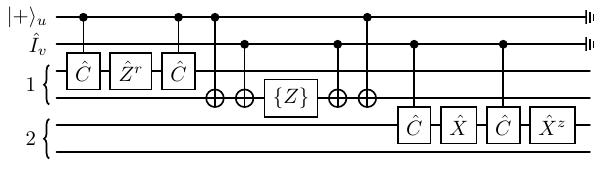}
\end{gathered}
    \;,
\end{equation}
where $r = 0,1,-1$ corresponds to the representations $[1],[\omega],[\bar{\omega}]$, respectively, and $z=0,1$ represents the measurement outcomes of $\{Z\}$ for $\ket{0},\ket{1}$, respectively. The adaptive application of $\hat{X}$ ensures the correct implementation of $F^{R,C_3}_{\rho_h}$ regardless of the measurement result of $\{Z\}$.

\subsection{The circuit realization of \texorpdfstring{$F^{R,C}_{\rho_v}$}{FRCrhov}}
\label{appendix:circuit_F_RCrhov}

For later convenience, we restate the expressions for $\rho_v$ and $F^{R,C;u,v}_{\rho_v}$ as follows:
\begin{equation}
    F^{R,C;u,v}_{\rho_v} = \frac{|R|}{|Z(C)|}\sum_{n\in Z(C)} \Gamma^{R}_{jj^\prime}(n) (L^{c}_{+})_1 (T^{\tau_c n\bar{\tau}_{c^\prime}}_{-})_2\;,\quad \rho_v = \ \begin{tikzpicture}[baseline={(current bounding box.center)}]
    \foreach \i in {0,1}
        {\foreach \j in {0,1}{
            \draw[thick] (\i,\j+1) -- (\i,\j);
            \draw[thick] (\i,\j+1) -- (\i+1,\j+1);
            \draw[thick] (\i,\j) -- (\i+1,\j);
            \draw[thick] (\i+1,\j+1) -- (\i+ 1,\j);
            }}
    \draw (0,1)--(0.5,1.5);
    \draw (0.5,0.5)-- (0.5,1.5);
    \foreach \i in {0,1}{           
        \draw (0,\i+1) -- (0.5,\i+0.5);
    }
    \node at (0.2,0.5) {$s$};
    \node at (0.55,1.8) {$s^\prime$};
    \node at (0.17,1.51) {$2$};
    \node at (0.7,0.75) {$1$};
    \end{tikzpicture}\;. 
\end{equation}

For the conjugacy class $C_1$, the ribbon operators for anyons $A=(\{+\},C_1)$ and $B=(\{-\},C_1)$ are given by:
\begin{equation}
    F^{[+],C_1}_{\rho_v} = (\hat{I}\otimes I)_1 \otimes (\hat{I}\otimes I)_2\;,\quad F^{[-],C_1}_{\rho_v} = (\hat{I}\otimes I)_1 \otimes (\hat{I}\otimes Z)_2\;.
\end{equation}

For the anyon $C=([2],C_1)$, we obtain the circuit realization of $F^{[2],C_1;u,v}_{\rho_v}$ as follows:
\begin{equation}
    F^{[2],C_1;u,v}_{\rho_v} = (\hat{I}\otimes I)_1\otimes (\hat{Z}^{-(v+1)} \otimes \ket{u+v}\bra{u+v})_2\;.
\end{equation}

To apply the $F^{[2],C_1}_{\rho_v}$ operator, we use the following circuit:
\begin{equation}
\begin{gathered}
\includegraphics{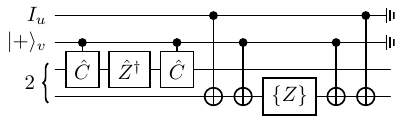}
\end{gathered}\;.
\end{equation}

\begin{table}[ht]
\centering
{\renewcommand{\arraystretch}{1.3}
\begin{tabular}{|c|c|c|c|}
\hline
\diagbox[dir=SE]{$u$}{$v$} & $0$ & $1$& $2$ \\ \hline
$0$ & $\ket{\hat{0}0}\pm\ket{\hat{0}1}$ & $\ket{\hat{2}0}\pm\ket{\hat{2}1}$ & $\ket{\hat{1}0}\pm\ket{\hat{1}1}$  \\ \hline
$1$ & $\ket{\hat{1}0}\pm\ket{\hat{2}1}$ & $\ket{\hat{0}0}\pm\ket{\hat{1}1}$ & $\ket{\hat{2}0}\pm\ket{\hat{0}1}$ \\ \hline
$2$ & $\ket{\hat{2}0}\pm\ket{\hat{1}1}$ & $\ket{\hat{1}0}\pm\ket{\hat{0}1}$ & $\ket{\hat{0}0}\pm\ket{\hat{2}1}$ \\ \hline
\end{tabular}}
\caption{The basis $\ket{u,v}_{\pm}$ for the measurement on edge $2$ by $F^{R,C_2;u,v}_{\rho_v}$, where the $\pm$ corresponds to $R=[+]/[-]$, respectively.}
\label{tab:C_2_measurement_v}
\end{table}

Next, we construct the circuit realization of $F^{R,C_{2}}_{\rho_v}$. By taking the inverse of $\tau_c n \bar{\tau}_{c^\prime}$ in Table~\ref{tab:C_2_transformation}, we can derive the basis $\ket{u,v}_{\pm}$ for the measurement on edge $2$, as listed in Table~\ref{tab:C_2_measurement_v}. The basis can be constructed by applying $(\hat{X}^Z)^u(\hat{C}\otimes I)$ on $\ket{\hat{v}}\otimes \ket{+}$. Equivalently, we apply $(\hat{C}\otimes I)\hat{X}^{-uZ}$ on edge $2$ and measure the qutrit and qubit in the Pauli $\hat{Z}$ and $X$ bases, respectively. Thus, we have the following circuit for $F^{R,C_2;u,v}_{\rho_v}$:
\begin{align}
    F_{\rho_v}^{[+],C_2;u,v} &= (\hat{X}^u \hat{C} \otimes X)_1 \otimes (\ket{u,v}\bra{u,v}_{+})_2 \;,\\ F_{\rho_v}^{[-],C_2;u,v} &= (\hat{X}^u \hat{C} \otimes X)_1 \otimes (\ket{u,v}\bra{u,v}_{-})_2\;.
\end{align}
Therefore, the circuit for $F^{[\pm],C_2}_{\rho_v}$ is given by:
\begin{equation}
\begin{gathered}
\includegraphics[width=11cm]{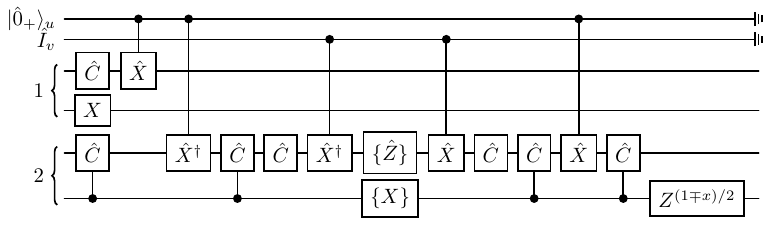}
\end{gathered}
    \;,
\end{equation}
where $Z^{(1\mp x)/2}$ is the adaptive application of $Z$ based on the $\{X\}$ measurement outcome $x=1$ or $x=-1$.

To apply $F^{R,C_3}_{\rho_v}$, one can derive the measurement on edge $2$ by taking the inverse of $\tau_c n \bar{\tau}_{c^\prime}$ in Table~\ref{tab:C_3_transformation}. Thus, for $u,v = 0,1$, the circuit for $F^{R,C_3;u,v}_{\rho_v}$ is given by:
\begin{align}
    F^{[1],C_3;u,v}_{\rho_v} &= (\hat{X}^{u+1} \otimes I)_1\otimes (\hat{I} \otimes \ket{u+v}\bra{u+v})_2\;,\\
    F^{[\omega],C_3;u,v}_{\rho_v} &= (\hat{X}^{u+1} \otimes I)_1\otimes (\hat{Z}^{-(v+1)} \otimes \ket{u+v}\bra{u+v})_2\;,\\
    F^{[\bar{\omega}],C_3;u,v}_{\rho_v} &= (\hat{X}^{u+1} \otimes I)_1\otimes (\hat{Z}^{v+1} \otimes \ket{u+v}\bra{u+v})_2\;.
\end{align}
We construct the circuit realization of $F^{R,C_3}_{\rho_v}$ as follows:
\begin{equation}
\begin{gathered}
\includegraphics[width=10cm]{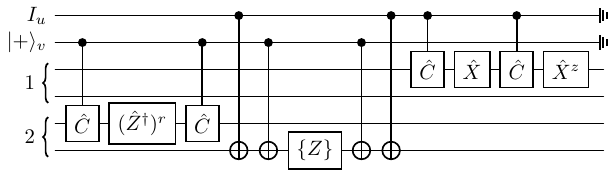}
\end{gathered}\;,
\end{equation}
where $r = 0,1,-1$ corresponds to the representations $[1],[\omega],[\bar{\omega}]$, respectively, and $z=0,1$ represents the measurement outcomes of $\{Z\}$.

\subsection{The circuit realization of \texorpdfstring{$K^{R,C}_{s}$}{KRCs}} \label{appendix:circuit_K_RC_s}

In this section, we explain how to construct the circuit for $K^{R,C}_{s}$ measurement. We will often omit identity operators for brevity.

Let us start with explaining the $B_p^h$ measurement shown in \figref{fig:KRC_circuit}(a) and (b).
The function $\delta_{h,g_1g_2\bar{g}_3\bar{g}_4}$ in $B^h_p$ projects the four edges of $p$ to the subspace satisfying $g_1g_2\bar{g}_3\bar{g}_4 = h$ (see \eqref{eqn:A_B_ops_DG}). By representing $g_i = \mu^{k_i}\sigma^{l_i}$ and using the relation $\sigma\mu\sigma = \bar{\mu}$, we can reexpress the function in the qutrit and qubit basis as:
\begin{equation} \label{eqn:delta_func_B}
    \delta_{k_h,k_1+(-1)^{l_1}k_2-(-1)^{l_h + l_4}k_3 - (-1)^{l_h}k_4}\delta_{l_h,l_1+l_2+l_3+l_4}\;.
\end{equation}
Since the Pauli $Z$ operator has eigenvalue $(-1)^l$ when acting on the $\ket{l}$ state, one can verify that the qubit delta function $\delta_{l_h,l_1+l_2+l_3+l_4}$ projects the state to the $(-1)^{l_h}$ eigenstate of the $Z_1\otimes Z_2\otimes Z_3 \otimes Z_4$ operator. 

A similar analysis applies to the qutrit delta function, where $(-1)^{l_1} k_2$ and $-(-1)^{l_4}k_3$ correspond to the eigenvalues $\omega^{(-1)^{l_1}k_2}$ and $\omega^{-(-1)^{l_4}k_3}$ of the operators $\hat{Z}_2^{Z_1}$ and $\hat{Z}_3^{-Z_4}$, respectively. Note that the $(-1)^{l_h}$ in the qutrit delta function indicates that the measurement circuit for $k_h$ depends on the qubit measurement outcome. Therefore, by defining the following operators:
\begin{equation} \label{eqn:Z_p}
\begin{aligned}
    \hat{Z}^+_p &\coloneqq \hat{Z}_1 \otimes \hat{Z}_2^{Z_1}\otimes \hat{Z}_3^{-Z_4} \otimes \hat{Z}_4^\dagger\;,\\
    \hat{Z}^-_p &\coloneqq \hat{Z}_1 \otimes \hat{Z}_2^{Z_1}\otimes \hat{Z}_3^{Z_4} \otimes \hat{Z}_4\;,\\
    Z_p &\coloneqq Z_1\otimes Z_2\otimes Z_3 \otimes Z_4\;,
\end{aligned}
\end{equation}
we can express $B^{h}_p$ in the qutrit and qubit basis as follows:
\begin{equation} \label{eqn:B_p_projectors}
    B^{\mu^{k_h}}_p = P^{k_h}_{\hat{Z}^+_p} P^{0}_{Z_p}\;,\quad B^{\mu^{k_h}\sigma}_p = P^{k_h}_{\hat{Z}^-_p} P^{1}_{Z_p}\;.
\end{equation}
More concretely, to perform the $B^h_p$ measurement, we first measure $Z_p$. Depending on the outcome, $l_h=0$ or $l_h = 1$, we measure $\hat{Z}_p^+$ or $\hat{Z}_p^-$ to obtain $k_h$, respectively. The outcome of the $B^h_p$ measurement is thus $h = \mu^{k_h} \sigma^{l_h}$. 

Now, we explain the circuits in Figure~\ref{fig:KRC_circuit}(a) and (b) for the $\hat{Z}^{\pm}_p$ and $Z_p$ measurements. Consider the $Z_p$ measurement circuit in Figure~\ref{fig:KRC_circuit}(b), where a qubit ancilla in the state $\ket{0}$ is introduced, with the stabilizer $Z_a$. Utilizing the algebraic relations between the controlled-NOT (CNOT) gate and qubit Pauli's:
\begin{equation} \label{eqn:algebra_CNOT}
    CX(I \otimes Z)CX = Z \otimes Z\;, \  CX(X \otimes I)CX = X \otimes X\;,
\end{equation}
the stabilizer $Z_a$ is mapped to $Z_a \otimes Z_p$ by the entangling gates in the circuit. Consequently, the Pauli-$Z$ measurement on the ancilla qubit is equivalent to the $Z_p$ measurement on the physical qubits at edges $1$ through $4$.

The measurement circuit for $\hat{Z}_p^{\pm}$ employs the qutrit-qutrit controlled-$\hat{X}$ ($C\hat{X}$) gate. This gate applies $\hat{I}, \hat{X}$, or $\hat{X}^\dagger$ to the target qutrit if the control qutrit is in the state $\ket{\hat{0}},\ket{\hat{1}}$, or $\ket{\hat{2}}$, and has the following algebra with the qutrit Pauli operators:
\begin{equation} \label{app:algebra_qutrit_CX}
\begin{aligned}
    &C\hat{X}(\hat{I}\otimes \hat{Z})C\hat{X}^\dagger = \hat{Z}^\dagger\otimes \hat{Z}\;,\\
    &C\hat{X}(\hat{X}\otimes \hat{I})C\hat{X}^\dagger = \hat{X}\otimes \hat{X}\;,
\end{aligned}
\end{equation}
where the first and second qutrits are the control and target, respectively. By combining \eqref{app:algebra_qutrit_CX} with the algebra of $C\hat{C}$ in \eqref{eqn:algebra_CC}, one can verify that the entangling gates in \figref{fig:KRC_circuit}(a) with a $+$ or $-$ sign map the stabilizer $\hat{Z}_a$ of the $\ket{\hat{0}}$ ancilla qutrit into $\hat{Z}_a\otimes \hat{Z}^+_p$ or $\hat{Z}_a\otimes \hat{Z}^-_p$, respectively. Therefore, by measuring the ancilla qutrit in the Pauli $\hat{Z}$ basis, we achieve the measurement of $\hat{Z}^{\pm}_p$. The above arguments complete the $B^h_p$ measurement circuits.

Now, we explain the measurement circuits for $A^\mu_v$, $A^\sigma_v$, $A^{\mu \sigma}_v$ and $A^{\mu^2 \sigma}_v$, as illustrated in \figref{fig:KRC_circuit}(c)(d) and (e).
Combining \eqref{eqn:A_B_ops_DG} and \eqref{eqn:L_circuits}, we have the following circuit realizations for $A^\mu_v$ and $A^\sigma_v$:
\begin{align}
    A^{\mu}_v &=\hat{X}_1\otimes \hat{X}_4 \otimes \hat{X}_5^{-Z_5} \otimes \hat{X}_6^{-Z_6}\;,\\
    A^{\sigma}_v &=(\hat{C}\otimes X)_1\otimes (\hat{C}\otimes X)_4 \otimes X_5 \otimes X_6\;.
\end{align}
For the measurement of $A^{\mu}_v$, we introduce a qutrit ancilla in the state $\ket{\hat{0}_+}$ with the stabilizer $\hat{X}_a$. The entangling gates in Figure~\ref{fig:KRC_circuit}(c) map the stabilizer $\hat{X}_a$ to $\hat{X}_a \otimes A^\mu_v$. By measuring the ancilla in the basis $\mathcal{M}_{A1}=\{\ket{\hat{0}_+},\ket{\hat{0}_+}^\perp\}$, we realize the $A^\mu_v$ measurement in $K^{R,C_1}_s$. Similarly, by measuring the ancilla in the basis $\mathcal{M}_{A2}=\{\ket{\hat{0}_+},\ket{\hat{1}_+},\ket{\hat{2}_+}\}$ (i.e. $\hat{X}$ basis), we realize the $A^\mu_v$ measurement in $K^{R,C_3}_s$. Since $A^{\mu^2}_v = (A^{\mu}_v)^\dagger$, the circuit can be constructed correspondingly. The measurement circuit for $A^\sigma_\mu$ is obtained by introducing a qubit ancilla, as illustrated in \figref{fig:KRC_circuit}(d).

We use an ancilla qubit in state $\ket{+}$ to construct the measurement circuits as shown in Figure~\ref{fig:KRC_circuit}(e), where the $+$ or $-$ sign corresponds to the measurement of $A^{\mu\sigma}_v$ or $A^{\mu^2\sigma}_v$, respectively. The previous measurement circuits for $A^{\mu}$ and $A^{\mu^2}$, which are based on an ancilla qutrit, cannot be used here. The following algbera relations are essential in constructing the entangling circuit:
\begin{align}
    C\hat{C}_{i\to j} (\hat{X}_j^{Z_j}) C\hat{C}_{i\to j} &= (C_i\hat{X}_j^{Z_j}) \hat{X}_j^{Z_j} \;,\\
    C\hat{C}_{i\to j} (\hat{X}_j) C\hat{C}_{i\to j} &= (C_i\hat{X}_j)\hat{X}_j\;,
\end{align}
where $C\hat{C}_{i\to j}$ is the $C\hat{C}$ gate with the $i$-th qubit and the $j$-th qutrit being the control and target, respectively. The $C_i\hat{X}_j^{Z_j}$ and $C_i\hat{X}_j$ gates apply $\hat{X}_j^{Z_j}$ and $\hat{X}_j$ to the qutrit and qubit on edge $j$, controlled by the $i$-th qubit. It can be readily verified that the entangling gates with $+$ or $-$ sign in Figure~\ref{fig:KRC_circuit}(e) map the stabilizer $X_a$ of the ancilla qubit to $X_a \otimes A^{\mu\sigma}$ or $X_a \otimes A^{\mu^2\sigma}$, respectively. Consequently, by measuring the ancilla in the $X$ basis, we realize the $A^{\mu\sigma}$ or $A^{\mu^2\sigma}$ measurement in $K^{R,C_2}_s$.

\section{Proof of the completeness of anyon errors}
\label{appendix:Proof_completeness_anyon_ops}
In this section, we provide proof for the completeness of anyon errors, as stated by the proposition:
\begin{proposition}
\label{lemma:completeness_anyon_errors}
    The $F^{R,C;u,v}_{\rho_h}$ and $F^{R,C;u,v}_{\rho_v}$ form a complete orthonormal basis for all operators acting on the physical qudits of $\mathcal{D}(S_3)$ on a two-dimensional square lattice.
\end{proposition}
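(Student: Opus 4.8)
The plan is to reduce the statement entirely to the representation theory of the Drinfeld double $\calD(S_3)$ together with the explicit two-edge form of the shortest ribbon operators. First I would write $F^{h,g}_{\rho_h}$ and $F^{h,g}_{\rho_v}$ in closed form: applying the gluing rule \eqnref{eqn:comult_rule} to the two triangles of a shortest ribbon and the triangle rules $F^{h,g}_{\tau'}=T^g_{\tau'}$, $F^{h,g}_\tau=\delta_{e,g}L^h_\tau$ from \appref{appendix:ribbon_algebra}, one sees that a shortest ribbon operator factorizes as an $L$-operator on one edge times a $T$-operator on the other. Substituting this into the defining relation \eqnref{eqn:F_RC} turns $F^{R,C;u,v}_{\rho_{h/v}}$ into the concrete $T\otimes L$ matrices on the two-edge (qutrit-qubit)$^{\otimes 2}$ support already tabulated in \secref{sec:quantum_double_circuit} and Appendices~\ref{appendix:circuit_F_RCrhoh} and \ref{appendix:circuit_F_RCrhov}. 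The crucial structural observation is that the change of coordinates \eqnref{eqn:F_RC} from the ``group basis'' $\{F^{c,\tau_c n\bar\tau_{c'}}_\rho\}$ to the ``anyon basis'' $\{F^{R,C;u,v}_\rho\}$ is precisely the Wedderburn (Peter--Weyl) isomorphism $\bbC[\calD(S_3)]\cong\bigoplus_{(R,C)}\mathrm{End}(V_{(R,C)})$; it is invertible by the great orthogonality theorem for the centralizers $\mathrm{Irr}(Z(C))$ together with $|C|\,|Z(C)|=|S_3|$, so the anyon-basis operators span the same space as the group-basis operators, and the identity $\sum_{(R,C)\in\mathrm{Irr}(\calD(S_3))}d_{R,C}^2=|S_3|^2$ gives the correct count of operators of each orientation.

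For orthonormality I would compute the Hilbert--Schmidt inner product $\mathrm{Tr}\big[(F^{R,C;u,v}_\rho)^\dagger F^{R',C';u',v'}_\rho\big]$ directly. Because $F^{h,g}_\rho$ factorizes across the two edges the trace factorizes, and one uses $\mathrm{Tr}\big[(T^g_+)^\dagger T^{g'}_+\big]=\delta_{g,g'}$ together with $\mathrm{Tr}\big[(L^h_+)^\dagger L^{h'}_+\big]=|G|\,\delta_{h,h'}$ (the latter because $L^k_+$ has nonzero trace only for $k=e$). Resumming the result against the representation matrices $\Gamma^R_{jj'}$ by the great orthogonality theorem for $\mathrm{Irr}(Z(C))$ collapses everything to a multiple of $\delta_{RR'}\delta_{CC'}\delta_{uu'}\delta_{vv'}$, with the proportionality constant fixed by the normalization conventions in \eqnref{eqn:F_RC} and \eqnref{eqn:maximally_mixed_F_RC}. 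Orthogonality in particular gives linear independence, so the $|S_3|^2$ operators $\{F^{R,C;u,v}_\rho\}$ form an orthonormal basis of their span.

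It remains to argue completeness, which is the main obstacle and where ``the same support'' must be made precise. The key auxiliary fact is that on a single edge the products $\{L^g_+T^h_+\}_{g,h\in G}$ are exactly the matrix units $|gh\rangle\langle h|$ and hence already a basis of $\calB(\bbC G)$; a shortest ribbon places the $T$-factor on a given edge while the shortest ribbons through that same edge in the complementary role place the $L$-factor there, so both $T$-type and $L$-type actions are available on every physical qudit. Combining this with the gluing rule \eqnref{eqn:comult_rule}, which expresses products of ribbon operators again as ribbon operators, one concludes that the algebra generated by all shortest ribbon operators is the full operator algebra of the lattice, and that on the support of any prescribed error the relevant shortest-ribbon operators together span the full operator space on that support. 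Correctability of this operator basis then implies correctability of arbitrary circuit-level errors with the same support by the standard quantum error correction argument~\cite{Nielsen_Chuang_2010}. I expect the representation-theoretic steps to be a routine finite computation in $\calD(S_3)$; the delicate part is this last geometric bookkeeping --- verifying that the shortest ribbons in both orientations and at the needed positions cover every edge with complementary $T$- and $L$-type actions, and that closing under \eqnref{eqn:comult_rule} does not enlarge the support --- which is where I would spend most of the proof.
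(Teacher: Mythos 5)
Your per-ribbon analysis matches the paper: the count $\sum_{(R,C)}|R|^2|C|^2=|G|^2$ of shortest ribbon operators on a fixed $\rho$, and the same-ribbon Hilbert--Schmidt orthonormality (which the paper simply imports from Bombin--Mart\'in-Delgado and which your direct $T$/$L$ trace computation would reproduce) are both fine. But the proposition asserts orthonormality of the \emph{whole} collection over all ribbons, and you never treat pairs $F^{R_1,C_1;u_1,v_1}_{\rho_1}$, $F^{R_2,C_2;u_2,v_2}_{\rho_2}$ with $\rho_1\neq\rho_2$. This is where most of the paper's proof lives: for disjoint ribbons it shows every shortest ribbon operator with $(R,C)\neq([+],\{e\})$ is traceless, and for the two configurations in which a horizontal and a vertical shortest ribbon share an edge it runs an explicit case analysis to show the trace inner product still vanishes. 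Since overlapping ribbons are exactly the situation where orthogonality is not automatic, omitting this leaves ``orthonormal'' established only within a single ribbon.

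The completeness step is where your route would actually fail. What is needed (both for the proposition and for its use in Sec.~V) is a linear-span statement, and your claim that ``on the support of any prescribed error the relevant shortest-ribbon operators together span the full operator space on that support'' is false. The shortest ribbon operators through a given edge $e$ span only (diagonal $T$-type on $e$)$\,\otimes\,$(left multiplications on one neighboring edge) plus (left multiplications on $e$)$\,\otimes\,$(projectors on another neighboring edge); a single matrix unit $\ket{gh}\bra{h}_e$, or the right multiplication by $\sigma$ that realizes a qubit $X$ error on $e$, is not in this linear span --- it arises only as a \emph{product} of a $T$-type and an $L$-type ribbon operator, and that product is supported on three edges, not on $e$. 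Consequently the ``geometric bookkeeping'' you defer to the end --- verifying that closing under the gluing rule \eqnref{eqn:comult_rule} does not enlarge the support --- cannot be carried out: products necessarily enlarge the support, and generating the full algebra is a weaker statement than exhibiting an orthonormal basis. The paper instead closes completeness by a dimension count (one operator per ribbon, $(|G|^2)^{|L|}=|G|^{2|L|}$, matching the dimension of the operator space) combined with the cross-ribbon orthogonality above; your observation that pure $T^g$ and pure $L^h$ on every edge lie in the span is correct and salvageable, but by itself it does not substitute for either ingredient.
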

\begin{proof}[Proof of Proposition~\ref{lemma:completeness_anyon_errors}]
To demonstrate that the shortest ribbon operators form an orthonormal basis for the linear operators acting on the $|G|^{|L|}$-dimensional Hilbert space, where $|G|=6$ is the dimension of the local Hilbert space $\mathcal{H} = \mathbb{C}G$, and $|L|$ is the number of edges in the two-dimensional directed square lattice, we need to prove the following points:
\begin{enumerate}
    \item \textbf{Cardinality:} There are $|G|^{2|L|}$ shortest ribbon operators.
    \item \textbf{Orthonormality:} These operators are orthonormal under the operator inner product defined by the trace.
\end{enumerate}

To prove the cardinality, we utilize the fact that each site $s=(v,p)$ corresponds one-to-one with its vertex $v$. Consequently, each $\rho_h$ or $\rho_v$ corresponds to a horizontal or vertical edge connecting neighboring sites, respectively. Therefore, the total number of $\rho_h$ and $\rho_v$ is $|L|$. We determine the total number of ribbon operators $F^{R,C;u,v}_{\rho}$ supported on an arbitrary ribbon $\rho$ by summing over all degrees of freedom:
\begin{equation}
        \sum_{R,C;u,v}1 =\sum_{R,C}|C|^2|R|^2 = \sum_{C}|C|^2|Z(C)| = |G|\sum_{C}|C|= |G|^2\;,
\end{equation}
where in the second equality, we use the fact that for each irrep $(R,C)$, the basis is $\{\ket{c}\otimes \ket{j}: c\in C, 1 \leq j \leq |R|\}$. In the third equality, we use the fact that the multiplicity of an irrep $R$ in the fundamental representation of the group $Z(C)$ is $|R|$, so $\sum_{R}|R|^2 = |Z(C)|$. Then in the fourth equality, we used $|C||Z(C)| = |G|$. Therefore, each $\rho_h$ or $\rho_v$ supports $|G|^2$ ribbon operators, resulting in a total of $|G|^{2|L|}$ shortest ribbon operators.

In the remainder of this section, we will focus exclusively on the shortest ribbons $\rho,\rho_1,\rho_2 \in \{\rho_h\}\cup \{\rho_v\}$. We will prove the orthonormality by demonstrating that the shortest ribbon operators obey the following orthogonality relation:
\begin{equation} \label{eqn:ortho_ribbon_ops_1}
    \mathrm{Tr}\left(F^{R_1,C_1;u_1,v_1\dagger}_{\rho_1} F^{R_2,C_2;u_2,v_2\dagger}_{\rho_2}\right) = \frac{|R_1|}{|Z(C_1)||G|} \delta_{\rho_1,\rho_2}\delta_{R_1,R_2}\delta_{C_1,C_2} \delta_{u_1,u_2}\delta_{v_1,v_2} \mathrm{Tr}(\bm{1})\;,
\end{equation}
where $\Tr(\bm{1}) = |G|^{2|L|}$ is the trace over the entire Hilbert space.

Indeed, by Equation (B68) of \cite{Bombin2008PRB_NonAbelian}, when $\rho_1=\rho_2=\rho$, we have:
\begin{equation}
    \mathrm{Tr}\left(F^{R_1,C_1;u_1,v_1\dagger}_{\rho} F^{R_2,C_2;u_2,v_2\dagger}_{\rho}\right) = \frac{|R_1|}{|Z(C_1)||G|}\delta_{R_1,R_2}\delta_{C_1,C_2} \delta_{u_1,u_2}\delta_{v_1,v_2} \mathrm{Tr}(\bm{1})\;.
\end{equation}
Thus it is sufficient to prove that for $\rho_1\neq \rho_2$, and when $(R_1,C_1)$ and $(R_2,C_2)$ are not both the trivial irrep (i.e. $([+],\{e\})$), the ribbon operators are orthogonal:
\begin{equation} \label{eqn:ortho_ribbon_ops_2}
    \mathrm{Tr}\left(F^{R_1,C_1;u_1,v_1\dagger}_{\rho_1} F^{R_2,C_2;u_2,v_2\dagger}_{\rho_2}\right) = 0\;.
\end{equation}
We first prove \eqref{eqn:ortho_ribbon_ops_2} for the case when $\rho_1$ and $\rho_2$ do not overlap. From the definition \eqref{eqn:LT_ops} of the $L$ and $T$ operators, we have
\begin{equation}
    \Tr(L^g_{\tau}) = \delta_{g,e}\Tr(\bm{1})\;,\quad |G|\Tr(T^h_{\tau^\prime}) = \Tr(\bm{1})\;,
\end{equation}
which leads us to claim that, when $(R,C)\neq ([+],\{e\})$:
\begin{equation} \label{eqn:trace_zero_ribbon_ops}
    \Tr(F^{R,C;u,v}_{\rho}) = 0\;.
\end{equation}
Indeed, for $C\neq \{e\}$, from the definitions \eqnref{eqn:shortest_F} of $F^{R,C;u,v}_{\rho}$, there is a nontrivial $L^{c}_{\tau}$ operator for $c \neq e$ in the expression of $F^{R,C;u,v}_{\rho}$, which has zero trace, indicating \eqref{eqn:trace_zero_ribbon_ops}. For $C=\{e\}$ and $R\neq [+]$, we have
\begin{equation}
    \Tr(F^{R,C;u,v}_{\rho}) = \frac{|R|}{|Z(C)|} \sum_{h\in G} \Gamma^{R}_{jj^\prime}(h)\Tr(T^{h}_{\tau^\prime}) = \frac{|R|}{|Z(C)||G|} \Tr(\bm{1}) \sum_{h \in G} \Gamma^{R}_{jj^\prime}(h) = 0\;,
\end{equation}
where in the last equality we use the orthogonality relations for matrix elements of irreducible representations of a group $G$:
\begin{equation}
    \sum_{h\in G}\Gamma^{\alpha}_{ij}(\bar{h})\Gamma^{\beta}_{kl}(h) = \frac{|G|}{|\alpha|} \delta_{i,l}\delta_{j,k}\delta_{\alpha,\beta}\;,
\end{equation}
for $\alpha = [+]$ and $\beta = R$. Therefore, we complete the proof of \eqref{eqn:trace_zero_ribbon_ops}, which immediately implies \eqref{eqn:ortho_ribbon_ops_2} when $\rho_1$ and $\rho_2$ do not overlap, because we can take the trace separately for the two ribbon operators, and then the nontrivial $(R_1,C_1)$ or $(R_2,C_2)$ gives the zero trace.

Next, we prove \eqref{eqn:ortho_ribbon_ops_2} for the case when $\rho_1$ and $\rho_2$ overlap. There are only two possible overlapping patterns between a horizontal and a vertical shortest ribbon as follows:
\begin{equation}
    \mathrm{\RNum{1}}:\ \begin{tikzpicture}[baseline={(current bounding box.center)}]
    \foreach \i in {0,1}
        {\foreach \j in {0,1}{
            \draw[thick] (\i,\j+1) -- (\i,\j);
            \draw[thick] (\i,\j+1) -- (\i+1,\j+1);
            \draw[thick] (\i,\j) -- (\i+1,\j);
            \draw[thick] (\i+1,\j+1) -- (\i+ 1,\j);
            }}
    \draw (0.5,0.5)--(1,1);
    \draw[ultra thick] (0.5,0.5)-- (1.5,0.5);
    \foreach \i in {0,1}{           
        \draw[ultra thick] (\i,1) -- (0.5+\i,0.5);
    }
    \draw[ultra thick] (0,1) -- (1,1);
    \draw (0,1)--(0.5,1.5);
    \draw[ultra thick] (0.5,0.5)-- (0.5,1.5);
    \foreach \i in {0,1}{           
        \draw[ultra thick] (0,\i+1) -- (0.5,\i+0.5);
    }
    \draw[ultra thick] (0,1)--(0,2);
    \node at (1.5,0.25) {$\rho_h$};
    \node at (0.75,1.75) {$\rho_v$};
    \node at (0.15,1.5) {$1$};
    \node at (0.75,1.25) {$2$};
    \node at (0.75,0.25) {$3$};
    \end{tikzpicture}\;,\qquad 
    \mathrm{\RNum{2}}:\ \begin{tikzpicture}[baseline={(current bounding box.center)}]
    \foreach \i in {0,1}
        {\foreach \j in {0,1}{
            \draw[thick] (\i,\j+1) -- (\i,\j);
            \draw[thick] (\i,\j+1) -- (\i+1,\j+1);
            \draw[thick] (\i,\j) -- (\i+1,\j);
            \draw[thick] (\i+1,\j+1) -- (\i+ 1,\j);
            }}
    \draw (0.5,1.5)--(1,2);
    \draw[ultra thick] (0.5,1.5)-- (1.5,1.5);
    \foreach \i in {0,1}{           
        \draw[ultra thick] (\i,2) -- (0.5+\i,1.5);
    }
    \draw[ultra thick] (0,2) -- (1,2);
    \draw (1,1)--(1.5,1.5);
    \draw[ultra thick] (1.5,0.5)-- (1.5,1.5);
    \foreach \i in {0,1}{           
        \draw[ultra thick] (1,\i+1) -- (1.5,\i+0.5);
    }
    \draw[ultra thick] (1,1)--(1,2);
    \node at (1.75,0.5) {$\rho_v$};
    \node at (0.5,1.25) {$\rho_h$};
    \end{tikzpicture}\;,
\end{equation}
where we label the edges of case \RNum{1} by $1$ through $3$ for later convenience. 

In case \RNum{1}, we select $\rho_1 = \rho_h$ and $\rho_2 = \rho_v$. Through a case-by-case analysis, we will show:
\begin{equation}
    \mathrm{Tr}\left(F^{R_1,C_1;u_1,v_1\dagger}_{\rho_h} F^{R_2,C_2;u_2,v_2\dagger}_{\rho_v}\right) = 0\;.
\end{equation}
Indeed, if $C_1\neq \{e\}$, the $L^{c}_{\tau_3}$ for $c\in C_1$ has a zero trace, yielding the desired result. For $C_1 = \{e\}$, if $C_2 \neq \{e\}$ for the vertical ribbon $\rho_v$, the operator $T^{h}_{\tau_2^\prime}L^{c}_{\tau_2}$ acting on edge $2$ for $c\in C_2,h\in G$ has a zero trace, as the $L$ operator is off-diagonal while the $T$ operator is diagonal. If $C_1=C_2=\{e\}$, but not both $R_1$ and $R_2$ are $[+]$, the two ribbon operators are non-overlapping, thus we can use \eqref{eqn:trace_zero_ribbon_ops} to obtain the zero trace. A similar analysis for case \RNum{2} also yields the desired result.

Collecting all the above arguments, we have proved the orthogonality condition \eqref{eqn:ortho_ribbon_ops_1} for the shortest ribbon operators. By appropriately normalizing these ribbon operators, we obtain a complete orthonormal basis for all operators on the Hilbert space of $\mathcal{D}(S_3)$, thus completing the proof of the proposition.
\end{proof}

\end{document}